\newcolumntype{P}[1]{>{\centering\arraybackslash}p{#1}}
\crefname{section}{\S}{\S\S}
\crefname{appendix}{Appendix}{Appendices}
\renewcommand{\paragraph}[1]{\vspace{.05in} \noindent \textbf{#1}}
\crefname{theorem}{Theorem}{Theorems}
\crefname{definition}{Definition}{Definitions}
\crefname{proposition}{Proposition}{Propositions}
\crefname{lemma}{Lemma}{Lemmas}
\crefname{claim}{Claim}{Claims}
\crefname{corollary}{Corollary}{Corollaries}
\crefname{property}{Property}{Properties}
\crefname{algorithm}{Algorithm}{Algorithms}
\author{Daniel Katzan}{Tel Aviv University}{}{}{}
\author{Adam Morrison}{Tel Aviv University}{}{https://orcid.org/0000-0002-5586-2615}{}
\title{Recoverable, Abortable, and Adaptive Mutual Exclusion with Sublogarithmic RMR Complexity}
\titlerunning{Recoverable, Abortable, and Adaptive ME with Sublogarithmic RMR Complexity}
\authorrunning{D. Katzan and A. Morrison}
\keywords{Mutual exclusion, recovery, non-volatile memory}
\newif\ifpaper
\newif\ifCR
\newcommand{\Xomit}[1]{}
\definecolor{darkBlue}{rgb}{0,0,0.75}
\definecolor{darkGreen}{rgb}{0,0.5,0}
\begin{document}
\maketitle

\ifpaper
\begin{abstract}
\else
\fi
We present the first recoverable mutual exclusion (RME) algorithm that is simultaneously abortable, adaptive to point contention, and
with sublogarithmic RMR complexity.  Our algorithm has $O(\min(K,\log_W N))$ RMR passage complexity and $O(F + \min(K,\log_W N))$ RMR super-passage complexity, where $K$ is the number of concurrent processes (point contention),
$W$ is the size (in bits) of registers, and $F$ is the number of crashes in a super-passage.
Under the standard assumption that $W=\Theta(\log N)$, these bounds translate to worst-case $O(\frac{\log N}{\log \log N})$ passage complexity and $O(F + \frac{\log N}{\log \log N})$ super-passage complexity.
Our key building blocks are:
\begin{itemize}[leftmargin=*]
\item A $D$-process abortable RME algorithm, for $D \leq W$, with $O(1)$ passage complexity and $O(1+F)$ super-passage complexity.  We obtain
this algorithm by using the Fetch-And-Add (FAA) primitive, unlike prior work on RME that uses Fetch-And-Store (FAS/SWAP).

\item A generic transformation that transforms any abortable RME algorithm with passage complexity of $B < W$,
into an abortable RME lock with passage complexity of $O(\min(K,B))$.
\end{itemize}   
\ifpaper
\end{abstract}
\else
\fi

\ifpaper
\section{Introduction}
\else
\fi

Mutual exclusion (ME)~\cite{dijkstra} is a central problem in distributed computing.
A mutual exclusion algorithm, or \emph{lock}, ensures that some \emph{critical section}
of code is accessed by at most one process at all times.  To enter the critical
section (CS), a process first executes an \emph{entry section} to \emph{acquire} the lock.
After leaving the critical section, the process executes an \emph{exit section} to
\emph{release} the lock.  The standard complexity measure for ME is
\emph{remote memory references} (RMR) complexity~\cite{Anderson:2002:ILB,RMR-bound}.
RMR complexity models the property that memory access cost on a shared-memory machine is
not uniform. Some accesses are \emph{local} and cheap, while the rest are \emph{remote}
and expensive (e.g., processor cache hits and misses, respectively).
The RMR complexity measure thus charges a process only for remote accesses.  There
are various RMR definitions, modeling cache-coherent (CC) and distributed shared-memory (DSM) systems.
The complexity of a ME algorithm is usually defined as its \emph{passage complexity}, i.e., the
number of RMRs incurred by a process as it goes through an entry and corresponding exit of
the critical section.

For decades, the vast majority of mutual exclusion algorithms were designed under the assumption that processes are \emph{reliable}:
they do not crash during the mutual exclusion algorithm or critical section. This assumption models
the fact that when a machine or program crashes, its memory state is wiped out. However, the recent
introduction of non-volatile main memory (NVRAM) technology can render this assumption invalid.
With NVRAM, memory state can remain persistent over a program or machine crash.  This change creates
the \emph{recoverable mutual exclusion} (RME) problem~\cite{first-recoverable}, of designing an
ME algorithm that can tolerate processes crashing and returning to execute the algorithm.
In RME, a \emph{passage} of a process $p$ is defined as the execution fragment from when $p$ enters
the lock algorithm and until either $p$ completes the exit section or crashes.
If $p$ crashes mid-passage and recovers, it re-enters the lock algorithm, which starts a new passage.
Such a sequence of $p$'s passages that ends with a crash-free passage (in which $p$ acquires
and releases the lock) is called a \emph{super-passage} of $p$.

RME constitutes an exciting clean slate for ME research. Over the years, locks with
many desired properties (e.g., fairness) were designed and associated complexity trade-offs were
explored~\cite{mutex-trends}.  These questions are now re-opened for RME, which has spurred a flurry
of research~\cite{sublog-first,sublog-second,journal,FCFS-RME,abortable-RME,first-recoverable,system-crash,optimal,adaptive-rme,amortized-rme}.
In this paper, we study such questions.  In a nutshell, we introduce an RME algorithm that is
\emph{abortable}, \emph{adaptive}, and has \emph{sublogarithmic} RMR complexity.
Our lock is the first RME algorithm adaptive to the number of concurrent processes (or \emph{point contention}) and the first \emph{abortable} RME algorithm with sublogarithmic RMR
complexity. It is also the first deterministic, worst-case sublogarithmic abortable lock in the DSM model (irrespective
of recoverability).  Our algorithm also features other desirable properties not present in prior work, as
detailed shortly.

\paragraph{Abortable ME \& RME}
An \emph{abortable} lock~\cite{AbortableMutex2,AbortableMutex1, amortized-abortable} allows a process waiting to acquire the lock
to give up and exit the lock algorithm in a finite number of its own steps.  Jayanti and Joshi~\cite{abortable-RME}
argue that abortability is even more important in the RME setting.  The reason is that a crashed process might
delay waiting processes for longer periods of time, which increases the motivation for allowing
processes to abort their lock acquisition attempt and proceed to perform other useful work.

Mutual exclusion, and therefore abortable ME (AME), incurs a worst-case RMR cost of $\Omega(\log N)$ in an $N$-process
system with standard read, write, and comparison primitives such as Compare-And-Swap (CAS) or LL/SC~\cite{RMR-bound}.
This logarithmic bound is achieved for both ME~\cite{Yang1995} and AME~\cite{adaptive-abortable}, and was recently achieved
for a recoverable, abortable lock by Jayanti and Joshi~\cite{abortable-RME}.  However, while there exists
an AME algorithm with sublogarithmic worst-case RMR complexity (in the CC model)~\cite{abortable-sublog},
no such abortable algorithm is known for RME.  Moreover, Jayanti and Joshi's $O(\log N)$ abortable RME algorithm is suboptimal in a few ways.
First, its worst-case RMR complexity is logarithmic only on a \emph{relaxed} CC model, in which a failed CAS on
a variable does not cause another process with a cached copy of the variable to incur an RMR on its next access to it,
which is not the case on real CC machines.
Their algorithm has linear RMR complexity in the realistic, standard CC model.
Second, their algorithm is starvation-free only if the number of aborts is finite.

\paragraph{Adaptive ME}
A lock is \emph{adaptive} with respect to \emph{point contention} if its RMR complexity depends on $K$, the number of processes concurrently trying to
access the lock, and not only on $N$, the number of processes in the system.  Adaptive locks are desirable because
they are often faster when $K \ll N$.  There exist locks with worst-case RMR cost of $O(\min(\log N, K))$ for
both ME~\cite{f-arrays} and AME~\cite{adaptive-abortable}, but no adaptive RME algorithm is known (independent of abortability).

\ifpaper
\subsection{Overview of Our Results}
\else
\section{Overview of Our Results}
\fi

In the following, we denote the number of crashes in a super-passage by $F$ and the size (in bits) of the system's
registers by $W$.  We obtain three keys results, which, when
combined, yield the first RME algorithm that is simultaneously abortable and adaptive, with worst-case $O(\log_W N)$
passage complexity and $O(F + \log_W N)$ super-passage complexity, in both CC and DSM models.  Assuming (as is standard)
that $W=\Theta(\log N)$, this translates to worst-case $O(\frac{\log N}{\log \log N})$ passage complexity and
$O(F + \frac{\log N}{\log \log N})$ super-passage complexity.
In contrast to Jayanti and Joshi's abortable RME algorithm~\cite{abortable-RME}, our lock
achieves sublogarithmic RMR complexity in the \emph{standard} CC model and is unconditionally
starvation-free.
Our algorithm's space complexity is a static (pre-allocated) $O(N W \log_W N)$ memory words
(which translates to $O(\frac{N \log^2 N}{\log \log N})$ if $W=\Theta(\log N)$). Jayanti and Joshi's
algorithm also uses static memory, but it relies on unbounded counters. The other sublogarithmic RME algorithms~\cite{sublog-first, sublog-second,adaptive-rme} use dynamic memory allocation, and may consume unbounded space.

\paragraph{Result \#1: $W$-process abortable RME with $O(1)$ passage and $O(1+F)$ super-passage complexity (\cref{chap:w_port})}
Our key building block is a $D$-process algorithm, for $D \leq W$.  It has constant RMR cost for a passage, regardless of if the
process arrives after a crash.  The novelty of our algorithm is that it uses the Fetch-And-Add (FAA) primitive to beat
the $\Omega(\log D)$ passage complexity lower-bound. In contrast, the building blocks in prior RME work with worst-case
sublogarithmic RMR complexity use the Fetch-And-Store
(FAS, or SWAP) primitive and assume no bound on $D$, even though they are ultimately used by only a bounded
number of processes in the final algorithm.  By departing from FAS and exploiting the process usage bound, we overcome
difficulties that made the prior algorithms' building blocks~\cite{sublog-first,sublog-second} have only $O(D)$ RMR passage
complexity.

These prior algorithms use a FAS-based queue-based lock as a building block.  They start with an $O(1)$ RMR queue-based ME
algorithm~\cite{MCS,CLH}, in which a process trying to acquire the lock uses FAS to append a node to the queue tail, and then
spins on that node waiting for its turn to enter the critical section.  Unfortunately, if the process crashes after the FAS,
before writing its result to memory, then when it recovers and returns to the algorithm, it does not know whether it
has added itself to the queue and/or who is its predecessor (previously obtained from the FAS response).  To overcome this
problem, a recovering process reconstructs the queue state into some valid state, \emph{which incurs a linear number of RMRs}.
The recovery procedure is blocking (not wait-free), and multiple processes cannot recover concurrently.  Overall, these prior
building blocks have $O(D)$ passage complexity and $O(1+FD)$ super-passage complexity.  In contrast, our $D$-process abortable
RME algorithm has $O(1)$ passage complexity and $O(1+F)$ super-passage complexity, has wait-free recovery, and allows multiple
processes to recover concurrently.  While other $O(1)$ RME algorithms exist, they either assume a weaker crash
model~\cite{system-crash}, rely on non-standard primitives that are not available on real machines~\cite{optimal,sublog-first},
or obtain only  amortized, not worst-case, $O(1)$ RMR complexity~\cite{amortized-rme}.

\paragraph{Result \#2: Tournament tree with wait-free exit (\cref{chap:tree})}
In both ours and prior work~\cite{sublog-first,sublog-second}, the main lock is obtained by constructing a \emph{tournament tree}
from the $D$-process locks.  The tree has $N$ leaves, one for each process. Each internal node is a $D$-process lock, so
the tree has height $O(\log_D N)$.
To acquire the main lock, a process competes to acquire each lock on the path from its leaf to the root, until it wins
at the root and enters the critical section.  Our algorithm differs from prior tournament trees in a couple of simple ways, but which have important
impact.

First: In our tree, a process that recovers from a crash returns directly to the node in which it crashed.  This allows us to
leverage our node lock's $O(1+F)$ super-passage complexity to obtain $O(H + F)$ super-passage complexity for the tree, where $H$
is the tree's height.  By taking $D=W=\Theta(\log N)$, our overall lock has $O(F + \frac{\log N}{\log \log N})$ super-passage
complexity and $O(\frac{\log N}{\log \log N})$ passage complexity. In contrast, prior trees perform recovery by having a process
restart its ascent from the leaf. In fact, in these algorithms, there is no asymptotic benefit from returning directly to the
node where the crash occurred. The reason is that node lock recovery in these trees has $O(D)$ complexity, so to obtain overall
sublogarithmic complexity, they take $D=\frac{\log N}{\log \log N}$, which means that node crash recovery costs the
same as climbing to the node.  Consequently, their overall super-passage complexity is multiplicative in $F$,
$O((1+F)\frac{\log N}{\log \log N})$, instead of additive as in our tree.

Second: Our tree's exit section is wait-free (assuming finitely many crashes). In contrast, in the prior trees, a process that
crashes during its exit section might subsequently block.  The reason is a subtle issue related to composition of RME locks.
The model in these works~\cite{sublog-first,sublog-second} is that a process $p$ that crashes in its exit section must complete a crash-free
passage upon recovery (i.e., re-enter the critical section and exit it again). Thus, $p$ must re-ascend to the root after recovering.
Each node lock satisfies a \emph{bounded CS re-entry} property, which allows $p$ to re-enter the node's CS (i.e., ascend)
without blocking---provided that $p$ crashed inside the node's CS.  However, this property does not apply if $p$ released
the node lock (i.e., descended) before crashing. For such a node, $p$ simply attempts to re-acquire the node lock.
Consequently, $p$ might block during its recovery, even though logically \emph{it is only trying to release the overall lock}.
We address this problem by carefully modeling the interface of an RME algorithm in a way that facilitates composition,
which enables a recovering process to avoid re-acquiring node locks it had already released.  Our overall algorithm thereby
satisfies a new \emph{super-passage wait-free exit} property.

\paragraph{Result \#3: Generic RME adaptivity transformation (\cref{chap:transform})}
We present a generic transformation that transforms any abortable RME algorithm with passage complexity of $B < W$ into an abortable RME lock with passage complexity of $O(\min(K,B))$, where $K$ is the number of processes executing the algorithm concurrently with the process going
through the super-passage, i.e., the \emph{point contention}.
Applying this transformation to our tournament tree lock yields the final algorithm.

\ifpaper

\else
\section{Related Work}
The RME problem was first formulated and introduced by Golab and Ramaraju in~\cite{first-recoverable} \cite{journal}, where they presented the first RME algorithm using CAS and FAS primitives with worst-case RMR complexity of $O(N)$. Their algorithm is based on a building block of an RME algorithm designed for 2 processes accessing it simultaneously, and then compose those building blocks to a tournament tree, where they eventually bound the worst-case RMR of their algorithm by using a fast-path and slow-path mechanism. In~\cite{FCFS-RME} Jayanti and Joshi presented the first FCFS RME algorithm with logarithmic RMR complexity, this is based on their min-array object which they have made recoverable, and later also added it the abortable property in~\cite{abortable-RME}. The logarithmic bound was first beaten by~\cite{sublog-first} for the CC model, and later in~\cite{sublog-second} for both the CC and DSM model as well as fixing subtle bugs encountered in~\cite{sublog-first}, both papers do so by using a building block of an abortable RME algorithm designed for $D$ processes, and then composing them into a tournament tree, taking $D=O(\frac{\log N}{\log \log N})$ yields a sub-logarithmic RMR complexity. Optimal (constant worst-case complexity) RME algorithms were introduced in~\cite{optimal} and~\cite{system-crash}, but the former assumed a strong FASAS primitive which does not exist in common processors, and the latter a weaker crash-recovery model where they consider a system-crash where all processes crash simultaneously instead of each process crashing independently. An adaptive RME algorithm with respect to the number of crashes in the system, i.e. an RME algorithm whose RMR complexity depends on $F^*$ the number of overall crashes in the system and not only on $N$ the number of processes in the system, was recently introduced in~\cite{adaptive-rme} which has a better RMR complexity in the case where the number of crashes is small. A constant amortized version of RME algorithm was presented in~\cite{amortized-rme} by using a FAA instead of FAS. To achieve that, they based their RME algorithm on a ticketing mechanism, where each process acquires a ticket, and then announces its ticket number on its entry section. A process releasing the lock hands off the lock to the next process in line that has already announced its ticket number. It might be the case that a ticket number was acquired by a process, but it was too slow to announce it, this process will be skipped and the lock will be handed off to a process holding a higher ticket number, causing the skipped process to retry its entry section and acquire a new ticket number. The fact that this can only happen to a process once for every concurrent passage is used to achieve the amortized cost.
\fi

\begin{table}
\begin{adjustwidth}{-0.6cm}{}
\resizebox{1.1\textwidth}{!}{%
\begin{tabular}{ |P{3.1cm}|P{3.9cm}|P{3.9cm}|P{1.6cm}|P{1.8cm}|P{3.8cm}| }

\hline

\multirow{2}{*}{\textbf{Algorithm}}& \textbf{Passage} & \textbf{Super-Passage} & \textbf{Primitives} & \textbf{Space} & \textbf{Additional} \\
& \textbf{Complexity} & \textbf{Complexity} & \textbf{Used} & \textbf{Complexity} & \textbf{Properties} \\

 \hline

 \multirow{3}{*}{\shortstack{Golab \&  Ramaraju \\ \cite[Section 4.2]{journal} \\with MCS~\cite{MCS} as \\base lock}} & \multicolumn{1}{l|}{$O(1)$ (no concurrent crashes)} & $O(1)$ (no concurrent crashes) & \multirow{3}{*}{CAS, FAS} & \multirow{3}{*}{$O(N  \log N)$} & \multirow{3}{*}{ } \\ \cdashline{2-3}
& $O(\log N)$ (concurrent crashes) & $O(\log N)$ (concurrent crashes) &&& \\ \cdashline{2-3}
 & $O(N)$ (if crashes) & $O( F  N)$ (if crashes) &&& \\

 \hline

 Jayanti \& Joshi~\cite{FCFS-RME} & $O(\log N)$ & $O(\log N + F)$ & CAS & $O(N  \log N)$ &  FCFS, SP WF Exit\\

 \hline

Jayanti, Jayanti, \& Joshi~\cite{sublog-second} & $O(\frac{\log N}{\log \log N})$ & $O((1 + F)  (\frac{(\log N}{\log \log N}))$ & FAS & Unbounded &  \\

 \hline

Jayanti, Jayanti, \& Joshi~\cite{optimal} & $O(1)$ & $O(1)$ in the DSM model \newline $O(F)$ in the CC model & FASAS & $O(N)$ &  SP WF Exit\\

 \hline

Chan \& Woelfel~\cite{amortized-rme} &  \Xomit{$O(1 + F^*)$ worst-case \newline} $O(1)$ \emph{amortized} & same as passage complexity & CAS, FAA & Unbounded &  SP WF Exit\\

  \hline

Dhoked \& Mittal~\cite{adaptive-rme} & $O(min(\sqrt {F^*} , \frac{\log N}{\log \log N})$ & same as passage complexity & CAS, FAS & Unbounded & \textbf{Crash-adaptive} \\

\hline

Jayanti \& Joshi~\cite{abortable-RME} & $O(\log N)$ & $O(\log N + F)$ & CAS & $O(N  \log N)$ &  \textbf{Abortable}, SP WF Exit, FCFS\\

 \hline\hline

\textbf{ This work} & $\pmb{O(\min(K, \frac{\log N}{\log \log N}))}$ & $\pmb{O(\min(K, \frac{\log N}{\log \log N}) + F)}$ & \textbf{FAA, CAS} & $\pmb{O(\frac{N \log^2 N }{\log \log N})}$ &  \textbf{Abortable, adaptive, SP WF Exit} \\

 \hline

\end{tabular}%
}
\ifpaper
\begin{minipage}{1.1\textwidth}
\fi
\caption{\label{compare} Comparison of RME algorithms. (SP: super-passage, WF: wait-free, $\mathbf{F^*}$: total number of crashes in the system, FASAS: Fetch-And-Swap-And-Swap.)
All algorithms satisfy starvation-freedom, wait-free critical-section re-entry, and wait-free exit (defined in~\cref{chap:Model}).
}
\ifpaper
\end{minipage}
\fi
\end{adjustwidth}
\vspace{-20pt}
\end{table}

\paragraph{Summary of contributions and related work}
Table~\ref{compare} compares our final algorithm to prior RME work.
Dhoked and Mittal~\cite{adaptive-rme} use a definition of ``adaptivity'' that requires RMR cost to depend on the total number of crashes;
we refer to this property as \emph{crash-adaptivity}.
Crash-adaptivity is thus orthogonal to the traditional notion of adaptivity~\cite{AdaptiveDef}.
Chan \& Woelfel's algorithm~\cite{amortized-rme} uses FAA, but it is used to assign processes with tickets, which is different from
our technique (\Cref{chap:w_port}). Their algorithm has only an amortized RMR passage complexity bound and its worst-case RMR cost is unbounded.

\ifpaper
\else
All algorithms in the table satisfy the desirable RME properties of starvation-freedom, critical-section re-entry, wait-free critical-section re-entry, and wait-free exit conditions (see~\cref{chap:Model}).
\fi

\ifpaper
\section{Model and Preliminaries} \label{chap:Model}
\else
\fi

\paragraph{Model}
We consider a system in which $N$ deterministic, asynchronous, and unreliable processes communicate over a shared
memory. The shared memory, $M$, is an array of $\Theta(W)$-bit words. (Henceforth, we refer to the shared memory simply as ``memory'';
process-private variables are not part of the shared memory.)
The system supports the standard read, write, CAS, and FAA operations.
$CAS(a, o, n)$ atomically changes $M[a]$ from $o$ to $n$ if $M[a]=o$ and returns true; otherwise, it returns false without changing
$M[a]$.  $FAA(a,x)$ atomically adds $x$ to $M[a]$ and returns $M[a]$'s original content.

A \emph{configuration} consists of the state of the memory and of all processes, where the state of process $p$ consists of its
internal program counter and (non-shared) variables.  Given a configuration $\sigma$, an \emph{execution fragment} is a (possibly infinite)
sequence of \emph{steps}, each of which moves the system from one configuration to another, starting from $\sigma$.
In a \emph{normal step}, some process $p$ invokes an operation on a memory word and receives the operation's response.
In a \emph{crash} step, the state of some process $p$ resets to its initial state (but the memory state remains unchanged).
An \emph{execution} is an execution fragment starting from the system's initial configuration.

\paragraph{Notation}
Given an execution fragment $\alpha$, if $\beta$ is
a subsequence of $\alpha$, we write $\beta \subseteq \alpha$. If $e$ is a step taken in $\alpha$, we write $e \in \alpha$.
If $e$ is the $t$-th step in an execution $E$, we say that \emph{$e$ is at time $t$}.  We use $[t,t']$ to denote the subsequence
of $E$ whose first and last steps are at times $t$ and $t'$ in $E$, respectively.

\paragraph{RMR complexity}
The RMR complexity measure breaks the memory accesses by a process $p$ into \emph{local} and \emph{remote} references,
and charges $p$ only for remote references.  We consider two types of RMR models.  In the DSM model, each memory word is
local to one process and remote to all others, and process $p$ performs an RMR if it accesses a memory word remote to it.
In the CC model, the processes are thought of as having coherent caches, with RMRs occurring when a process accesses an
uncached memory word. Formally: (1) every write, CAS, or FAA is an RMR, and (2) a read by $p$ of word $x$ is an RMR if it is the first
time $p$ accesses $x$ or if after $p$'s prior access to $x$, another process performed a write, CAS, or FAA on $x$.

\paragraph{Recoverable mutual exclusion (RME)}
Our RME model draws from the models of Golab and Ramaraju~\cite{journal} and Jayanti and Joshi~\cite{FCFS-RME}.
In the spirit of~\cite{journal}, we model the RME algorithm as an object exporting methods invoked
by a client process. 
In the spirit of~\cite{FCFS-RME},
we require recovery to re-execute the section in which the crash occurred, rather than restart the entire passage.
An \emph{RME algorithm} (or \emph{lock}) provides the methods
\emph{Recover}, \emph{Try}, and \emph{Exit}.
(In the code, we show the methods taking an argument specifying the calling process' id.)
If process $p$ invokes \emph{Try} and it returns TRUE, then $p$ has \emph{acquired}
the lock and \emph{enters} the critical section (CS). Subsequently, $p$ \emph{exits} the CS by invoking \emph{Exit}.  If \emph{Exit}
completes, we say that $p$ has \emph{released} the lock.
The \emph{Recover} method guides $p$'s execution after a crash, which resets $p$ to its initial state.
We assume $p$'s initial state is to invoke \emph{Recover}, which returns $r \in \{TRY,CS,EXIT\}$.
If $r=TRY$, $p$ invokes \emph{Try}.  If $r=CS$, $p$ enters the CS. If $r=EXIT$, $p$ invokes \emph{Exit}.

A \emph{super-passage} of $p$ begins with $p$ completing \emph{Recover} and invoking \emph{Try}, either for the first time,
or for the first time after $p$'s prior super-passage ended.
The super-passage ends when $p$ completes \emph{Exit}. A \emph{passage} of $p$ begins with $p$ starting
a super-passage, or when $p$ invokes \emph{Recover} following a crash step. The passage ends at the earliest of $p$
completing \emph{Exit} or crashing.  We refer to an $L$-passage (or $L$-super-passage) to denote the lock
$L$ that a passage (or super-passage) applies to; similarly, we refer to a step taken in lock $L$'s code as an $L$-step.
We omit $L$ when the context is clear.
These definitions facilitate composition of RME locks. For instance, suppose that process $p$ is releasing locks
in a tournament tree and crashes after releasing some node lock $L$.  When $p$ recovers, it can invoke \emph{L.Recover},
which will return $TRY$, and thereby learn that it has released $L$ and can descend from it---without the \emph{Recover}
invocation counting as starting a new $L$-super-passage.

\emph{Well-formed} executions formalize the above described process behavior:

\begin{definition} \label{defn:well-formed}
An execution is \emph{well-formed} if the following hold for every lock $L$ and process $p$:
\begin{enumerate}
\item \label{prop:recover} \emph{Recover invocation}: $p$'s first $L$-step after a crash step is to invoke $L.Recover$.
\item \label{prop:try} \emph{Try invocation}: $p$ invokes $L.Try$ only if $p$ is starting a new $L$-super-passage, or if $p$'s prior crash step was during $L.Try$.
\item \label{prop:cs} \emph{CS invocation}: $p$ enters the CS of $L$ only if $p$ receives $TRUE$ from $L.Try$ in its current $L$-passage, or if $p$'s prior crash step was during the CS.
\item \label{prop:exit} \emph{Exit invocation}: $p$ invokes $L.Exit$ only if $p$ is in the CS of $L$, or if $p$'s prior crash step was during $L.Exit$.
\end{enumerate}
\end{definition}

Henceforth, we consider only well-formed execution. We also consider only \emph{well-behaved} RME algorithms, in which \emph{Recover} correctly identifies where a process crashes:

\begin{definition}
An RME algorithm is \emph{well-behaved} if the following hold, for every process $p$ and every \emph{well-formed} execution:
\begin{enumerate}
\item $p$'s first complete invocation of \emph{Recover}, and $p$'s first complete invocation of \emph{Recover} following a complete passage of \emph{Exit}, returns $TRY$.
\item $p$'s first complete invocation of \emph{Recover} following a crash during \emph{Try} return $TRY$.
\item $p$'s first complete invocation of \emph{Recover} following a crash during the CS returns $CS$.
\item $p$'s first complete invocation of \emph{Recover} following a crash during \emph{Exit} returns $EXIT$.
\item A complete invocation of \emph{Recover} by $p$ during the CS returns $CS$.
\end{enumerate}
\end{definition}

\noindent
Note: We consider $p$ to be in the \emph{Try} or \emph{Exit} section from the time it executes the first memory operation
of that section and until it either crashes or executes the last memory operation of that section. Thus, $p$ is considered
to be in the CS after it executes its final \emph{Try} memory operation.

\paragraph{Fairness}
We make a standard fairness assumption on executions: once $p$ starts a super-passage, it does not stop taking steps until
the super-passage ends.

\paragraph{Abortable RME}
At any point during its super-passage, process $p$ can non-deterministically choose to abort its attempt, which we model
by $p$ receiving an external \emph{abort} signal that remains
visible to $p$ throughout the super-passage (i.e., including after crashes) and resets once $p$ finishes the super-passages.
Abortable RME extends the definition of a super-passage as follows.  If $p$ is signalled to abort and its execution of
\emph{Try} returns FALSE, then $p$ has aborted and the super-passage ends.  (It is not mandatory for \emph{Try} to return
FALSE, because an abort may be signalled just as $p$ acquires the lock.)

\paragraph{$D$-ported locks}
    We model locks that may be used by at most $D$ processes concurrently as follows.  In a \emph{$D$-ported lock}, each
    process invokes the methods with a \emph{port} argument, $1 \leq k \leq D$, which acts as an identifier.  We augment the
definition of a well-formed execution to include the following conditions:
\begin{enumerate}
\setcounter{enumi}{4}
\item \label{prop:port} \emph{Constant port usage}: For every process $p$ and $L$-super-passage of $p$, $p$ does not change its port for $L$ throughout the super-passage.
\item \label{prop:sp}\emph{No concurrent super-passages}: For any $L$-super-passages $sp_i$ and $sp_j$ of processes $p_i \ne p_j$, if $sp_i$ and $sp_j$ are concurrent,
 then $p_i$'s port for $L$ in $sp_i$ is different than $p_j$'s port for $L$ in $sp_j$.  (Two super-passages are not \emph{concurrent}
 if one ends before the other begins.)
\end{enumerate}

\paragraph{Problem statement}
Design a well-behaved abortable RME algorithm with the following properties.

\begin{enumerate}
\item \textbf{Mutual exclusion}: At most one process is in the CS at any time $t$.
\item \textbf{Deadlock-freedom}: If a process $p$ starts a super-passage $sp$ at time $t$, and does not abort $sp$, and if every process that enters the CS eventually leaves it, then there is some time $t'>t$ and some process $q$ such the $q$ enters the $CS$ in time $t'$,
    \emph{or else there are infinitely many crash steps}.
\item \textbf{Bounded abort}: If a process $p$ has abort signalled while executing \emph{Try}, and executes sufficiently many steps without crashing, then $p$ complete its execution of \emph{Try}.
\end{enumerate}

The following properties are also desirable, and all but FCFS are satisfied by our algorithm:

\begin{enumerate}
\setcounter{enumi}{3}
\item \textbf{Starvation-freedom}: If the total number of crashes in the execution is finite and process $p$ executes infinitely many steps and every process that enters the CS eventually leaves it, then $p$ enters the CS in each super-passage in which it does not receive an abort signal.
\item \textbf{CS re-entry}: If process $p$ crashes while in the CS, then no other process enters the CS from the time $p$ crashes to the time when $p$ next enters the CS.
\item \textbf{Wait-free CS re-entry}: If process $p$ crashes in the $CS$, and executes sufficiently many steps without crashing, then $p$ enters the CS.
\item \textbf{Wait-free exit}: If process $p$ is executing \emph{Exit}, and executes sufficiently many steps without crashing, then $p$ completes its execution of \emph{Exit}.
\item \textbf{Super-passage wait-free exit}: If process $p$ is executing \emph{Exit}, then $p$ completes an execution of \emph{Exit} after a finite number of its own steps, \emph{or else $p$ crashes infinitely many times}.  (Notice that $p$ may crash and return to re-execute \emph{Exit}.)
\item \textbf{First-Come-First-Served (FCFS)}: If there exists a bounded section of code in the start of the entry section, referred to as the \emph{doorway} such that, if process $p_i$ finishes the doorway in its super-passage $sp_i$ for the first time before some process $p_j$ begins its doorway for the first time in its super-passage $sp_j$, and $p_i$ does not abort $sp_i$, then $p_j$ does not enter the CS in $sp_j$ before $p_i$ enters the CS in $sp_i$.
\end{enumerate}

Super-passage wait-free exit is a novel property introduced in this work.  It guarantees that a process
completes \emph{Exit} in a finite number of its own steps, as long as it only crashes finitely many times. Wait-free exit does not imply super-passage wait-free exit since it does not apply if the process crashes during \emph{Exit}.
Clearly, starvation-freedom implies deadlock-freedom, wait-free CS re-entry implies CS re-entry, and super-passage wait-free exit implies wait-free exit.

\paragraph{Lock complexity}
The \emph{passage complexity} (respectively, \emph{super-passage complexity}) of a lock is the maximum number of RMRs that a process
can incur while executing a passage (respectively, super-passage).  We denote by $F$ the maximum number of times a process crashes in
an execution.

\ifpaper
\section{$W$-Port Abortable RME Algorithm} \label{chap:w_port}
\else
\fi

Here, we present our $D$-process abortable RME algorithm, for $D \leq W$, which has $O(1)$ passage RMR complexity and $O(1+F)$ super-passage complexity.
The algorithm is similar in structure to Jayanti and Joshi's abortable RME algorithm~\cite{abortable-RME}, in that
it is built around a recoverable auxiliary object that tracks the processes waiting to acquire the lock.  This
object's RMR complexity determines the algorithm's complexity.  Non-abortable RME locks implement such an object
with a FAS-based linked list~\cite{sublog-first, sublog-second}.  Such a list has $O(1+FD)$ super-passage complexity---i.e.,
a crash-free passage incurs $O(1)$ RMRs---but it is hard to make abortable.  Jayanti and Joshi instead use a
recoverable \emph{min-array}~\cite{f-arrays}. This object supports aborting, but its passage complexity is logarithmic,
even in the absence of crashes.

Our key idea is to represent the ``waiting room'' object with a FAA-based $W$-bit mask (a single word), where
a process $p$ arriving/leaving is indicated by flipping a bit associated with $p$'s port.  The key ideas are that (1)
if $p$ crashes and recovers, it can learn its state in $O(1)$ RMRs simply by reading the bit mask and (2) the algorithm
carefully avoids relying on any FAA's return value. Our design thus
obtains the best of both worlds: the object can be updated with $O(1)$ RMRs as well as supports efficient aborting
(with a single bit flip).  The trade-off we make in this design choice is that we only guarantee starvation-freedom,
but not FCFS.  Unlike a min-array, the bit mask cannot track the order of arriving processes, as bit
setting operations commute.  We do, however, track the order in which processes acquire the lock, and thereby guarantee
starvation-freedom.

Our algorithm guarantees starvation-freedom unconditionally, even if there are infinitely many aborts.  This turns out
to be a subtle issue to handle correctly (\cref{abortable not wait-free}), and the Jayanti and Joshi algorithm is
prone to executions in which a process that does not abort starves as a result of other processes aborting infinitely
often (we show an example in~\cref{abortable not wait-free}).

\begin{figure}[!b]
\noindent
\hspace{-6ex}
\begin{minipage}{.58\textwidth}
\lstinputlisting[firstnumber=last]{./code_w_port_rme_a}
\end{minipage}\hfill
\hspace{1ex}\begin{minipage}{.56\textwidth}
\lstinputlisting[firstnumber=last]{./code_w_port_rme_b}
\end{minipage}
\caption{$W$-port abortable RME algorithm}
\label{fig:w_port}
\end{figure}

Since we assume $W$-bit memory words, we are careful not to use unbounded, monotonically increasing counters,
which the Jayanti and Joshi lock does use.  Our algorithm's RMR bounds are in both the DSM and CC models,
whereas the Jayanti and Joshi lock has linear RMR complexity on the standard CC model.

\ifpaper
\subsection{Algorithm Walk-Through} \label{sec:w port walkthrough}
\else
\section{Algorithm Walk-Through} \label{sec:w port walkthrough}
\fi

\Cref{fig:w_port} presents the pseudo code of the algorithm.  We assume participating processes uses distinct ports
in the range $0,\dots,W-1$, so we refer to processes and ports interchangeably.  For simplicity, we present the
algorithm assuming dynamic memory allocation with safe reclamation~\cite{HP}.  In this environment, a process can
\emph{allocate} and \emph{retire} objects, and it is guaranteed that an allocation does not return a previously-retired
object if some process still has a reference to that object.  We show how to satisfy this assumption (with $O(D^2)$ static, pre-allocated
memory)
\ifCR
in the full version~\cite[Appendix A]{FullVersion}.
\else
in~\cref{remove-memory-restrict}.
\fi

Each process $p$ has a status word, $STATUS[p]$, and a pointer to a boolean spin variable, $GO[p]$. (In the DSM model,
a process allocates its spin variables from local memory, so that it can spin on them with $O(1)$ RMR cost.) The lock's state consists
of a $W$-bit word, $ACTIVE$, and a $\Theta(W)$-bit word, $LOCK\_STATUS$.  The $LOCK\_STATUS$ word holds a tuple
$(taken, owner, owner\_go)$, where $taken$ is a bit indicating if the lock is acquired by some process.  If $taken$ is set,
$owner$ is the id (port) of the lock's owner and $owner\_go$ points to the owner's spin variable.

The $STATUS$ word of each process $p$, initialized to $TRY$, indicates in which section the process is currently at.
This information is used by $Recover$ to steer $p$ to the right method when it arrives.  The $STATUS$ word changes
when completing $Try$ and entering the CS, when aborting during $Try$, when exiting the CS and executing $Exit$, and when $Exit$ completes.
Note that the \emph{Exit} method may be called as a subroutine during the \emph{Try} section's abort flow. In this case,
its operations are considered part of the \emph{Try} section (i.e., the subroutine call is to avoid putting a copy
of \emph{Exit}'s code in \emph{Try}). To distinguish these subroutine calls from when a process invokes \emph{Exit}
to exit the CS, we add an \emph{abort} argument to \emph{Exit}, which is \emph{FALSE} if and only if \emph{Exit} is
invoked to exit the CS (i.e., not as a subroutine).

In the normal (crash- and abort-free) flow, a passage of process $p$ proceeds as follows.  First, $p$ allocates its
spin variable, if it does not currently exist (lines~\ref{rme:checkSpin}--\ref{rme:checkSpinEnd}). Then $p$ flips its bit in the $ACTIVE$ word, but only if $p$'s bit
is not already set (lines~\ref{rme:Flip}--\ref{rme:FlipEnd}). This check avoids corrupting $ACTIVE$ when $p$ recovers from a crash.  Next, $p$ executes a $Promote$
procedure, which tries to pick some waiting process (possibly $p$) and make it the owner of the lock, if the lock is
currently unowned (line~\ref{rme:TryPromote}). Finally, $p$ begins spinning on its spin variable, waiting for an indication that it has become the lock owner (lines~\ref{rme:Spin}--\ref{rme:SpinEnd}).
Upon exiting the CS, $p$ clears its bit in $ACTIVE$ (again, only if the bit is currently set, to handle crash
recovery) (lines~\ref{rme:ExitClear}--\ref{rme:ExitClearEnd}). Then $p$ executes $Promote$ (line~\ref{rme:ExitPromote1}).  Performing this call will have no effect, since $p$ is still holding the lock,
which may appear strange, but is required in order to support the abort flow, as explained shortly.
Then, if $p$ is indeed the lock owner (another check useful only in the abort flow), it releases the lock by clearing the $taken$ bit in $LOCK\_STATUS$ (lines~\ref{rme:ExitCheckOwn}--\ref{rme:ExitCheckOwnEnd}).
Note that $p$ leaves the $owner$ and $owner\_go$ fields intact, for reasons described shortly.
Finally, $p$ executes $Promote$ again, to hand the lock off to some waiting process (line~\ref{rme:ExitPromote2}). It then retires its spin variable, clears its $GO$ pointer, and updates its $STATUS$ to $TRY$, thereby completing $Exit$ and thus its current passage and super-passage (line~\ref{rme:ExitRetire}--\ref{rme:ExitRetireEnd}).

If $p$ receives the abort signal while spinning in $Try$, it sets its $STATUS$ to $ABORT$, executes the $Exit$ method as
a subroutine, and returns
$FALSE$ (label~\ref{rme:TryAbort}--\ref{rme:TryAbortEnd}).  If $p$ crashes during the execution of $Exit$, $Recover$ will steer it to $Try$ once it recovers, at which
point it will again execute the $Exit$ method and return $FALSE$.  In the abort flow, the call to $Exit$ does
not modify $p$'s $STATUS$ (the $if$ is not taken, lines~\ref{rme:ExitStatus}--\ref{rme:ExitStatusEnd}).

The main goal of $Promote(j)$ is to promote some waiting process to be the lock owner, if the lock is currently unowned.
$Promote$ tries to promote one of the waiting processes (as specified by ACTIVE). If there is no such process, then $Promote$
tries to promote process $j$ if $j \neq \perp$, and does not promote any process otherwise (lines~\ref{rme:PromoteGoal1}--\ref{rme:PromoteGoal1End}).  A secondary goal of $Promote$ is that it signals the (current or newly promoted) owner by writing to its spin variable (lines~\ref{rme:PromoteGoal2}--\ref{rme:PromoteGoal2End}).
Picking a process to promote from among the waiting processes is done in a manner that guarantees starvation-freedom.
To this end, $Promote$ picks the next id whose bit is set in $ACTIVE$, when ids are scanned starting from the previous
owner's id (which, as described above, is written in $LOCK\_STATUS$) and moving up (modulo $W$).
(In the code, this is
specified as $next(owner,active)$.)  Having picked a process $q$ to promote, $Promote$ tries to update $LOCK\_STATUS$ to $(1,q,GO[q])$ using a single CAS. Finally, before completing, $Promote$ checks again if the lock is owned by some process $r$ (possibly $r \neq q$),
and if so, signals $r$ by writing $TRUE$ to $r$'s spin variable.

The reason for executing $Promote$ in $Exit$ before releasing the lock, and not only afterwards, is to handle a scenario
in which the lock owner $q$ has released the lock and $next(q, ACTIVE)=p$, so any process $r$ (possibly, but not necessarily, $q$)
executing $Promote$ tries to hand the lock to $p$. If now $p$ is signalled to abort, and did not also execute $Promote$
before departing, deadlock would occur.  By having $p$ call $Promote(p)$, we guarantee that either (1) some process
(possibly $p$) promotes $p$, so $p$'s $Exit$ call releases the lock before completing the abort; or (2) some process $r$ (possibly, but not necessarily $p$),
which does not observe $p$ in $ACTIVE$, updates $LOCK\_STATUS$ from $(0,q,G)$ to $(1,q',G')$. In the latter case, our
memory management assumption implies that $LOCK\_STATUS$ will not recycle to contain $(0,q,G)$ before every processes that
has read $(0,q,G)$ from $LOCK\_STATUS$ executes its CAS.  All such CASs, who are about to change $(0,q,G)$ to  $(1,p,GO[p])$ thus fail, so the lock does not get handed to $p$
and no deadlock occurs after it completes its abort.

\ifpaper
\subsection[Guaranteeing Starvation-Freedom] {Discussion: Guaranteeing Starvation-Freedom In the Presence of  Infinitely Many Aborts} \label{abortable not wait-free}
\else
\section[Guaranteeing Starvation-Freedom] {Discussion: Guaranteeing Starvation-Freedom In the Presence of  Infinitely Many Aborts} \label{abortable not wait-free}
\fi

As discussed in~\cref{sec:w port walkthrough}, a key idea in our algorithm is to invoke $Promote$ even before
releasing the lock, to handle the case in which the lock is about to be handed to an aborting process.
While simple, this is a subtle idea, because a different (more straightforward) approach to dealing with this issue
can lead to starvation.  We explain the issue by describing and analyzing a starvation problem in Jayanti and Joshi's
abortable RME algorithm~\cite{abortable-RME}.  The structure of our algorithm and of Jayanti and Joshi's algorithm is similar,
if one thinks of our $ACTIVE$ word and their min-array as abortable objects which (1) maintain the set of waiting processes
and (2) have some notion of the ``next in line'' waiting process, which becomes the lock owner.  (Jayanti and Joshi refer
to this object as a \emph{registry}.)  We describe the problem in the Jayanti and Joshi lock by contrasting
its behavior with our algorithm's.

Intuitively, starvation-freedom should follow from property (2) of the ``waiting room'' object, because every process
executing $Promote$ will eventually agree on the process $p$ to promote, which would then become the lock owner.
For this to be true, however, aborts need to be handled very carefully.
Phrased in our terminology, in the Jayanti and Joshi algorithm, a process $p$ that receives an abort signal starts
executing $Exit$, where it removes itself from the ``waiting room'' object.  Subsequently, if $LOCK\_STATUS=(0,o,os)$,
$p$ tries (using a single CAS) to update $LOCK\_STATUS$ from $(0,o,os)$ to $(0,p,GO[p])$.  In other words, $p$ tries
to make it look as if it had acquired the lock and immediately released it.  The motivation for this step is to fail
any $Promote$ that is about to make $p$ the lock owner, which if not handled, would result in deadlock.

This approach has the unfortunate side-effect of failing concurrent $Promote$s even if they are not about to
make $p$ the lock owner.  This can lead to an execution in which aborting processes prevent the lock from being
acquired, as described next.

Process $p_1$ arrives and enters the critical section.  Process $p_2,p_3,p_4$ arrive and enter the waiting room.
Now $p_1$ leaves the CS and executes $Exit$, which (in Jayanti and Joshi's algorithm) has a single $Promote$ call,
after releasing the lock.  Suppose the ``waiting room'' object indicates that $p_2$ should be the next lock owner.
Now, $p_1$ stops in its $Promote$ call, just before CASing $LOCK\_STATUS$ from $(0,p_1,*)$ to $(1,p_2,*)$.
Next, $p_3$ aborts, executes the $Exit$ code and successfully changes $LOCK\_STATUS$ to $(0,p_3,*)$.

As a result, $p_1$'s CAS in $Promote$ fails. $p_1$ completes its $Exit$ section and then returns to the Try section,
executes $Promote$, and stops just before CASing $LOCK\_STATUS$ from $(0,p_3,*)$ to $(1,p_2,*)$.
Now, $p_3$ proceeds to the $Promote$ call in $Exit$, stopping just before CASing $LOCK\_STATUS$ from $(0,p_3,*)$ to $(1,p_2,*)$.
We have reached a state in which $p_4$ is waiting, $LOCK\_STATUS$ is $(0,p_3,*)$, $p_1$ is in its Try $Promote$
and $p_3$ is in its $Exit$ promote, both about to CAS $LOCK\_STATUS$ from $(0,p_3,*)$ to $(1,p_2,*)$.

We continue as follows.  Now $p_4$ receives the abort signal, proceeds to execute $Exit$, and successfully changes
$LOCK\_STATUS$ from $(0,p_3,*)$ to $(0,p_4,*)$.  Consequently, the CAS of both $p_1$ and $p_3$ fails, so $p_1$
enters the waiting room, whereas $p_3$ departs the algorithm, returns, and stops in the Try $Promote$ before CASing
$LOCK\_STATUS$ from $(0,p_4,*)$ to $(1,p_2,*)$. As for $p_4$, it enters the Exit $Promote$ and stops before
CASing $LOCK\_STATUS$ from $(0,p_4,*)$ to $(1,p_2,*)$.  We have reached a similar situation as in the previous
paragraph, and can therefore keep repeating this scenario indefinitely. Throughout, $p_2$ keeps taking steps
in the waiting room, but will never enter the CS.
\ifpaper
\subsection{Proofs of RME Properties} \label{w-port proof}
\else
\section{Proofs of RME Properties} \label{w-port proof}
\fi

\ifCR
We refer to our $W$-port abortable RME algorithm as Algorithm $M$.  In the the full version~\cite{FullVersion},
we prove the following theorem:
\begin{restatable}{theorem}{wPortMain}\label{w port proprties}
If every execution of Algorithm $M$ is well-formed, then Algorithm $M$ satisfies mutual exclusion, bounded abort, starvation-freedom, CS re-entry, wait-free CS re-entry, wait-free exit, and super-passage wait-free exit. The passage complexity of Algorithm $M$ in both the $CC$ and $DSM$ models is $O(1)$ and the super-passage complexity is $O(1+F)$. (Assuming, for the DSM model, that process memory allocations return local memory.) The space complexity of the algorithm is $O(D^2)$.
\end{restatable}

Here, we omit the proof, due to space constraints, and point out of some of its high-level aspects.
\else
\fi
Whenever a process $p$ starts a super-passage in our algorithm, it allocates a fresh spin variable.
To avoid unbounded space consumption, the memory used for spin variables eventually has to be recycled,
i.e., an allocation by process $p$ can return a variable it previously used.
Our proofs assume that this recycling is done safely, namely, that an allocation of a new spin variable
does not return an object that is currently being referenced by some process.  (We show how to satisfy
this assumption using $O(D^2)$ static pre-allocated memory words
\ifCR
in the full version~\cite[Appendix A]{FullVersion}.)
\else
in~\cref{remove-memory-restrict}.)
\fi

The above safe memory management assumption implies two properties that we use throughout the proofs.
First, that if a process $p$ is about to CAS $LOCK\_STATUS$ in $Promote$, and $LOCK\_STATUS$ has changed between
$p$ last reading it and executing the CAS, then the CAS will fail. This holds because $LOCK\_STATUS$ necessarily
contains a different $owner\_go$ value.  Second, that if $p$ sets the spin variable of $q$ to $TRUE$ and
$q$ has already started a new super-passage, then $q$ will never read that $TRUE$ value. This holds
because $q$ allocates a different spin variable for its new super-passage.

\ifCR

\else
Let $p$ be a process in super-passage $sp$ using port $k$.  For each passage in $sp$, we say $p$ is in its Try (respectively Critical or Exit) section from the time it executes the first memory operation of that section and until $p$ crashes or executes the last memory operation of that section.
Recall the subtle difference between $p$ being in its Exit section and $p$ executing the \emph{Exit} method. The latter can happen even if $p$ is not in its Exit section, due to receiving an abort signal in its Try section and calling \emph{Exit} as a subroutine. We say $p$ is in its Exit section if and only if $p$ invokes \emph{Exit} with $abort=FALSE$.

While $p$ is in its Try section, we define that $p$ is in the \emph{waiting room} from the time $t$ where $p$ starts the \emph{while} loop in the Try section and until time $t'$ where $p$ finishes its Try section or crashes.

If $p$ is in a super-passage $sp$ with port $k$, we define $spin_{p, k, sp}$ as the only spin variable $p$ is waiting on in its waiting room in this super-passage.  We use the notation $\&v$ for the memory address of $v$.

We refer to our $W$-port abortable RME algorithm as Algorithm $M$.  We proceed to prove that Algorithm $M$ satisfies the
desired RME properties. All claims and proofs assume that executions are well-formed.
Omitted claims and proofs appear in~\cref{appendix:w-port proof}.
\fi

\ifCR
\else
\begin{restatable}{claim}{statusNoSP}\label{status in no sp}
If at time $t$, no process is in a super-passage with port $k$, then $STATUS[k]=TRY$.
\end{restatable}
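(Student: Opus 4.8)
The claim is an invariant: whenever no process is currently in a super-passage with port $k$, the per-process status word $STATUS[k]$ equals $TRY$. I would prove this by induction on the length of the execution, examining the steps that can either (a) start/end a super-passage with port $k$ or (b) modify $STATUS[k]$, and showing that these are consistent. The base case is immediate: in the initial configuration no process is in any super-passage, and $STATUS[k]$ is initialized to $TRY$ for every $k$. For the inductive step I would consider the step $e$ at time $t+1$ taking the system from configuration $\sigma_t$ to $\sigma_{t+1}$, and argue: if no process is in a super-passage with port $k$ at time $t+1$, then $STATUS[k]=TRY$ at time $t+1$.

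The key case split. First, if no process was in a super-passage with port $k$ at time $t$ either, then by the inductive hypothesis $STATUS[k]=TRY$ at time $t$, and I must check that $e$ cannot change $STATUS[k]$. By well-formedness (Definition~\ref{defn:well-formed}, property~\ref{prop:port}, constant port usage, together with property~\ref{prop:sp}, no concurrent super-passages), a step that writes $STATUS[k]$ must be taken by a process whose current super-passage uses port $k$; but there is no such process at time $t$, so $STATUS[k]$ is unchanged and remains $TRY$. Second, if some process $p$ \emph{was} in a super-passage with port $k$ at time $t$ but is not at time $t+1$, then $e$ must be the step in which $p$'s super-passage ends. Inspecting the code, a super-passage ends exactly when $p$ completes \emph{Exit}: either the normal flow (line~\ref{rme:ExitRetireEnd}, where $STATUS[k]$ is set to $TRY$) or the abort flow, where $p$ returns $FALSE$ from \emph{Try} after having executed line~\ref{rme:TryAbort} which set $STATUS[k]$ to $ABORT$ and then called \emph{Exit} as a subroutine; but crucially, in this abort subroutine path the \emph{Try} code must set $STATUS[k]$ to $TRY$ before returning $FALSE$ (the last memory operation of the \emph{Try} section in the abort flow writes $TRY$; this is the step that finalizes the abort). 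So in both flows, at the moment the super-passage ends, $STATUS[k]=TRY$, and I would additionally note that between that write and time $t+1$ no other process writes $STATUS[k]$ (again by constant-port-usage / no-concurrent-super-passages, since $p$'s port-$k$ super-passage was still live until $e$).

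**Main obstacle.** The delicate point is the abort flow: I must confirm from the pseudocode that the last memory step of \emph{Try} in the abort path really does reset $STATUS[k]$ to $TRY$ (so that "completing \emph{Try} with $FALSE$" coincides with "$STATUS[k]=TRY$"), and that the subroutine call to \emph{Exit} with $abort=TRUE$ does \emph{not} itself touch $STATUS[k]$ — this is exactly what the walkthrough emphasizes (lines~\ref{rme:ExitStatus}--\ref{rme:ExitStatusEnd}: "the $if$ is not taken"). A secondary subtlety is crash handling: if $p$ crashes during \emph{Exit} (in either flow), its super-passage does \emph{not} end — by the well-formedness and well-behavedness definitions, $p$'s \emph{Recover} will steer it back to \emph{Exit} (or \emph{Try}, in the abort case), so the super-passage is still live and the first case of the split doesn't apply; hence a crash step is never the step that triggers this claim's hypothesis to newly hold. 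I would state these observations explicitly as small sub-lemmas (e.g., "$STATUS[k]$ is written only by a process whose current super-passage has port $k$" and "the step that ends a port-$k$ super-passage writes $STATUS[k]:=TRY$") and then the induction is routine.
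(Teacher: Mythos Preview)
Your proposal is correct and rests on the same underlying argument as the paper: $STATUS[k]$ is initially $TRY$, only a process in a port-$k$ super-passage ever writes $STATUS[k]$, and the step that ends a port-$k$ super-passage (in both the normal and abort flows) is the final write of the \emph{Exit} procedure, which sets $STATUS[k]:=TRY$.

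The only difference is organizational. You run a direct induction on execution length and handle the abort flow inline. The paper instead factors the argument into two auxiliary claims---\cref{complete exit} (every completed super-passage ends with a completed \emph{Exit}) and \cref{variables init} (at the start of each port-$k$ super-passage, $STATUS[k]=TRY$)---and then derives the present claim as an immediate corollary, using induction on the number of port-$k$ super-passages rather than on steps. The paper's decomposition pays off because \cref{variables init} is reused later (e.g., for the $ACTIVE$ bit and $GO[k]$), whereas your inline argument would need to be repeated. Your treatment of the abort-flow subtlety is exactly what \cref{complete exit} encapsulates.
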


\begin{restatable}{claim}{statusTrySection}\label{status in try section}
If process $p$ is in super-passage $sp$ with port $k$ and $p$ is in the Try section, then $STATUS[k]=TRY$ or $STATUS[k]=ABORT$.
\end{restatable}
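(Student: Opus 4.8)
The plan is to track $STATUS[k]$ over exactly the part of $sp$ during which $p$ is in its Try section and argue that it is only ever advanced from $TRY$ to $ABORT$ there. First I would record two structural facts obtained by inspecting the code: (i) $STATUS[k]$ is written only inside $Try(k)$ and $Exit(k,\cdot)$ (the $Promote$ and $Recover$ procedures never touch $STATUS$); and (ii) a crash leaves memory unchanged. Combining (i) with condition~\ref{prop:sp} --- at most one process at a time is in a super-passage with port $k$ --- throughout $sp$ the only process that can write $STATUS[k]$ is $p$ itself, and it does so only from within $Try$ or $Exit$.

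Next I would pin down $STATUS[k]$ at the moment $p$ first enters its Try section in $sp$. Just before $sp$ begins, $p$ is not in a super-passage with port $k$, and by condition~\ref{prop:sp} neither is any other process; hence \cref{status in no sp} gives $STATUS[k]=TRY$ at that instant. Since $p$ performs no $STATUS$ write during the remaining steps of $Recover$, and no other process can write $STATUS[k]$, it still holds that $STATUS[k]=TRY$ when $p$ executes the first memory operation of its Try section in $sp$.

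Then I would induct on the passages of $sp$ to control the value of $STATUS[k]$ at the start of $p$'s Try section in each passage. By the behavior of $Recover$, $p$ is in its Try section during passage $i$ of $sp$ only if $i=1$ (handled above) or $p$'s previous passage in $sp$ crashed while $p$ was in its Try section. In the latter case the inductive hypothesis gives $STATUS[k]\in\{TRY,ABORT\}$ at the crash step, and since the crash and the ensuing $Recover$ invocation do not alter $STATUS[k]$, this value persists until $p$ re-enters its Try section. Finally, within a single execution of the Try section the only writes $p$ makes to $STATUS[k]$ are the write to $ABORT$ on the abort path (lines~\ref{rme:TryAbort}--\ref{rme:TryAbortEnd}) and the write to $CS$ at the very end of the normal path; the $Exit$ call made from the abort path has $abort=TRUE$, so the $if$ of lines~\ref{rme:ExitStatus}--\ref{rme:ExitStatusEnd} is not taken and $STATUS[k]$ is left untouched. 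Because the write to $CS$ is the last memory operation of the Try section, at every time at which $p$ is still in its Try section $STATUS[k]$ has been changed at most once, from $TRY$ to $ABORT$; thus $STATUS[k]\in\{TRY,ABORT\}$, as claimed.

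I expect the main obstacle to be the bookkeeping at passage boundaries: making the inductive step rigorous requires knowing exactly which method executions count as ``$p$'s Try section'' (in particular that the abort-path $Exit$ subroutine is part of it and does not modify $STATUS[k]$) and that $Recover$, on its way to steering $p$ back into $Try$, never writes $STATUS[k]$. Everything else is routine code inspection together with the single appeal to \cref{status in no sp} for the base case.
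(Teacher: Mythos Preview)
Your approach is essentially the same as the paper's: establish $STATUS[k]=TRY$ at the start of $sp$ (you use \cref{status in no sp}, the paper uses the equivalent \cref{variables init}), then track writes to $STATUS[k]$ through the Try section, noting that the write to $CS$ ends the Try section by definition.

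There is, however, one factual error. You assert that the abort-path subroutine call $Exit(k,TRUE)$ ``leaves $STATUS[k]$ untouched'' because the \texttt{if} at lines~\ref{rme:ExitStatus}--\ref{rme:ExitStatusEnd} is skipped. That \texttt{if} is indeed skipped, but the \emph{final} line of \emph{Exit} (line~\ref{rme:ExitRetireEnd}) unconditionally writes $STATUS[k]\gets TRY$. So within one passage the sequence can be $TRY\to ABORT\to TRY$, and if $p$ crashes between that last write and the \texttt{return FALSE}, it will re-enter \emph{Try}, find the abort signal still set, write $ABORT$ again, and repeat. The paper's proof explicitly walks through this cycle. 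Your conclusion survives because the extra write is to $TRY$, an allowed value, but your supporting claim that ``$STATUS[k]$ has been changed at most once, from $TRY$ to $ABORT$'' is false and should be replaced by the observation that every write during the Try section is to either $TRY$ or $ABORT$ (the write to $CS$ being the terminal step).
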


\begin{restatable}{claim}{statusInCS}\label{status in cs}
If process $p$ is in super-passage $sp$ with port $k$ and $p$ is in the CS, then $STATUS[k]=CS$.
\end{restatable}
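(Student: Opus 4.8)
The plan is to show that $STATUS[k]$ reads $CS$ for exactly as long as the port-$k$ process occupies the critical section, by combining two code-inspection facts with the no-concurrent-super-passages condition that well-formedness imposes on $D$-ported locks. First I would record the following facts about the code in \cref{fig:w_port}. Fact~(a): the only process that ever writes $STATUS[k]$ is the process currently running with port $k$, and since at most one process runs with port $k$ at a time, during the super-passage $sp$ only $p$ ever writes $STATUS[k]$. Fact~(b): a process running with port $k$ writes $STATUS[k]$ only (i) as the final memory operation of an invocation of $Try$ that returns $TRUE$, writing $CS$; (ii) in the abort branch of $Try$, writing $ABORT$; (iii) inside $Exit$ invoked with $abort = FALSE$, writing $EXIT$; and (iv) at the end of such an invocation of $Exit$, writing $TRY$. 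In particular $Recover$ performs no write to the $STATUS$ array, and $Recover$ returns $CS$ if and only if the value it reads from $STATUS[p]$ equals $CS$. (The placements in~(i) and~(iii) are exactly what is needed for the algorithm to be well-behaved: a crash one operation earlier must not look to $Recover$ like a crash inside the CS or inside $Exit$.)

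With those facts in hand, the argument is direct and needs no induction. Fix a time $t$ at which $p$ is in the CS during $sp$. By the convention fixing the section boundaries, $p$ is in the CS from its last $Try$ operation until its first $Exit$ operation or a crash, so there is a latest time $e \le t$ at which $p$ entered the CS, and $p$ is continuously in the CS throughout $(e,t]$. I would first show $STATUS[k] = CS$ just after $e$. There are two ways $p$ entered the CS: if an invocation of $Try$ returned $TRUE$, then by Fact~(b)(i) the step at $e$ is precisely $p$'s write of $CS$ to $STATUS[k]$; if $Recover$ returned $CS$, then by Fact~(b) that invocation of $Recover$ read $STATUS[p] = CS$, and by Facts~(a) and~(b) the value of $STATUS[p]$ does not change between that read and $e$, since the only writer of $STATUS[k]$ during $sp$ is $p$ and $p$ is inside $Recover$, which writes nothing to $STATUS$. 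I would then show $STATUS[k]$ is unchanged throughout $(e,t]$: by Fact~(a) the only process that can write $STATUS[k]$ during $sp$ is $p$, but on $(e,t]$ the process $p$ is continuously in the CS and therefore executes only client code, which does not touch $STATUS$. Combining the two yields $STATUS[k] = CS$ at $t$, which is the claim.

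The main obstacle is pinning down Facts~(a) and~(b) precisely from the pseudocode — especially confirming that the assignment $STATUS[k] \gets CS$ is the \emph{last} memory operation of a successful $Try$ (so that the entry time $e$ coincides with that write rather than falling strictly after it), and that $Recover$ never writes $STATUS$ and computes its return value solely from the value it reads in $STATUS[p]$. Everything else is routine once those are established. I expect this claim to be self-contained and not to require \cref{status in no sp} or \cref{status in try section}; those three $STATUS$ claims are used together later in the mutual-exclusion and recovery arguments, and if one preferred to establish all of them by a single induction over the execution, their inductive hypotheses would simply be conjoined.
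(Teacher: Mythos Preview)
Your proposal is correct and follows essentially the same approach as the paper's proof: both argue that $STATUS[k]$ is set to $CS$ at the moment $p$ enters the CS (whether via a successful \emph{Try} or via \emph{Recover} returning $CS$ after a crash) and that no write to $STATUS[k]$ occurs until $p$ leaves the CS by invoking \emph{Exit}. Your version is considerably more detailed---in particular your explicit Facts~(a) and~(b) and the careful treatment of the \emph{Recover}-re-entry case---whereas the paper compresses this into a few lines of code inspection.
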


\begin{restatable}{claim}{statusInExit}\label{status in exit section}
If process $p$ is in super-passage $sp$ with port $k$ and $p$ is in the Exit section, then $STATUS[k]=EXIT$.
\end{restatable}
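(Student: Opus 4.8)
The plan is to prove all four claims (\cref{status in no sp,status in try section,status in cs,status in exit section}) together — they are really one invariant tracking the single word $STATUS[k]$ as the unique port-$k$ participant moves through its sections — and to read off \cref{status in exit section} at the end. The engine is a simple \emph{structural lemma} about the code: the only steps that write $STATUS[k]$ are (i) the step at the end of a successful $Try$, just before $Try$ returns $\text{TRUE}$, which writes $CS$; (ii) the step at line~\ref{rme:TryAbort}, reached when an abort is observed in the waiting room, which writes $ABORT$; (iii) the first memory operation of an $Exit$ invocation with $abort=\text{FALSE}$ (around line~\ref{rme:ExitStatus}), which writes $EXIT$; and (iv) the closing memory operation of any $Exit$ invocation (around line~\ref{rme:ExitRetire}), which writes $TRY$ — and each such step is executed during a super-passage with port $k$. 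Combined with the no-concurrent-super-passages requirement (condition~\ref{prop:sp} of well-formed executions), this gives the corollary I will use everywhere: over any time interval that lies inside a single port-$k$ super-passage of a process $p$, $STATUS[k]$ is changed only by $p$, and only at the four sites above; no other process, no $Recover$ invocation (which reads but never writes $STATUS$, since the algorithm is well-behaved), and no crash step ever alters $STATUS[k]$ there.

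With this in hand I would run a single forward induction on the length of the execution, tracking $STATUS[k]$. \Cref{status in no sp} covers the gaps between port-$k$ super-passages: if no port-$k$ super-passage has ever begun, $STATUS[k]$ is still its initial value $TRY$; otherwise let $sp'$ be the most recent port-$k$ super-passage that has ended by time $t$, and note that $sp'$ terminated either by completing $Exit$ or by an abort (which runs $Exit$ as a subroutine to completion), so its final write to $STATUS[k]$ was $TRY$ at line~\ref{rme:ExitRetire}; by hypothesis and the choice of $sp'$, no port-$k$ super-passage is in progress during $(\text{end of }sp',t]$, so by the corollary nothing rewrites $STATUS[k]$ and $STATUS[k]=TRY$ at $t$. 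Propagating forward: when $p$ begins its port-$k$ super-passage and invokes $Try$, \cref{status in no sp} gives $STATUS[k]=TRY$ at that instant, and while $p$ is in its $Try$ section the only $STATUS[k]$-writes available to it are site (ii) (to $ABORT$) and the closing write of a $Try$-internal $Exit$ subroutine (back to $TRY$), which yields \cref{status in try section}; and the only way $p$ comes to be in the CS (condition~\ref{prop:cs} of well-formedness, together with well-behavedness of $Recover$) is that $p$ executed site (i) in its current passage or that $p$ crashed in the CS and $Recover$ returned $CS$ — in both cases the last $STATUS[k]$-write was $CS$ and the corollary forbids any overwrite, which is \cref{status in cs}.

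\Cref{status in exit section} then follows by the same pattern. Suppose $p$ is in super-passage $sp$ with port $k$ and is in its Exit section at time $t$; by definition $p$ has invoked $Exit$ with $abort=\text{FALSE}$, has executed at least the first but not all memory operations of that invocation, and has not crashed since. Let $t_0\le t$ be the time of that first memory operation; by site (iii) it wrote $STATUS[k]\leftarrow EXIT$. Inside $Exit$ the only $STATUS[k]$-writes are its first and its closing memory operations, and $p$ has not reached the latter by time $t$; moreover $[t_0,t]\subseteq sp$, so by the corollary no other process and no crash touches $STATUS[k]$ in $[t_0,t]$. Hence $STATUS[k]=EXIT$ at time $t$.

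The delicate part is the crash/recovery bookkeeping together with the section boundaries. I must handle the excerpt's notion of "being in a section" so that the atomic $STATUS[k]\leftarrow TRY$ at the end of $Exit$ (and the atomic $STATUS[k]\leftarrow CS$ at the end of $Try$) never yields a configuration in which $p$ is still in the section while $STATUS[k]$ has the wrong value — i.e., section membership is decided by which memory operations of the invocation have already been executed, and a crash ends the span. Also, within one super-passage a process may invoke $Exit$ several times (once per crash-interrupted passage), so I have to be sure the $t_0$ above is the first operation of the \emph{current} $Exit$ invocation; that is exactly what the "has not crashed since" clause buys me, since after a crash in $Exit$ well-behavedness forces $p$ through $Recover$ (returning $EXIT$) and a fresh re-execution of $Exit$'s first operation. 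Finally I need to confirm there is no circular dependency among the four claims: the structural lemma plus the "most-recently-ended super-passage" argument for \cref{status in no sp} is precisely what lets the whole family go through by one induction on execution length.
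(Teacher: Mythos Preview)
Your proposal is correct and rests on the same core observation as the paper: $STATUS[k]$ is written only by the unique port-$k$ participant at a handful of identifiable code sites, and tracking those writes through the section boundaries yields the claim. The paper simply proves each of the four claims separately with short ``from the code'' arguments (its proof of \cref{status in exit section} is three sentences: the Exit section begins with the write of $EXIT$, and the only subsequent change is the final write of $TRY$, which ends the section), whereas you package the same reasoning as one structural lemma plus a unified forward induction; the content is the same. One terminological nit: where you invoke ``well-behavedness'' to justify that $Recover$ does not write $STATUS$ and that after a crash in $Exit$ the process re-invokes $Exit$, you should instead appeal directly to the code of $Recover$ and to the model/well-formedness, since well-behavedness is \emph{derived} from these four claims---your inductive framing already gives you what you need without circularity, so this is only a matter of wording.
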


\begin{restatable}{lemma}{recoverCorrect}\label{recover is correct}
Algorithm $M$ is well-behaved.
\end{restatable}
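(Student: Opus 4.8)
The plan is to prove each of the five clauses of well-behavedness by tracking the value of $STATUS[k]$ across the lifetime of a process $p$ using port $k$, leveraging Claims~\ref{status in no sp}--\ref{status in exit section} together with a direct reading of the $Recover$ code. The key observation is that $Recover$'s return value is determined entirely by $STATUS[k]$ at the moment $p$ (re)reads it: $STATUS[k]=TRY$ or $ABORT$ causes $Recover$ to steer $p$ to $Try$ and return $TRY$; $STATUS[k]=CS$ returns $CS$; $STATUS[k]=EXIT$ returns $EXIT$. Since the memory state is preserved across a crash, $p$'s first $Recover$-step after a crash reads exactly the value $STATUS[k]$ held when the crash happened, and for a $Recover$ invoked at the start of a super-passage (or during the CS, for clause~5), the relevant $STATUS[k]$ value is pinned down by the appropriate status claim.

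First I would handle clause~1 (first $Recover$, or $Recover$ after a complete $Exit$, returns $TRY$): on the very first invocation, $STATUS[k]=TRY$ by the initialization of $STATUS$ and Claim~\ref{status in no sp} (no process is yet in a super-passage with port~$k$); after a complete $Exit$ section, line~\ref{rme:ExitRetireEnd} sets $STATUS[k]=TRY$ and no process is in a super-passage with port~$k$ (well-formedness, property~\ref{prop:sp}), so again Claim~\ref{status in no sp} applies and $Recover$ reads $TRY$. For clause~2 ($Recover$ after a crash during $Try$), Claim~\ref{status in try section} gives $STATUS[k]\in\{TRY,ABORT\}$ at the crash point, and both values lead $Recover$ to return $TRY$. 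Clause~3 ($Recover$ after a crash in the CS) follows from Claim~\ref{status in cs}: $STATUS[k]=CS$ at the crash, so $Recover$ returns $CS$. Clause~4 ($Recover$ after a crash during $Exit$) follows from Claim~\ref{status in exit section}: $STATUS[k]=EXIT$ at the crash (here one must note that the crash-during-$Exit$ case in the well-behavedness definition refers to $p$ being in its \emph{Exit section}, i.e. having invoked $Exit$ with $abort=FALSE$, which is exactly the precondition of Claim~\ref{status in exit section}), so $Recover$ returns $EXIT$. Clause~5 (a complete $Recover$ by $p$ while in the CS returns $CS$) is immediate from Claim~\ref{status in cs} as well: throughout the time $p$ is in the CS, $STATUS[k]=CS$, and $Recover$ does not modify $STATUS[k]$ before reading it, so it returns $CS$.

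The one point requiring care, and the main obstacle, is verifying that $Recover$ itself does not overwrite $STATUS[k]$ before it reaches the read that determines its return value, and that the status claims it relies on are applicable to the exact configuration at the moment $Recover$ reads the status (in particular, the claims are phrased about a process being ``in'' a given section, so for clauses~2--4 I must confirm that the configuration immediately preceding the crash step is one in which $p$ is, respectively, in its Try/CS/Exit section per the definitions in the Note following well-behavedness). This is a matter of a short inspection of the $Recover$ pseudocode and of matching the claims' hypotheses to the definition of ``in section $X$'' (first memory operation of the section until crash or last memory operation), so I expect it to be routine once the bookkeeping is set up. There is also a mild subtlety for clause~1's second case: I should confirm that the assignment at line~\ref{rme:ExitRetireEnd} is the last memory operation of the $Exit$ section, so that ``following a complete passage of $Exit$'' indeed guarantees $STATUS[k]=TRY$ has been written; this is visible directly from the code.
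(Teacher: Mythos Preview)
Your proposal is correct and follows essentially the same approach as the paper: the paper's proof is a one-liner stating that the lemma is immediate from \cref{status in no sp,status in try section,status in cs,status in exit section} together with the simple flow of the \emph{Recover} procedure, and your argument is precisely a careful unpacking of that sentence, matching each well-behavedness clause to the appropriate status claim and observing that \emph{Recover} only reads $STATUS[k]$. The additional subtleties you flag (that \emph{Recover} does not write $STATUS[k]$ before reading it, and that line~\ref{rme:ExitRetireEnd} is the final memory operation of \emph{Exit}) are indeed routine code inspections, as you anticipate.
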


\begin{restatable}{claim}{StatusBeforePInCS}\label{CS_STATUS before p in CS}
If process $p$ is in super-passage $sp$ with port $k$ and is in the CS, then there was some time $t$ before $p$ first entered the CS in $sp$, at which $LOCK\_STATUS = (1, k, \&spin_{p, k, sp})$.
\end{restatable}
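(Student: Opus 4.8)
The plan is to walk the execution backwards from $p$'s first entry into the CS of $M$ in $sp$ until we reach a configuration in which $LOCK\_STATUS$ names port $k$ together with $p$'s current spin variable. First I would establish that $p$ reaches the CS for the first time in $sp$ only via some invocation of $Try$ in $sp$ returning $TRUE$: by \cref{recover is correct} ($M$ is well-behaved), the first $Recover$ of $sp$ returns $TRY$, and $Recover$ returns $CS$ only after a crash that occurred while $p$ was \emph{already} in the CS of $M$ in $sp$, which cannot precede $p$'s first such CS entry. Inspecting $Try$, the only way out of the waiting room toward returning $TRUE$ is to read the value $TRUE$ from $*GO[k]$, which throughout $sp$ is the variable $spin_{p,k,sp}$. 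Fix the time $t_1$ of that read; then $t_1$ precedes $p$'s first CS entry in $sp$.

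Next I would pin down who wrote that $TRUE$. Since $p$ allocates $spin_{p,k,sp}$ and initializes it to $FALSE$ at the start of $sp$, and (by well-formedness) does not reallocate it during $sp$, the value read at $t_1$ was written by some process $r$ executing the signalling step of $Promote$ at a time $t_0 < t_1$ --- this is the only place in the algorithm where $TRUE$ is written to a spin variable. There, $r$ writes $TRUE$ to the address held in the $owner\_go$ component of the value it read from $LOCK\_STATUS$, and it does so only when the $taken$ bit of that value is set. Since $LOCK\_STATUS$ is a single word read atomically, there is a time $t_{-1} \le t_0$ at which $LOCK\_STATUS = (1, o, \&spin_{p,k,sp})$ for some port $o$. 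Taking $t = t_{-1}$, which precedes $p$'s first CS entry in $sp$, all that remains is to show $o = k$.

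For that final step I would use the way $LOCK\_STATUS$ evolves: its $owner$ and $owner\_go$ components change only through successful $CAS$es inside $Promote$, each installing $(q, GO[q])$ for the port $q$ it promotes, with $GO[q]$ read at the CAS (the $Exit$ $CAS$ that clears $taken$ leaves these components intact). Hence the value $\&spin_{p,k,sp}$ sat in the $owner\_go$ component only because some $Promote$ read $GO[q] = \&spin_{p,k,sp}$ for the port $q$ it promoted, at a time before $p$'s first CS entry in $sp$. I would then conclude $q = k$ (hence $o = k$): the entry $GO[q]$ is written only by the process occupying port $q$, only to the address of the spin variable it freshly allocated for its current super-passage, and (inspecting $Exit$) it is cleared by the time that super-passage ends; together with the no-concurrent-super-passages condition for $D$-ported locks (so port $k$ belongs to $p$ for all of $sp$) and the safe-reclamation guarantee (no allocation returns the still-live object $spin_{p,k,sp}$ while $p$ references it during $sp$), a short case analysis on when a process at a port $q \neq k$ could have obtained the address $\&spin_{p,k,sp}$ shows $GO[q] \neq \&spin_{p,k,sp}$ at every time before $p$ first enters the CS in $sp$. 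I expect this case analysis over memory recycling to be the main obstacle --- in particular ruling out that a stale $GO[q]$, $q \neq k$, points to an address later recycled as $\&spin_{p,k,sp}$; the two safe-reclamation consequences recalled in the excerpt (notably that a $TRUE$ written to a completed super-passage's spin variable is never read) and the ordering of operations in $Exit$ (the spin variable is retired and $GO$ is cleared only at the very end) are what make the argument close.
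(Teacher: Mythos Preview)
Your approach is essentially the same as the paper's, just considerably more detailed. The paper's proof is three sentences: it invokes the preceding claim (\cref{spin is true}) to get $spin_{p,k,sp}=TRUE$, observes that the only place a spin variable is set to $TRUE$ is in \emph{Promote} after reading $LOCK\_STATUS=(1,*,\&spin_{p,k,sp})$, and then dispatches the ``owner $=k$'' step in one line: ``from the code, a process that writes $LOCK\_STATUS=(1,*,\&spin_{p,k,sp})$ must actually write $LOCK\_STATUS=(1,k,\&spin_{p,k,sp})$.'' The point is that the $owner$ and $owner\_go$ fields are written together in \emph{Promote} as $(q, GO[q])$, so the question reduces to showing $GO[q]=\&spin_{p,k,sp}$ forces $q=k$; the paper treats this as immediate from the safe-reclamation assumptions stated just before the proofs, rather than as an obstacle requiring an explicit case analysis. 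Your anticipated ``main obstacle'' is therefore real but is exactly what those assumptions are set up to discharge, and the paper does not spell it out further. You also re-derive \cref{spin is true} inline (your first paragraph) instead of citing it.
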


\begin{restatable}{claim}{StatusOnlyChangedByP}\label{CS_STATUS only changed by p}
If process $p$ is in super-passage $sp$ with port $k$, and at some time $t$ during $sp$, $LOCK\_STATUS = (1, k, \&spin_{p, k, sp})$, then $LOCK\_STATUS$ can only be changed by $p$ while executing the \emph{Exit} procedure.
\end{restatable}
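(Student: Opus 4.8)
The plan is to show that, once $LOCK\_STATUS = (1,k,\&spin_{p, k, sp})$ at time $t$, the word keeps this value until the next time some process clears its $taken$ bit inside the \emph{Exit} procedure, and that this process is necessarily $p$. The first step is a routine inspection of the pseudo-code to enumerate every location that writes $LOCK\_STATUS$: there are exactly two, namely (i) the single CAS in $Promote$ that attempts to change $(0,o,os)$ into $(1,q,GO[q])$, which is executed only after the caller has read $LOCK\_STATUS$ with $taken=0$, and (ii) the ``release'' step in \emph{Exit} that clears the $taken$ bit (leaving $owner$ and $owner\_go$ intact), which is executed only after the caller has read $LOCK\_STATUS$ and found its $owner$ field equal to the caller's own port, and is realized as a CAS on the whole $LOCK\_STATUS$ word (as needed for crash-idempotence).

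Next I would fix $t' > t$ as the first time, if any, at which $LOCK\_STATUS$ changes value, so that $LOCK\_STATUS = (1,k,\&spin_{p, k, sp})$ throughout $[t,t']$; in particular $taken=1$ on this interval. Since every $Promote$ CAS that writes $LOCK\_STATUS$ has an expected old value with $taken=0$, no such CAS executed at $t'$ can succeed, so the change at $t'$ must be a release step in \emph{Exit}, performed by some process $q$. Because this release is a CAS on the full $LOCK\_STATUS$ word using the value $q$ read when it verified ownership, and because it does change $LOCK\_STATUS$ out of $(1,k,\&spin_{p, k, sp})$, that expected value must equal $(1,k,\&spin_{p, k, sp})$; hence $q$'s port is $k$. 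Moreover $q$ is active (executing \emph{Exit}) at $t'$, and $t'$ lies inside $sp$, so $q$'s super-passage is concurrent with $sp$; by the well-formedness conditions of constant port usage and no concurrent same-port super-passages (\cref{defn:well-formed}, conditions~\ref{prop:port} and~\ref{prop:sp}), $q=p$. Thus the change at $t'$ is made by $p$ while executing \emph{Exit}, as claimed. The argument is uniform over the ordinary exit flow and the abort flow, in which $p$ invokes \emph{Exit} as a subroutine from its Try section: in both cases $p$ is ``executing the \emph{Exit} procedure'' when it performs the release CAS.

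The step I expect to be the main obstacle is the identification of $q$ with $p$ — ruling out that the release at $t'$ is performed by a process carrying a \emph{stale} ownership view, i.e., one that read a $(1,k,\ast)$-value belonging to an earlier super-passage that also used port $k$. This is exactly what the safe-memory-management facts stated just before the proofs are for: since the release is a CAS on the full $LOCK\_STATUS$ word, a stale read makes it fail because the word now carries a different $owner\_go$ field, and safe reclamation guarantees that $\&spin_{p, k, sp}$ is not aliased to any spin variable of an earlier or concurrent super-passage. I would make these dependencies explicit; everything else (the enumeration of writes, and the observation that $taken=1$ blocks all $Promote$ CASes) follows by direct code inspection.
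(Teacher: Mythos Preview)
Your proposal is correct and follows essentially the same route as the paper's own proof, which is a terse two-sentence argument: first, that by code inspection only the release step in \emph{Exit} invoked with port argument $k$ can change $LOCK\_STATUS$ out of $(1,k,\&spin_{p,k,sp})$ (because the $Promote$ CAS expects $taken=0$); second, that by well-formedness only $p$ can be executing \emph{Exit} with port $k$ during $sp$. You unpack both steps in considerably more detail than the paper does---in particular, your explicit treatment of the stale-ownership scenario and its resolution via the full-word CAS together with safe memory reclamation is something the paper's proof leaves implicit (those facts are stated once in the preamble to \S\ref{w-port proof} and then silently relied upon). One minor caution: you assert that the release is realized as a CAS on the whole word; this is consistent with the algorithm's design but is an implementation detail you are inferring rather than reading from the displayed walk-through, so if you retain that sentence you may want to phrase it as following from the code rather than as an independent assumption.
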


\begin{restatable}{claim}{CSStatusInCS}\label{CS_STATUS in CS}
If process $p$ is in super-passage $sp$ with port $k$ and $p$ is in the CS, then $LOCK\_STATUS = (1, k, \&spin_{p, k, sp})$.
\end{restatable}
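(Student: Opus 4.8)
The plan is to chain together the three preceding claims. First, by \Cref{CS_STATUS before p in CS}, fix a time $t_0$ during $sp$ and strictly before $p$'s first entry to the CS in $sp$, at which $LOCK\_STATUS = (1, k, \&spin_{p, k, sp})$. Let $t$ be the time at which we must evaluate $LOCK\_STATUS$, so $p$ is in the CS in $sp$ at time $t$; since $p$'s first CS entry in $sp$ occurs no later than $t$, we have $t_0 < t$. The claim then reduces to showing that $LOCK\_STATUS$ does not change anywhere in the interval $[t_0, t]$, and hence still equals $(1, k, \&spin_{p, k, sp})$ at time $t$.

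To establish this, I would apply \Cref{CS_STATUS only changed by p}: since $LOCK\_STATUS = (1, k, \&spin_{p, k, sp})$ at $t_0$, every change to $LOCK\_STATUS$ after $t_0$ (while $p$ is still in $sp$) is performed by $p$ itself while executing the \emph{Exit} procedure. It therefore suffices to show that $p$ takes no step of the \emph{Exit} procedure during $[t_0, t]$. For this I would use the coarse structure of a super-passage: within $sp$, process $p$ begins in the Try section, and, by well-behavedness, after each crash \emph{Recover} returns $p$ to the section it crashed in. Hence $p$ passes through the sections of $sp$ in the order: Try (possibly interrupted by crashes), and then either (\emph{Try} returns $FALSE$ on an abort and $sp$ ends) or (CS, then the Exit section, then $sp$ ends). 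In particular, once $p$ has entered the CS in $sp$ it never returns to the Try section, and once it has left the CS for the Exit section it never re-enters the CS.

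Since $p$ is in the CS at time $t$, two conclusions follow. (i) $p$ is not in the Exit section at any moment of $[t_0, t]$: the Exit section follows the CS within $sp$, and $p$ has not yet left the CS by time $t$; so $p$ takes no Exit-section step in $[t_0, t]$. (ii) $p$ does not invoke \emph{Exit} as an abort subroutine during $[t_0, t]$: that call is issued only from within the Try section while $p$ is aborting, which forces $p$'s \emph{Try} to return $FALSE$ and ends $sp$ without $p$ ever entering the CS --- contradicting $p$ being in the CS at time $t$. From (i) and (ii), $p$ executes no step of the \emph{Exit} procedure during $[t_0, t]$; therefore $LOCK\_STATUS$ is unchanged on $[t_0, t]$, which proves the claim.

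I expect the main obstacle to be part (ii), which stems from the fact that the \emph{Exit} method is reused as a subroutine on the abort path of \emph{Try}; consequently ``$p$ takes an \emph{Exit}-procedure step'' is strictly weaker than ``$p$ is in its Exit section'', and one cannot conclude from $p$ being in the CS that no \emph{Exit} code runs. The argument must lean on the fact that entering that subroutine is part of an abort that terminates the current super-passage, together with the crash-recovery steering that keeps $p$ within the section it crashed in. The remaining steps are a routine appeal to \Cref{CS_STATUS before p in CS} and \Cref{CS_STATUS only changed by p}; the one bookkeeping point worth verifying is that the time $t_0$ supplied by \Cref{CS_STATUS before p in CS} indeed lies within $sp$, so that \Cref{CS_STATUS only changed by p} is applicable from $t_0$ onward, which is part of the intended reading of \Cref{CS_STATUS before p in CS}.
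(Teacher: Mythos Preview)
Your approach is essentially the same as the paper's: invoke \Cref{CS_STATUS before p in CS} to get a time $t_0$ with $LOCK\_STATUS=(1,k,\&spin_{p,k,sp})$, then invoke \Cref{CS_STATUS only changed by p} to conclude that only $p$ executing the \emph{Exit} procedure can change it, and finally argue that $p$ takes no \emph{Exit}-procedure step in $[t_0,t]$. The paper's proof compresses your (i) and (ii) into the single sentence ``Since the execution is well-formed, $p$ can only change $LOCK\_STATUS$ in its Exit section,'' whereas you explicitly separate the Exit-section case from the abort-subroutine case; your treatment of (ii) is thus more careful than the paper's, but the underlying argument is the same.
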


\begin{restatable}{lemma}{wPortME}\label{w-port ME}
Algorithm $M$ satisfies mutual exclusion.
\end{restatable}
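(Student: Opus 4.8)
The plan is to argue by contradiction: suppose two processes $p$ and $q$, with ports $k$ and $\ell$ respectively, are simultaneously in the CS at some time $t_0$, in super-passages $sp_p$ and $sp_q$. Since ports are distinct during concurrent super-passages (well-formedness property~\ref{prop:sp}), we have $k \ne \ell$, so $\&spin_{p,k,sp_p} \ne \&spin_{q,\ell,sp_q}$ (distinct ports own distinct spin-variable slots, and within a super-passage a process uses a single freshly-allocated spin variable). By Claim~\ref{CS_STATUS in CS} applied to both processes at time $t_0$, we would need $LOCK\_STATUS = (1, k, \&spin_{p,k,sp_p})$ \emph{and} $LOCK\_STATUS = (1, \ell, \&spin_{q,\ell,sp_q})$ at the same instant, which is impossible since these are different tuples. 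So the core of the proof is really just invoking the already-established invariant Claim~\ref{CS_STATUS in CS}, and the lemma follows almost immediately.

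The only genuine work, therefore, is making sure the supporting claims chain together without circularity. Claim~\ref{CS_STATUS in CS} (that a process in the CS ``pins'' $LOCK\_STATUS$ to its own tuple) is proved from Claim~\ref{CS_STATUS before p in CS} (when $p$ enters the CS, $LOCK\_STATUS$ was equal to $(1,k,\&spin_{p,k,sp_p})$ at some earlier point — this is how $p$'s spin variable got signalled, via the $Promote$ path that sets $GO$), together with Claim~\ref{CS_STATUS only changed by p} (from that point on, only $p$, and only while executing \emph{Exit}, can modify $LOCK\_STATUS$). The latter leans on the safe-memory-management consequence stated in the excerpt: once $LOCK\_STATUS$ holds $(1,k,G)$ with $G$ a live spin-variable address, any competing $Promote$ CAS that read an older value fails because the $owner\_go$ component differs, and no new $Promote$ CAS can succeed while $taken=1$ (the $Promote$ code only attempts to set $taken$ from $0$ to $1$). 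Since $p$ does not execute \emph{Exit} while it is in the CS — it is still between its final \emph{Try} memory operation and its first \emph{Exit} memory operation — $LOCK\_STATUS$ stays fixed at $(1,k,\&spin_{p,k,sp_p})$ throughout $p$'s CS occupancy, which is exactly Claim~\ref{CS_STATUS in CS}.

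I expect the main obstacle to be \emph{not} the mutual-exclusion argument itself but nailing down Claim~\ref{CS_STATUS before p in CS} and Claim~\ref{CS_STATUS only changed by p} carefully, since those must account for the abort subroutine calls to \emph{Exit} from within \emph{Try} and for crash-recovery: a process that crashed in its CS re-enters via \emph{Recover} returning $CS$ (well-behavedness), does not re-run \emph{Try}, and so never re-issues the $Promote$/CAS that could overwrite $LOCK\_STATUS$; and a process aborting in \emph{Try} calls \emph{Exit} with $abort=TRUE$, which by the status claims (Claims~\ref{status in try section}, \ref{status in cs}, \ref{status in exit section}) is distinguishable from the owner's genuine release. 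One has to verify that the $Exit$ method's $Promote$-before-release and the final ``check I am still the owner'' guard (lines~\ref{rme:ExitCheckOwn}--\ref{rme:ExitCheckOwnEnd}) never cause a \emph{non-owner} to clear $taken$, so that the pinning argument is airtight. Once those invariants are in hand, the mutual-exclusion lemma is a two-line corollary as sketched above.
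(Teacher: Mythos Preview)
Your proposal is correct and matches the paper's own proof essentially line for line: the paper also assumes two processes $p_i \ne p_j$ with ports $k_i \ne k_j$ are simultaneously in the CS, invokes Claim~\ref{CS_STATUS in CS} to force $LOCK\_STATUS=(1,k_i,\&spin_{p_i,k_i,sp_i})=(1,k_j,\&spin_{p_j,k_j,sp_j})$, and derives the contradiction $k_i=k_j$. Your additional discussion of how Claims~\ref{CS_STATUS before p in CS} and~\ref{CS_STATUS only changed by p} underpin Claim~\ref{CS_STATUS in CS} is accurate but belongs to those claims' proofs rather than to this lemma.
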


\begin{restatable}{claim}{pointInPromote}\label{point in promote}
If process $p$ completes an execution of \emph{Promote} without crashing in the interval $[t, t']$ and $ACTIVE \ne 0$ throughout $[t, t']$, or if $p$ completes an execution of \emph{Promote(k)} without crashing in the interval $[t, t']$, then there is a time $t_0 \in [t, t']$ at which $LOCK\_STATUS=(1, *, *)$.
\end{restatable}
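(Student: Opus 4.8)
The plan is to do a short case analysis on the first read of $LOCK\_STATUS$ performed inside the \emph{Promote} execution, supported by one code-inspection observation about how $LOCK\_STATUS$ can change. First I would establish this observation: the only steps that modify $LOCK\_STATUS$ are (i) a successful CAS inside \emph{Promote}, which changes $LOCK\_STATUS$ from some value $(0,o,os)$ to $(1,q,GO[q])$ for a port $q$, and (ii) the bit-clear inside \emph{Exit}, which changes $(1,o,os)$ to $(0,o,os)$. Crucially, unlike the Jayanti--Joshi algorithm discussed in \cref{abortable not wait-free}, there is no step taking a $taken=0$ state to another $taken=0$ state. Hence, if $LOCK\_STATUS$ has $taken=0$ at some time $t_1$ and is next modified at some later time $t_3$, then the step at $t_3$ cannot be of type (ii) (which requires $taken=1$ beforehand), so it is a successful \emph{Promote} CAS, and therefore $LOCK\_STATUS=(1,*,*)$ holds immediately after $t_3$.

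Next I would turn to the \emph{Promote} (respectively \emph{Promote(k)}) execution in $[t,t']$ and let $t_1 \in [t,t']$ be the time of its first read of $LOCK\_STATUS$. If $LOCK\_STATUS=(1,*,*)$ at $t_1$ we are done with $t_0=t_1$. Otherwise $LOCK\_STATUS=(0,o,os)$ at $t_1$, and I claim the execution then attempts a CAS of $LOCK\_STATUS$ from $(0,o,os)$ to $(1,q,GO[q])$ for some valid port $q\neq\perp$: in the \emph{Promote(k)} case this is immediate because the fallback target is $k\neq\perp$; in the plain \emph{Promote} case, the hypothesis that $ACTIVE\neq 0$ throughout $[t,t']$ means the value read into the local copy of $ACTIVE$ is nonzero, so $next(o,\cdot)$ returns a valid port. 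Let $t_2\in[t,t']$ be the time of this CAS. If it succeeds, then $LOCK\_STATUS=(1,q,GO[q])=(1,*,*)$ immediately after $t_2$, so $t_0=t_2$ works. If it fails, then by CAS semantics $LOCK\_STATUS\neq(0,o,os)$ at $t_2$, so since $LOCK\_STATUS=(0,o,os)$ at $t_1$ it must have been modified at some first time $t_3\in(t_1,t_2]$; by the observation above, $LOCK\_STATUS=(1,*,*)$ immediately after $t_3$, and $t_3\in[t,t']$, so $t_0=t_3$ works.

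The only real care needed is in the code-inspection step: one must verify that no other line writes $LOCK\_STATUS$ and, in particular, that the only transitions into a $taken=1$ state set $LOCK\_STATUS$ to a full triple $(1,q,GO[q])$, and the only transitions into a $taken=0$ state leave the owner fields untouched (so there is no $taken=0\to taken=0$ transition). One must also confirm that in both clauses of the hypothesis the promotion target $q$ is a valid port, so that a CAS is genuinely attempted whenever the first read sees $taken=0$. Neither the safe-memory-management assumption nor the secondary-goal re-read of $LOCK\_STATUS$ at the end of \emph{Promote} is needed for this claim; the argument rests only on plain CAS semantics and the transition structure of $LOCK\_STATUS$.
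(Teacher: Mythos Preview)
Your proof is correct and follows essentially the same approach as the paper's: a case analysis on the first read of $LOCK\_STATUS$, followed by a CAS whose success or failure both force $LOCK\_STATUS=(1,*,*)$ at some point in $[t,t']$. Your version is slightly more explicit in justifying the failed-CAS case via the transition structure of $LOCK\_STATUS$ (no $taken=0\to taken=0$ step exists), whereas the paper simply asserts that a failed CAS is ``due to another process changing $LOCK\_STATUS$ to $(1,*,*)$''; the underlying argument is the same.
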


\begin{restatable}{claim}{CSStatusOneBeforeWaitingRoom}\label{CS_STATUS 1 before waiting room}
If process $p$ is in super-passage $sp$ with port k, $p$ sets the $k$-th bit in $ACTIVE$ to $1$ at time $t$, and subsequently reaches the waiting room at time $t'>t$, then there is a time $t_0 \in [t, t']$ at which $LOCK\_STATUS=(1, *, *)$.
\end{restatable}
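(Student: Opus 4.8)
The plan is to reduce the claim to \Cref{point in promote}. Concretely, I will establish two facts: (i) throughout $[t,t']$ the $k$-th bit of $ACTIVE$ stays set, so $ACTIVE \neq 0$ over the whole interval; and (ii) inside $[t,t']$ process $p$ performs a complete, crash-free invocation of $Promote$, namely the call on line~\ref{rme:TryPromote}, which $p$ executes after the bit-flip of line~\ref{rme:Flip} and immediately before entering the waiting-room loop. Given (i) and (ii), \Cref{point in promote} applied to the interval $[s,s']$ of that $Promote$ invocation --- a sub-interval of $[t,t']$ over which $ACTIVE \neq 0$ --- yields a time $t_0 \in [s,s'] \subseteq [t,t']$ at which $LOCK\_STATUS = (1,*,*)$, which is the desired conclusion.

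For (i): the only code that clears a bit of $ACTIVE$ is the clear block of \emph{Exit} (lines~\ref{rme:ExitClear}--\ref{rme:ExitClearEnd}), and there a process clears only the bit of its own port. By the no-concurrent-super-passages property, no process other than $p$ is in a super-passage with port $k$ while $p$ is in $sp$, so only $p$ could reset bit $k$ during $[t,t']$. But during $[t,t']$ process $p$ remains in the portion of its \emph{Try} section preceding the waiting-room loop: at time $t$ it executes the FAA of line~\ref{rme:Flip}, and it reaches the loop only at $t'$; in particular it has not yet tested the abort signal (which happens only inside the loop), hence has not invoked \emph{Exit} as a subroutine, and it has not entered the CS or its Exit section. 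If $p$ crashes during $[t,t']$, this is a crash during \emph{Try}, so by well-behavedness (\Cref{recover is correct}) \emph{Recover} returns $TRY$ and steers $p$ back into \emph{Try}; thus $p$ never executes the \emph{Exit} clear block within $[t,t']$, and bit $k$ remains $1$.

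For (ii): after the FAA of line~\ref{rme:Flip} (at time $t$), the only remaining steps of \emph{Try} before the waiting-room loop are the $Promote$ call of line~\ref{rme:TryPromote} and then the loop itself. If $p$ crashes before completing that $Promote$ call, it recovers into \emph{Try}, re-runs the spin-variable check of line~\ref{rme:checkSpin} (a no-op, since the spin variable already exists) and the flip-check of line~\ref{rme:Flip} (a no-op, since bit $k$ is already set by (i)), and again reaches line~\ref{rme:TryPromote}. Therefore the last execution of line~\ref{rme:TryPromote} before $t'$ is a complete, crash-free $Promote$ invocation occurring in an interval $[s,s'] \subseteq [t,t']$, which is exactly what (ii) asserts; note that we only use the first disjunct of \Cref{point in promote}'s hypothesis, so it does not matter whether line~\ref{rme:TryPromote} passes $\perp$ or $k$ as argument.

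The routine part is the reduction to \Cref{point in promote}; the only place that needs care is the interaction with crashes --- verifying that bit $k$ cannot be cleared and that a re-executed $Promote$ call still completes inside $[t,t']$ --- and both of these rest on the no-concurrent-super-passages property together with well-behavedness of \emph{Recover}.
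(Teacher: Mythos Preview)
Your proposal is correct and follows essentially the same approach as the paper: show that the $k$-th bit of $ACTIVE$ remains set (because only $p$ can clear it, and only via \emph{Exit}, which it has not reached), note that $p$ completes a crash-free $Promote$ call between the FAA and the waiting room, and apply \Cref{point in promote}. Your treatment is more explicit about crash recovery than the paper's, but the argument is the same.
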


\begin{restatable}{claim}{spinEventuallyTrue}\label{GO[k] eventually true}
If $LOCK\_STATUS$ changes to $(1, k, \&spin_{p, k, sp})$ then there is some process $p$ in its super-passage with port $k$ and either (1) $spin_{p, k, sp}$ eventually becomes $TRUE$, (2) $p$ aborts, or (3) there are infinitely many crashes in the execution.
\end{restatable}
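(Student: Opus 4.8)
The plan is to argue by contradiction. Suppose $LOCK\_STATUS$ changes to $(1,k,\&spin_{p,k,sp})$ at some time $t_0$, but $spin_{p,k,sp}$ never becomes $TRUE$, $p$ does not abort $sp$, and the execution has only finitely many crashes. First I would identify $p$. A change that sets the $taken$ bit to $1$ can only be produced by the single CAS in \emph{Promote} (clearing $taken$ in \emph{Exit} sets it to $0$), so the step at $t_0$ is such a CAS, performed by some process $r$; it writes the value $\&spin_{p,k,sp}$ that $r$ obtained by reading $GO[k]$, and $GO[k]$ holds this address only while the process using port $k$ --- unique at any time by the no-concurrent-super-passages condition --- is inside $sp$ and has not yet cleared $GO[k]$ in \emph{Exit}. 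Together with the safe memory-management assumption (which precludes $LOCK\_STATUS$ holding a stale copy of the pointer), this shows $p$ is in super-passage $sp$ with port $k$ at $t_0$, giving the first part of the claim.

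Next I would show that under these assumptions $p$ never enters the CS in $sp$: its first CS entry in $sp$ would require \emph{Try} to return $TRUE$, i.e., $p$ to leave the waiting-room loop with $spin_{p,k,sp}$ set to $TRUE$, which never happens; any later CS entry (via \emph{Recover} returning $CS$) presupposes an earlier one. Hence $p$ is never in its Exit section and, since $p$ does not abort, never invokes \emph{Exit} at all (not even as the abort-flow subroutine). By \cref{CS_STATUS only changed by p}, nobody can then change $LOCK\_STATUS$ after $t_0$, so $LOCK\_STATUS = (1,k,\&spin_{p,k,sp})$ at all times $\ge t_0$.

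The crux is the observation that every completed execution of \emph{Promote} ends with a ``signaling'' step that reads $LOCK\_STATUS$ and, if the lock is owned, writes $TRUE$ into the owner's spin variable; given the invariant just established, any \emph{Promote} completed at a time $\ge t_0$ writes $TRUE$ into $spin_{p,k,sp}$, contradicting the assumption. So it suffices to exhibit a process that completes a \emph{Promote} at time $\ge t_0$, and I would track $r$ for this. Right after its $t_0$ CAS, $r$'s only remaining step in that \emph{Promote} is the signaling step, so $r$ cannot complete this \emph{Promote} and must therefore crash (there are no loops in \emph{Promote}) at some time $> t_0$. Since \emph{Promote} is invoked only from within \emph{Try} (line~\ref{rme:TryPromote}) or \emph{Exit} (line~\ref{rme:ExitPromote1}), by \cref{recover is correct} \emph{Recover} steers $r$ back to the beginning of whichever section it crashed in, and re-executing that section from the top forces $r$ through a fresh \emph{Promote} call before $r$ can return or leave. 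Iterating this and using finiteness of crashes, $r$ eventually completes some \emph{Promote} at a time $\ge t_0$ --- the desired contradiction. (If $r=p$, the same reasoning applies: $p$ would signal itself, making $spin_{p,k,sp}$ become $TRUE$.)

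The main obstacle I anticipate is the bookkeeping in that last step: one must verify that, regardless of where inside \emph{Try} or \emph{Exit} the crash occurs, \emph{Recover} routes $r$ to a point from which its execution unavoidably passes through a complete \emph{Promote} --- using that the \emph{Promote} calls at lines~\ref{rme:TryPromote} and~\ref{rme:ExitPromote1} precede any return from or hand-off within the section --- and that finiteness of crashes then forces one of these calls to actually complete. A secondary subtlety, handled in the first paragraph, is pinning down the identity and ``still in $sp$'' status of $p$ at $t_0$, which rests on the safe memory-management assumption ruling out ABA on $LOCK\_STATUS$.
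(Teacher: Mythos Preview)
Your proposal is correct and follows essentially the same line as the paper: both track the process that performed the CAS on $LOCK\_STATUS$ (your $r$, the paper's $q$), argue via finiteness of crashes and well-formedness that it eventually completes some \emph{Promote}, and rely on \cref{CS_STATUS only changed by p} to control what that completed \emph{Promote} sees. The only stylistic difference is that you argue by contradiction (ruling out in advance that $LOCK\_STATUS$ ever changes, so the completed \emph{Promote} must signal), whereas the paper does a direct case split at the end of the completed \emph{Promote}: either $LOCK\_STATUS$ is unchanged and $q$ signals, or it changed, which by \cref{CS_STATUS only changed by p} means $p$ ran \emph{Exit}, hence $p$ aborted or exited the CS (and in the latter case \cref{spin is true} already gives $spin_{p,k,sp}=TRUE$).
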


\begin{restatable}{claim}{StatusChangesFromOneToZero}\label{CS_STATUS changes from 1 to 0}
If $LOCK\_STATUS$ changes to $(1, *, *)$ at time $t$, and if any process that enters the CS after $t$ eventually exits it, then there is a time $t'>t$ at which $LOCK\_STATUS$ changes to $(0, *, *)$ or there are infinitely many crashes in the execution.
\end{restatable}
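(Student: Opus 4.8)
The plan is a case analysis driven by \cref{GO[k] eventually true}. If there are infinitely many crashes in the execution, the conclusion holds, so assume there are only finitely many. Suppose $LOCK\_STATUS$ changes to a value with $taken=1$ at time $t$; by \cref{GO[k] eventually true} this value is $(1,k,\&spin_{p,k,sp})$ for some process $p$ in a super-passage $sp$ with port $k$, and, since crashes are finite, we are in one of two cases: (1) $spin_{p,k,sp}$ eventually becomes $TRUE$, or (2) $p$ aborts in $sp$. In both cases I will exhibit a time $t'>t$ at which $p$, while executing \emph{Exit}, performs the step that clears the $taken$ bit of $LOCK\_STATUS$, and argue that at that moment $LOCK\_STATUS$ still equals $(1,k,\&spin_{p,k,sp})$; that step then changes $LOCK\_STATUS$ to $(0,k,\&spin_{p,k,sp})$, which is the desired $t'$.

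The core of the argument is the invariant that, from time $t$ up to (but not including) the first time $p$ executes a $taken$-clearing step, $LOCK\_STATUS$ stays equal to $(1,k,\&spin_{p,k,sp})$. By \cref{CS_STATUS only changed by p}, as long as $LOCK\_STATUS$ holds this value it can only be changed by $p$ while $p$ executes \emph{Exit}, and inside \emph{Exit} the only operations that write $LOCK\_STATUS$ are the CAS inside the \emph{Promote} calls and the clearing step. A short induction on the changes to $LOCK\_STATUS$ after $t$ then gives the invariant: as long as $LOCK\_STATUS=(1,k,\&spin_{p,k,sp})$ and $p$ has not yet cleared $taken$, the next change is a write by $p$ inside \emph{Exit}; it cannot be caused by a \emph{Promote} CAS, which attempts to move $LOCK\_STATUS$ from an unowned state $(0,\cdot,\cdot)$ to an owned one and hence cannot succeed while $taken=1$ (using here also the safe-memory property that a \emph{Promote} CAS fails if $LOCK\_STATUS$ changed since it was last read); so it is the clearing step, which sets $LOCK\_STATUS$ to $(0,k,\&spin_{p,k,sp})$.

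It remains to show that $p$ indeed reaches a clearing step. In case (1), $spin_{p,k,sp}$ is only ever set to $TRUE$, never reset, until $p$ retires it at the end of \emph{Exit}; so by fairness and finitely many crashes $p$ eventually reads $TRUE$, completes \emph{Try} returning $TRUE$, and enters the CS at some time after $t$. By the hypothesis that every process entering the CS after $t$ eventually exits it, together with CS re-entry and finitely many crashes, $p$ eventually invokes \emph{Exit} with $abort=FALSE$ in a crash-free passage. In case (2), $p$ invokes \emph{Exit} (with $abort=TRUE$) directly from the waiting room. In every \emph{Exit} invocation $p$ executes the \emph{Promote} call before the ownership test and then reaches the test; by the invariant $LOCK\_STATUS=(1,k,\&spin_{p,k,sp})$ there, so $p$ is recognized as the owner and executes the clearing step — unless it crashes first, in which case, since Algorithm $M$ is well-behaved (\cref{recover is correct}) and $p$ crashed during \emph{Exit}, \emph{Recover} steers $p$ to re-invoke \emph{Exit}, and since crashes are finite $p$ eventually executes the clearing step without crashing. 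Since $p$ is in the waiting room at time $t$, this step occurs strictly after $t$, giving the desired $t'$.

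The main obstacle is the invariant of the second paragraph: one must rule out every way in which $LOCK\_STATUS$ could be perturbed between time $t$ and the clearing step — in particular $p$'s own \emph{Promote} calls, whether issued inside \emph{Exit} or inside a \emph{Try} section re-entered after a crash — so that the ownership test inside \emph{Exit} is guaranteed to pass. This leans on \cref{CS_STATUS only changed by p} and the safe-memory properties; the remainder is routine liveness bookkeeping with fairness, the well-behavedness of \emph{Recover}, and the finitely-many-crashes assumption.
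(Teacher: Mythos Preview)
Your overall route matches the paper's: assume finitely many crashes, invoke \cref{GO[k] eventually true} to split into the two cases, and argue that in either case $p$ eventually performs the $taken$-clearing write in \emph{Exit}, which flips $LOCK\_STATUS$ to $(0,k,\&spin_{p,k,sp})$. The invariant paragraph (via \cref{CS_STATUS only changed by p}) is a more explicit version of the same observation the paper uses implicitly.

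The gap is in the final step, where you must place the clearing write strictly after $t$. You justify this with ``since $p$ is in the waiting room at time $t$'', but nothing you cite yields that; \cref{GO[k] eventually true} only tells you $p$ is somewhere in super-passage $sp$. The assertion is in fact false in general: $p$ may already be executing \emph{Exit} at time $t$ (via the abort flow) anywhere up to the first \emph{Promote}. More to the point, your argument has not ruled out the scenario in which $p$ has already passed the clearing write before $t$ and will now finish \emph{Exit} without performing another one; in that scenario your invariant is vacuous and no $t'>t$ is produced. The paper closes this hole with \cref{CS_STATUS 1 p yet to finish exit}, which shows that whenever $LOCK\_STATUS$ becomes $(1,k,\&spin_{p,k,sp})$, process $p$ has not yet completed the first \emph{Promote} in \emph{Exit} in $sp$ --- hence has not reached the clearing write. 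That claim is exactly the missing ingredient: replace your waiting-room assertion with an appeal to \cref{CS_STATUS 1 p yet to finish exit}, and your argument goes through (and in fact collapses to the paper's shorter proof).
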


\begin{restatable}{claim}{StatusChangesFromZeroToOne}\label{CS_STATUS changes from 0 to 1}
If at some point of the execution $LOCK\_STATUS=(0, *, *)$, and $ACTIVE \ne 0$, then $LOCK\_STATUS$ will change to $(1, *, *)$ or there are infinitely many crashes in the execution.
\end{restatable}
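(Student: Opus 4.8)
The plan is to argue by contradiction. Assume the execution has finitely many crashes, $LOCK\_STATUS = (0,o,os)$ and $ACTIVE \neq 0$ at some time $t_0$, but $LOCK\_STATUS$ never becomes $(1,*,*)$ at any time $\ge t_0$. The first step is a \emph{freezing} claim: $LOCK\_STATUS = (0,o,os)$ at every time $t \ge t_0$. This holds because the only writes to $LOCK\_STATUS$ are the successful $CAS$ in $Promote$, which turns a $(0,\cdot,\cdot)$ value into a $(1,\cdot,\cdot)$ value, and the ``clear the $taken$ bit'' step in $Exit$, which is performed only by the current lock owner, i.e., only when $taken=1$; under our assumption neither can occur after $t_0$. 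In particular, any $CAS$ on $LOCK\_STATUS$ completed at a time $\ge t_0$ with expected value $(0,o,os)$ succeeds. I will also let $t_{-1} \le t_0$ be the time of the last write to $LOCK\_STATUS$ preceding $t_0$, if one exists; since at $t_0$ the $taken$ bit is $0$, that write is a clear-$taken$ step by some process $r$ (the then-owner, using port $o$ and spin variable at address $os$) that leaves $LOCK\_STATUS = (0,o,os)$, and $r$ is still inside $Exit$ afterwards; moreover $LOCK\_STATUS = (0,o,os)$ at every time $t > t_{-1}$. If no such write exists, $LOCK\_STATUS$ is never $(1,*,*)$ during $[0,t_0]$.

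Second, I would identify who owns a set bit of $ACTIVE$ at $t_0$. By \cref{CS_STATUS in CS} and \cref{CS_STATUS only changed by p}, no process is in the CS at $t_0$, and no process is in the portion of $Exit$ preceding its clear-$taken$ step (either would force $LOCK\_STATUS = (1,*,*)$ at $t_0$); and a process that has reached or passed the clear-$taken$ step, or whose super-passage has ended, has already cleared its $ACTIVE$ bit (the clear-$ACTIVE$ step precedes clear-$taken$ in $Exit$). Hence a port $k$ with $ACTIVE_k=1$ at $t_0$ belongs to a process $q$ that is in its $Try$ section, and by fairness $q$ keeps taking steps. If $q$ has not yet entered its waiting room at $t_0$ --- so it is at, before, or re-executing (after a crash) the $Try$ call to $Promote$, or it has received an abort signal and is re-executing $Exit$'s call to $Promote(k)$ --- then, as there are finitely many crashes and $Recover$ routes $q$ back to the same call, $q$ eventually completes that $Promote$ (or $Promote(k)$) call without crashing at a time $\ge t_0$, with $ACTIVE$ non-empty throughout (bit $k$ is set). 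By \cref{point in promote} this makes $LOCK\_STATUS = (1,*,*)$ at some time $\ge t_0$, contradicting freezing. So I may assume $q$ is spinning in its waiting room at $t_0$.

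From the waiting room $q$ can never become the lock owner, since that requires $LOCK\_STATUS = (1,k,GO[q]) = (1,*,*)$ at a time $\ge t_0$; hence, by \cref{CS_STATUS in CS}, $q$ never enters the CS. So $q$ either aborts or spins forever. If $q$ aborts, it invokes $Exit$ as a subroutine, which --- after finitely many crash-and-retry iterations, each routed back by $Recover$ --- completes a call to $Promote(k)$ without crashing at a time $\ge t_0$; by \cref{point in promote} this again forces $LOCK\_STATUS = (1,*,*)$, a contradiction. So $q$ spins forever, whence $ACTIVE_k = 1$ at every time $\ge \tau$, where $\tau \le t_0$ is the time $q$ set bit $k$. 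If $\tau > t_{-1}$, then the $Try$ call to $Promote$ that $q$ performs right after $\tau$ reads $LOCK\_STATUS = (0,o,os)$ and $CAS$es it to $(1,*,*)$ at a time $> t_{-1}$, impossible. So $\tau \le t_{-1}$, and thus $ACTIVE \neq 0$ at every time $\ge t_{-1}$. If $t_{-1}$ exists: after $t_{-1}$, $r$ still has to perform its final $Promote(\perp)$ in $Exit$, re-executing it after each crash ($Recover$ returns $EXIT$); after finitely many crashes $r$ completes such a call without crashing at a time $t'' \ge t_{-1}$, where it reads $LOCK\_STATUS = (0,o,os)$ and a non-empty $ACTIVE$, selects a waiting process $q'$, and successfully $CAS$es $LOCK\_STATUS$ to $(1,q',GO[q'])$ --- making $LOCK\_STATUS = (1,*,*)$ at $t'' \ge t_{-1}$, which contradicts the choice of $t_{-1}$ if $t'' \le t_0$ and freezing if $t'' > t_0$. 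If $t_{-1}$ does not exist, then $LOCK\_STATUS$ is never $(1,*,*)$ up to $t_0$, yet $q$ reached its waiting room by $t_0$ after setting bit $k$, contradicting \cref{CS_STATUS 1 before waiting room}. In all cases we reach a contradiction, so $LOCK\_STATUS$ changes to $(1,*,*)$ after $t_0$ unless there are infinitely many crashes.

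The step I expect to be the main obstacle is the ``$q$ spins forever'' sub-case: excluding a ``stuck'' configuration in which every waiting process spins forever and no process ever re-enters $Promote$. The key is to exhibit a process --- the last one to release the lock, $r$ --- that the code obligates to attempt a hand-off, argue its $Promote(\perp)$ eventually completes despite crash-and-retry, and show that at that moment $ACTIVE$ is still non-empty (because the forever-spinning $q$ never clears its bit) and the hand-off $CAS$ succeeds (because $LOCK\_STATUS$ has been frozen since $t_{-1}$). The accompanying bookkeeping --- verifying that after every crash $Recover$ returns a control value that re-enters the relevant $Promote$ or $Promote(k)$ call, and using finiteness of crashes to get an eventual crash-free completion --- is the other place where care is needed.
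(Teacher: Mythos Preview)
Your proof is correct, but it takes a substantially more direct route than the paper. The paper first proves two auxiliary claims: (i) if $LOCK\_STATUS$ \emph{changes to} $(0,*,*)$ while $ACTIVE \neq 0$, then it eventually becomes $(1,*,*)$ (\cref{CS_STATUS changed to 0 then changes back}); and (ii) if $ACTIVE$ \emph{changes} from $0$ to non-zero while $LOCK\_STATUS = (0,*,*)$, then $LOCK\_STATUS$ eventually becomes $(1,*,*)$ (\cref{LOCK changes then CS_STATUS changes to 1}). Both are immediate from \cref{point in promote}, since in each case some identifiable process is obligated to complete a $Promote$ call with $ACTIVE \neq 0$ throughout. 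The present claim is then a two-line case split: look at the most recent time $t$ at which $ACTIVE$ became non-zero, and invoke (ii) if $LOCK\_STATUS = (0,*,*)$ at $t$, or else invoke (i) at the moment $LOCK\_STATUS$ subsequently changed to $(0,*,*)$. Your argument instead runs a single monolithic contradiction proof---establishing freezing, locating a process $q$ with a set bit, and tracing the last releaser $r$ at $t_{-1}$ to exhibit the obligatory $Promote$. This works and is self-contained, but it re-derives inline what the paper factors out: your analysis of $r$ after $t_{-1}$ is essentially the content of (i), and your appeal to \cref{CS_STATUS 1 before waiting room} when $t_{-1}$ does not exist is essentially (ii). The paper's decomposition is both shorter here and pays off again in the starvation-freedom argument (\cref{w-port SF}), which repeatedly alternates between these two auxiliary claims.
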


\begin{restatable}{claim}{atMostWTimes}\label{at most w times}
If a process $p$ is in its super-passage $sp$ with port $k$ and is in its waiting room, and $p$ does not abort, then $LOCK\_STATUS$ can change to $(1, *, *)$ at most $W$ times before changing to $(1, k, \&spin_{p, k, sp})$.
\end{restatable}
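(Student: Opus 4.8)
The plan is a round-robin fairness argument, with a potential function measuring the ``distance'' (going upward modulo $W$ from the current owner's port) to $p$'s port $k$.

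First I would pin down the time anchor. Let $t_0$ be the time at which $p$ executes the FAA that sets bit $k$ of $ACTIVE$ in $sp$ (lines~\ref{rme:Flip}--\ref{rme:FlipEnd}), and let $\tau$ be the first time after $t_0$ at which $LOCK\_STATUS$ is set to $(1,k,\&spin_{p,k,sp})$ (if no such time exists the claim is vacuous). I would show that on $[t_0,\tau)$, bit $k$ of $ACTIVE$ is continuously $1$, $GO[k]$ continuously equals $\&spin_{p,k,sp}$, and $p$ is continuously in its waiting room. This uses: only $p$ ever writes bit $k$ of $ACTIVE$ or $GO[k]$; after a crash $p$ re-enters \emph{Try} (well-behavedness, \cref{recover is correct}) and then skips re-flipping its bit and re-allocating its spin variable since both already exist; $p$ does not abort; $p$ allocated $spin_{p,k,sp}$ before $t_0$ and retires it (and overwrites $GO[k]$) only in \emph{Exit} of $sp$; and $p$ leaves the waiting room only by reading $TRUE$ from its spin variable and entering the CS, which by \cref{CS_STATUS before p in CS} cannot happen before $LOCK\_STATUS$ has been $(1,k,\&spin_{p,k,sp})$, and since (by the safe-memory facts in the text) bit $k$ is $0$ just before $t_0$, that first occurrence of $(1,k,\&spin_{p,k,sp})$ is no earlier than $t_0$, i.e.\ at $\tau$.

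Next I would enumerate the times $t_1<t_2<\cdots$ after $t_0$ at which $LOCK\_STATUS$ changes to a value of the form $(1,*,*)$, writing $(1,q_i,G_i)$ for the value written at $t_i$. Two structural facts drive the argument: (i) the CAS inside \emph{Promote} changes $LOCK\_STATUS$ only from some $(0,o,og)$ to $(1,\;next(o,a),\;GO[next(o,a)])$, where $a$ is the $ACTIVE$ value read in that call --- the $Promote(k)$ branch is never invoked here, since $p$ does not abort and, by well-formedness, no other concurrent super-passage uses port $k$; and (ii) between consecutive changes to $(1,*,*)$ the owner field of $LOCK\_STATUS$ stays equal to $q_i$: right after $t_i$ it is $(1,q_i,G_i)$, it becomes $(0,q_i,G_i)$ only when $q_i$ clears $taken$ in \emph{Exit} (checking, using \cref{CS_STATUS only changed by p}, that every crash--recovery interleaving of $q_i$'s \emph{Exit} preserves the owner and go-pointer fields), and the algorithm has no $(0,*,*)\to(0,*,*)$ transitions. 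Hence $q_{i+1}=next(q_i,a_{i+1})$ with $a_{i+1}$ read strictly between $t_i$ and $t_{i+1}$. For $i\ge 1$ with $q_i\ne k$, this read time lies in $[t_0,\tau)$ (using that each such $t_{i+1}$ is $\le\tau$, which I would get from the first paragraph together with the observation that whenever $q_j=k$ one automatically has $G_j=\&spin_{p,k,sp}$, so ``first $q_j=k$'' coincides with $\tau$). Therefore bit $k$ of $ACTIVE$ is $1$ in $a_{i+1}$. Setting $d_i:=(k-q_i)\bmod W\in\{0,\dots,W-1\}$, the scan from $q_i$ upward hits bit $k$ at distance at most $d_i\le W-1<W$, so it cannot wrap back to $q_i$, giving $q_{i+1}=(q_i+s)\bmod W$ for some $1\le s\le d_i$, i.e.\ $d_{i+1}=d_i-s<d_i$. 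Since $d_1\le W-1$ and $d_1>d_2>\cdots$ while $d_i\ge 1$, there is $m\le d_1+1\le W$ with $q_m=k$ and $q_j\ne k$ for all $j<m$; and $q_m=k$ forces bit $k$ set in $a_m$, hence the subsequent read of $GO[k]$ in that \emph{Promote} call occurs in $[t_0,t_m)\subseteq[t_0,\tau)$, so $G_m=\&spin_{p,k,sp}$ and $t_m=\tau$. Thus at most $m-1\le W-1\le W$ changes to $(1,*,*)$ precede the change to $(1,k,\&spin_{p,k,sp})$.

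The main obstacle I anticipate is fact (ii): showing that the owner and go-pointer fields of $LOCK\_STATUS$ are frozen between two consecutive $(1,*,*)$ states despite arbitrary crash--recovery interleavings of the current owner inside \emph{Exit}. A closely related subtlety is ruling out a ``self-promotion'' $q_{i+1}=q_i$ (a promoter reinstating the current owner, possibly with a fresh go-pointer after that owner began a new super-passage on the same port): this is exactly why it is essential that bit $k$ is set in every snapshot $a_{i+1}$ for $i\ge 1$ with $q_i\ne k$, which prevents $next$ from wrapping all the way around to $q_i$. The first change $t_1$ may legitimately use a stale $ACTIVE$ snapshot read before $t_0$, so $q_1$ (hence $d_1$) is essentially arbitrary in $\{0,\dots,W-1\}$; the argument tolerates this because the potential only needs to start bounded and decrease from $t_1$ onward.
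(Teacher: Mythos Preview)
Your proposal is correct and uses essentially the same round-robin potential argument as the paper's proof: define the distance $(k-\text{owner})\bmod W$ and show it strictly decreases at each transition of $LOCK\_STATUS$ to $(1,*,*)$ because the promoter's $ACTIVE$ snapshot has bit $k$ set, bounding the number of such transitions by $W$. Your version is considerably more careful than the paper's terse proof --- you explicitly establish owner-field persistence across the $(1,*,*)\to(0,*,*)$ step, rule out the $Promote(j)$ self-promotion branch, and handle the possibly-stale $ACTIVE$ snapshot at the first transition --- all points the paper elides.
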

\begin{proof}
We define the \emph{distance of the $j$-th bit from the $i$-th bit}, denoted $d(i, j)$, as follows.
If $j>i$ then $d(i,j) = j-i$.  If $i>j$ then $d(i,j) = W-i+j$.  Informally, $d(i,j)$ is the number of times we count up
from $i$, modulo $W$, to obtain $j$.  We number the bits in $ACTIVE$ from $1$ to $W$ such that the least significant bit is numbered $1$ and the most significant bit numbered $W$. Assume that $p$ is now in its waiting room in super-passage $sp$ with port $k$, and that $LOCK\_STATUS$ is  $(1, o, os)$. Assume now that $LOCK\_STATUS$ changes again to $(1, o', os')$. The process that successfully CASed $LOCK\_STATUS$ previously
reads $(1, o, os)$ from it, and then reads $a=ACTIVE$, implying that it sees the $k$-th bit set in $a$. Let $o'=next(o, a)$ be the next owner of the lock. It follows from the definition of $d(o,k)$ that either $o'=k$ or that $d(o', k) < d(o, k)$. Since the distance of any bit $i$ from any bit $j$ is at most $W$ and at least 0, the aforementioned change can happen at most $W$ times before $next()$ returns $k$.
\end{proof}

\begin{lemma}\label{w-port SF}
Algorithm $M$ satisfies starvation-freedom.
\end{lemma}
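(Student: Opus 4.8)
The plan is to argue by contradiction. Suppose $p$ is in a super-passage $sp$ with port $k$, does not receive an abort signal in $sp$, the execution contains only finitely many crashes, every process that enters the CS eventually leaves it, and yet $p$ never enters the CS in $sp$. Since $sp$ ends only when $p$ completes \emph{Exit} --- which $p$ cannot do without first being in the CS (it does not abort, and well-formedness forbids invoking \emph{Exit} otherwise) --- $sp$ never ends, so $p$ stays in $sp$ forever and, by fairness, keeps taking steps. Because there are finitely many crashes, $p$ eventually runs a crash-free passage of $sp$; in it $p$ remains in its Try section (again by well-formedness, as it never enters the CS and never aborts), and since the code preceding the waiting room (allocating a spin variable, the conditional flip of $ACTIVE[k]$, and one \emph{Promote} call) is finite, $p$ reaches its waiting room at some time $t'$ and then stays there forever. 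Note that from $t'$ on $ACTIVE[k]=1$ --- the bit was set before $t'$ and is cleared only in \emph{Exit}, which $p$ never reaches --- so $ACTIVE\ne0$ at all times after $t'$.

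Next I would show that $LOCK\_STATUS$ changes to the form $(1,*,*)$ infinitely often after $t'$. At any time after $t'$, $LOCK\_STATUS$ has the form $(0,*,*)$ or $(1,*,*)$. Since $ACTIVE\ne0$ throughout this period and every later CS-occupant eventually exits, \cref{CS_STATUS changes from 1 to 0} and \cref{CS_STATUS changes from 0 to 1} (whose ``infinitely many crashes'' alternatives are ruled out by hypothesis) can be chained: from any time at which $LOCK\_STATUS$ is $(1,*,*)$ it later changes to $(0,*,*)$, and from there it later changes back to $(1,*,*)$. Iterating yields changes of $LOCK\_STATUS$ to $(1,*,*)$ at arbitrarily late times, i.e., infinitely many such changes, all occurring while $p$ is parked in its waiting room and does not abort.

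Now I would invoke \cref{at most w times}: while $p$ is in its waiting room and does not abort, $LOCK\_STATUS$ changes to $(1,*,*)$ at most $W$ times before it changes to $(1,k,\&spin_{p,k,sp})$. Since after $t'$ it changes to $(1,*,*)$ more than $W$ times, it must at some point change to $(1,k,\&spin_{p,k,sp})$. By \cref{GO[k] eventually true}, applied with this $p$ (the owner address $\&spin_{p,k,sp}$, together with memory safety and the no-concurrent-super-passages property, identifies $p$ and $sp$), and since $p$ does not abort and crashes only finitely often, $spin_{p,k,sp}$ eventually becomes $TRUE$. As $p$ spins on exactly this variable and takes infinitely many steps, it reads $TRUE$, exits the waiting room, and enters the CS --- contradicting the assumption. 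Hence $p$ enters the CS in $sp$.

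I expect the crux to be the middle step --- proving that $LOCK\_STATUS$ returns to the form $(1,*,*)$ infinitely often. This is where the chaining of \cref{CS_STATUS changes from 1 to 0} and \cref{CS_STATUS changes from 0 to 1} must be set up with care: one must check their side conditions at every iteration ($ACTIVE\ne0$, which rests on the earlier observation that $p$'s bit stays set; that every CS-occupant eventually leaves; and that the ``infinitely many crashes'' clause is excluded by the finite-crash hypothesis), and one must confirm that the entire argument lives after $t'$, while $p$ is still in its waiting room, so that \cref{at most w times} applies to convert ``infinitely many $(1,*,*)$-changes'' into ``a change to $(1,k,\&spin_{p,k,sp})$''.
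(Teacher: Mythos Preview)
Your proposal is correct and follows essentially the same approach as the paper: reach the waiting room, observe that $p$'s bit keeps $ACTIVE\ne0$, alternate \cref{CS_STATUS changes from 1 to 0} and \cref{CS_STATUS changes from 0 to 1} to get repeated $(1,*,*)$ transitions, then apply \cref{at most w times} and \cref{GO[k] eventually true}. The only cosmetic differences are that the paper argues directly rather than by contradiction and kicks off the alternation using \cref{CS_STATUS 1 before waiting room}, whereas you start the chain from whichever of $(0,*,*)$ or $(1,*,*)$ holds at $t'$; both are fine.
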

\begin{proof}
Assume there are finitely many crash steps in the execution.  Assume process $p$ is in a super-passage $sp$ with port $k$, in which $p$ does not abort.
Since there are finitely many crashes, then $p$ eventually reaches the waiting room in $sp$.
From~\cref{CS_STATUS 1 before waiting room}, there is a time $t$ before $p$ reaches the waiting room at which $LOCK\_STATUS=(1, *, *)$.
Since there are finitely many crashes and every process does not stop taking steps while in a super-passage,
every process that enters the CS after $t$ also reaches its Exit section.  Thus, from~\cref{CS_STATUS changes from 1 to 0}, there is a point $t' > t$ at which $LOCK\_STATUS$ changes to $(0, *, *)$.  Because bit $k$ in $ACTIVE$ is set, \cref{CS_STATUS changes from 0 to 1} implies that as long as $p$ is in its waiting room, at some point $t'' > t'$, $LOCK\_STATUS$ changes again to $(1, *, *)$.  We can repeat this argument as long as $p$ is in its waiting room and does not abort. However, from~\cref{at most w times}, after at most $W$ times, $LOCK\_STATUS$ becomes $(1, k, \&spin_{p, k, sp})$. At this point, by~\cref{GO[k] eventually true}, $spin_{p, k, sp}$ eventually turns $TRUE$. Subsequently, $p$ enters the $CS$ in a finite number of its own steps.
\end{proof}

\begin{restatable}{theorem}{wPortMain}\label{w port proprties}
If every execution of Algorithm $M$ is well-formed, then Algorithm $M$ satisfies mutual exclusion, bounded abort, starvation-freedom, CS re-entry, wait-free CS re-entry, wait-free exit, and super-passage wait-free exit. The passage complexity of Algorithm $M$ in both the $CC$ and $DSM$ models is $O(1)$ and the super-passage complexity is $O(1+F)$. (Assuming, for the DSM model, that process memory allocations return local memory.) The space complexity of the algorithm is $O(D^2)$.
\end{restatable}
\fi
\ifpaper
\else
\section{Memory Management}\label{remove-memory-restrict}

Here we show how to bound our algorithm's space consumption. 
The basic idea is that when a process $p$ starts
a super-passage and ``allocates'' a fresh spin variable, the allocation will be satisfied from a pre-allocated static
array of $2D+1$ spin variables maintained for $k$, the port used by $p$ in this super-passage.  (Thus, the overall memory
consumption is $O(D^2)$.)  The challenges are
(1) how to identify an entry in port $k$'s array that is \emph{safe} and not being referenced by any other process (these entries
may have previously been used by other processes accessing port $k$, and so there may be active processes that hold references to them);
and (2) how to find the entry with $O(1)$ RMRs.

To identify a safe entry, we use standard ideas from safe memory reclamation~\cite{HP}.  The high-level idea is that
when process $p$ executing $Promote$ reads $(*,k,v)$ from $LOCK\_STATUS$, it announces that it now has a reference to $v$ by writing a pointer to $v$ in a new $REFERENCED$ array.  This announcement enables any process using port $k$ to avoid reusing this spin variable $v$, as long as $p$ might still be referencing it.
It could, however, be the case that $p$ reads $(b,k,v)$ and delays before writing $v$ to $REFERENCED$, causing some process $q$ using port $k$ to decide to reuse $v$. To
handle this problem, $p$ must \emph{validate} that $LOCK\_STATUS$ still contains $v$ after writing the announcement.
If the validation succeeds, $p$ is considered to hold a reference to $v$. In such a case, $p$ knows that both $REFERENCED$ contains $v$ and $LOCK\_STATUS=(b,k,v)$ held at the same point in time, so an allocation by some process $q$ with port $k$ that starts after $p$ has a reference to $v$ will not reuse $v$.

What if $p$'s validation fails?  We observe that in such a case, $p$ can simply abandon its $Promote$ operation.
If $p$ observes $b=1$, then by~\cref{CS_STATUS only changed by p}, the validation failing implies that $q$ has exited the
lock, so there is no need to signal $q$.  Similarly, if $p$ observes $b=0$, the validation failing implies that $p$'s
later CAS is meant to fail (\cref{sec:w port walkthrough}).

\Cref{fig:w_port_with_memory} shows the full algorithm: the code from~\Cref{fig:w_port} extended with safe memory reclamation,
as described above, as well as the procedures used to ``allocate'' a spin variable.  In the rest of the section, we describe how this ``allocation'' is implemented in $O(1)$ RMRs.  We essentially use the lock-free allocation scheme of Aghazadeh et al.~\cite[Section 2.3]{memory}, except that we make it wait-free using the aforementioned observation that a failed validation can be abandoned instead of retried.

Each port $k$ is associated with a \emph{FREE} queue that holds $O(D)$ free variables that may be ``allocated'' safely; i.e., it is
guaranteed that no other process holds a reference to these variables.  Process $p$ ``allocates'' a spin variable by calling the \emph{GetIndex} procedure (line~\ref{memory:call_getindex}) which pops one spin variable from the \emph{FREE} queue of its port (lines~\ref{memory:getindex_start}--\ref{memory:getindex_end}).  When $p$ finishes the super-passage (due to aborting or
exiting the CS), it retires its current spin variable $v$ (line~\ref{memory:call_retire}).  However, $v$ can be pushed back into the \emph{FREE} queue
only after it is verified that no other process holds a reference to it.  The key idea is to perform this verification lazily, by checking one
entry of the $REFERENCED$ array in each super-passage that uses port $k$ (as part of the \emph{Retire} call).  To facilitate this, we maintain a reference counter in $v$.  The reference counter of $v$ is initialized to 1 when $v$ is retired (line~\ref{memory:refcount_one}),
and is incremented each time a reference to $v$ is observed while scanning $REFERENCED$ (lines~\ref{memory:refcount_inc_start}--\ref{memory:refcount_inc_end}).  In addition, each such
reference counter update is undone (by decrementing the counter) once $D$ entries of $REFERENCED$ have been scanned
since the update was made (lines~\ref{memory:refcount_dec_start}--\ref{memory:refcount_dec_end}).
All together, this means that if $v$'s reference counter ever becomes 0, then no reference to it was observed in $REFERENCED$
and $v$ is put back into the port's \emph{FREE} queue (lines~\ref{memory:refcount_zero_start}--\ref{memory:refcount_zero_end}).  The \emph{Retire} procedure implements this algorithm by tracking all
pending counter updates in two (per-port) queues of size $D$---a \emph{RETIRED} queue for initial counter updates and an
\emph{OBSERVED} queue for counter increments---and pushing and popping one item from each queue in each invocation of
\emph{Retire} (lines~\ref{memory:track_spin_start}--\ref{memory:track_spin_end}). Since at any time, these queues can contain at most $2D$ spin variables out of the $2D+1$ spin variables associated
with the port, we are guaranteed that a port's \emph{FREE}
queue always contains at least one spin variable when a super-passage starts.

\begin{figure}[!bt]
\hspace{-10ex}
\noindent\begin{minipage}{.60\textwidth}
\lstinputlisting[firstnumber=1]{./code_w_port_rme_with_memory_a}
\end{minipage}\hfill
\begin{minipage}{.60\textwidth}
\lstinputlisting[firstnumber=last]{./code_w_port_rme_with_memory_b}
\end{minipage}
\vspace{-3ex}
\caption{$W$-port abortable RME algorithm with memory reclamation}
\label{fig:w_port_with_memory}
\end{figure}

It is straightforward to make the above scheme recoverable: all operations during allocation are done to private variables,
so we can simply hold these variables, as well as the program counter of the \emph{GetIndex}/\emph{Retire} procedures,
in the NVRAM, so that after a crash the process can resume its execution from the point it crashed. It is then safe to re-execute
the operation (read/write) that was about to be performed, since the operation accesses a process-private variable, and all writes in those procedures are idempotent. The queues being used can also easily be made recoverable by constructing them in such a way that all writes are idempotent and again holding all variables in NVRAM and using the program counter to return to the point of the crash. For brevity, \Cref{fig:w_port_with_memory} shows  \emph{Retire} and \emph{GetIndex} without these details.

\fi

\ifpaper
\section{Tournament Tree} \label{chap:tree}
\else
\fi

A \emph{tournament tree} lock, referred to as the \emph{main} lock, is constructed by statically arranging
multiple $D$-port RME algorithms, referred to as \emph{node} locks, in a $D$-ary tree with $N$ leaves
(we assume $D \leq W$).
Each leaf is uniquely associated with a process.
To acquire the main lock, a process competes to acquire each lock on the path from its leaf to the root, until it wins
at the root and enters the main lock's CS.
To release the main lock, the process descends from the root to its leaf, releasing each node lock on the path.
In this section, we present our tournament tree algorithm.

Our algorithm has two distinguishing features:
(1) that its super-passage RMR complexity is additive in $F$, the number of crashes, and not multiplicative;
and (2) that it satisfies \emph{super-passage wait-free exit} (SP-WF-Exit), i.e., a process releasing the main
lock is guaranteed to complete some execution of $Exit$ after a finite number of its own steps (including crashes).

Our algorithm's super-passage RMR complexity is $O(FR + B \log_D N)$, where $R$ and $B$
are the recovery cost and passage complexity of the node lock, respectively. In comparison, prior trees have
super-passage complexity of $O(F(R + B\log_D N))$.  Obtaining our bound is simple:
a process just needs to write its location in the tree to NVRAM, so that upon crash recovery, it can resume
from there instead of starting to walk up or down the tree from scratch.  We suspect that this simple optimization
was not performed in prior tournament trees because their node lock has $R = \log_D N = O(\frac{\log N}{\log \log N})$
and $B=O(1)$, so directly returning to the node at which the crash occurred does not asymptotically improve complexity.
With our $W$-port RME algorithm, however, $R=B=O(1)$, so being
additive in $F$ is asymptotically better, and would not be obtained using prior tournament trees.

\ifpaper
The problem of obtaining SP-WF-Exit highlights the difficulty of composing
recoverable locks.  The issue is that a process in the main lock is composing critical sections of the node
locks, which creates the problem of how recovery of the main and node locks interact.  In the model of
prior work~\cite{journal,sublog-first}, a process crashing in the main lock's exit section attempts to re-acquire
the main lock upon recovering.  As a result, the process might now block in some node lock's entry section,
which violates SF-WF-Exit for the main lock.
We address this problem by carefully modeling RME algorithms in a way that facilitates composition (\cref{chap:Model}).
Instead of assuming how a process participates in the algorithm (i.e., cycling through entry, CS, exit), we
model the RME algorithm as an object whose $Recover$ procedure informs the process where it
crashed in the super-passage.  This approach allows client algorithms, composing the lock, to decide how to
proceed.  Our model allows a process returning to lock $x$ after crashing in the main lock to realize that
it had completed an $x$-super-passage \emph{and not start a new one}.  Consequently, our tournament tree avoids
the problems described above and satisfies SP-WF-Exit.
\else
The problem of obtaining SP-WF-Exit highlights the difficulty of composing
recoverable locks.  The issue is that a process in the main lock is composing critical sections of the node
locks, which creates the problem of how recovery of the main and node locks interact.  In the model of
Golab and Ramaraju~\cite{journal}, a process crashing in the main lock's exit section attempts to re-acquire
the main lock upon recovering.  As a result, the process might now block in some node lock's entry section,
which violates SP-WF-Exit for the main lock.  In Golab and Hendler's tournament tree~\cite{sublog-first},
only a high-level description of the algorithm is provided, stating that ``all recovery actions are performed
in the recovery protocol of the $k$-port algorithm, and for that reason the arbitration tree does not require a
separate recovery protocol.''  It is thus not clear what happens if a process $p$, which is exiting the main
lock, crashes immediately after exiting some node lock $x$ but before managing to record this fact in memory.
We can only assume that on recovery, $p$ eventually returns to $x$.  However, from the perspective of the
lock $x$, $p$ has completed an $x$-super-passage before crashing.  When $p$ returns to $x$ it therefore
enters $x$'s entry section, where it competes with other processes and may block---which violates SP-WF-Exit
for the main lock.

We address the above problem by carefully modeling RME algorithms in a way that facilitates composition (\cref{chap:Model}).
Instead of assuming how a process participates in the algorithm (i.e., cycling through entry, CS, exit), we
model the RME algorithm as an object whose $Recover$ procedure informs the process where it
crashed in the super-passage.  This approach allows client algorithms, composing the lock, to decide how to
proceed.  Our model allows a process returning to lock $x$ after crashing in the main lock to realize that
it had completed an $x$-super-passage \emph{and not start a new one}.  Consequently, our tournament tree avoids
the problems described above and satisfies SP-WF-Exit.
\fi

We present detailed pseudo code and prove all of the algorithm's properties.
\ifCR
Due to space limits, omitted proofs appear in the full version~\cite[Appendix C]{FullVersion}.
\else
Omitted proofs appear in~\cref{appendix:tree proof}.
\fi

\begin{figure}[t]
\noindent\begin{minipage}{.57\textwidth}
\lstinputlisting[firstnumber=1]{./code_tournament_tree_a}
\end{minipage}\hfill
\begin{minipage}{.45\textwidth}
\lstinputlisting[firstnumber=last]{./code_tournament_tree_b}
\end{minipage}
\vspace{-3ex}
\caption{The Tournament Tree}
\label{fig:tournament_tree}
\vspace{-3ex}
\end{figure}

\ifpaper
\subsection{Algorithm Walk-Through}
\else
\section{Algorithm Walk-Through}
\fi

\Cref{fig:tournament_tree} shows the pseudo code of the algorithm.  Each node has immutable $parent$ and $child$ pointers
(as mentioned before, the tree structure is static). The $parent$ of root is $\perp$, as are all $child$ pointers of a leaf node.
Each process is statically assigned to a leaf based on its id ($pid$).  Each node contains a $D$-port abortable RME lock.

Similarly to our $W$-port algorithm, each process $p$ has a status word, $STATUS[p]$, which is used by the main lock's
$Recover$ procedure.  Each process has a $current\_node$ pointer.

In $Try$, a process walks the path from its leaf to the root, acquiring each node lock along the way (lines~\ref{tree:start_ascend}--\ref{tree:end_ascend}).
In each such lock, it uses a statically assigned port, corresponding to the number of the child from which it climbed
into the node.  After successfully acquiring the lock at node $x$, process $p$ writes $x$ to $current\_node[p]$ (line~\ref{tree:end_ascend}).
This allows $p$ to return to $x$ if it crashes, instead of having to start from scratch and climb the entire path
again.  The $Exit$ flow is symmetric, with $p$ releasing each lock along the path back to the leaf, and updating
$current\_node[p]$ after each lock release (lines~\ref{tree:start_descend}--\ref{tree:end_descend}).  In both entry and exit flows, $p$ always execute node lock's $Recover$
procedure before entering that lock's Try or Exit section. This allows $p$ to behave correctly after crash
recovery: on its way up (respectively, down) it will not execute $Enter$ (respectively, $Exit$) on the same node lock twice (lines~\ref{tree:start_ascend_recover}--\ref{tree:end_ascend_recover}, respectively lines~\ref{tree:start_descend_recover}--\ref{tree:end_descend_recover}).

To support aborts, process $p$ checks the abort signal after acquiring each node lock (lines~\ref{tree:start_abort}-\ref{tree:end_abort}).  If an abort was signalled,
$p$ starts executing the main lock's exit code to descend from the current node back to its leaf, releasing the
node locks it holds along the way.  (Similarly to the $W$-port algorithm, an aborting process execute $Exit$ as
a subroutine; it does not formally enter the main lock's exit section).  The algorithm correctly supports aborts
because if an abort is signalled while $p$ is in some node lock's $Try$ execution, it is guaranteed to complete in
a finite number of its own steps. Subsequently, it will execute the main lock's abort handling code in a constant
number of its own steps.

\ifpaper
\else
\ifpaper
\else
\section {Proofs of RME Properties} \label{tree proof}
\fi

We refer to our main abortable RME algorithm as Algorithm $Tree(M)$, where $M$ is some $D$-port RME algorithm that satisfies mutual exclusion, deadlock-freedom, bounded abort, starvation-freedom, CS re-entry, wait-free CS re-entry and super-passage wait-free exit. We define a process $p$ to be in its Try, Critical or Exit section of $Tree(M)$ similarly to the way defined in\cref{w-port proof}. We proceed to prove that Algorithm $Tree(M)$ satisfies the
desired RME properties.  All claims and proofs assume that executions are well-formed.

The main idea is to prove that if an execution of the $Tree(M)$ algorithm is well-formed, then the execution of every node lock $L$ in the tree is also well-formed, and so every node lock maintains all of its properties.  We then show that the properties of $Tree(M)$ follow from that.

\begin{lemma}\label{tree recover is correct}
Algorithm $Tree(M)$ is well-behaved.
\end{lemma}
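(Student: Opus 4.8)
The plan is to reduce well-behavedness of $Tree(M)$ to a handful of invariants on the per-process word $STATUS[p]$, exactly in the style of \cref{w-port proof}. The key observation is that $Tree(M).Recover$ does nothing but read $STATUS[p]$ and return $TRY$ when $STATUS[p]\in\{TRY,ABORT\}$, $CS$ when $STATUS[p]=CS$, and $EXIT$ when $STATUS[p]=EXIT$; in particular it performs no writes and is idempotent, so a crash during $Recover$ leaves $STATUS[p]$ intact and is harmless. Hence it suffices to establish, for every process $p$ and every well-formed execution, the following status invariants, mirroring \cref{status in no sp,status in try section,status in cs,status in exit section}: (a) if $p$ is not in a $Tree(M)$-super-passage then $STATUS[p]=TRY$; (b) if $p$ is in its Try section (including while it runs $Exit$ as an abort subroutine) then $STATUS[p]\in\{TRY,ABORT\}$; (c) if $p$ is in the CS then $STATUS[p]=CS$; and (d) if $p$ is in its Exit section then $STATUS[p]=EXIT$.

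I would prove (a)--(d) by induction over the execution, inspecting the few points at which $STATUS[p]$ is assigned. Initially $STATUS[p]=TRY$ and $p$ is in no super-passage, so all invariants hold. The only writes to $STATUS[p]$ are: (i) $STATUS[p]\leftarrow TRY$, performed when a complete execution of $Exit$ with $abort=\mathit{FALSE}$ finishes (ending the super-passage) and, symmetrically, when $p$ aborts; (ii) $STATUS[p]\leftarrow CS$, performed when $p$ finishes its ascent, arranged to be $p$'s last memory operation in $Try$ (and which the abort check at the root precedes, so $p$ never enters the abort flow with $STATUS[p]=CS$); and (iii) $STATUS[p]\leftarrow EXIT$, performed as $p$'s first memory operation in $Exit$ when invoked with $abort=\mathit{FALSE}$. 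When $Exit$ is invoked with $abort=\mathit{TRUE}$ from within $Try$, it leaves $STATUS[p]$ untouched. Each of these writes preserves the invariants --- this is where \cref{defn:well-formed} enters, guaranteeing that $p$ traverses the $Try$/$CS$/$Exit$ sections in the order these writes assume --- and since each write is idempotent, a crash followed by re-execution at any of them also preserves the invariants. Finally, no process other than $p$ writes $STATUS[p]$, and by well-formedness $p$'s first step after a crash is a $Recover$ invocation, which writes nothing, so $STATUS[p]$ is unchanged between a crash and the ensuing $Recover$. This argument is self-contained: it does not depend on the node locks being well-behaved, since $STATUS[p]$ is touched only by $Tree(M)$'s own code.

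Given (a)--(d), the five clauses of the definition of well-behaved follow immediately from the description of $Recover$ above. The two $Recover$ invocations of clause~1 (the very first one, and the first following a completed $Exit$ that ended a super-passage) occur while $p$ is in no super-passage, so by (a) $STATUS[p]=TRY$ and $Recover$ returns $TRY$. A $Recover$ following a crash during $Try$ sees, by (b), $STATUS[p]\in\{TRY,ABORT\}$, hence returns $TRY$; following a crash during the CS it sees, by (c), $STATUS[p]=CS$, hence returns $CS$; following a crash during $Exit$ it sees, by (d), $STATUS[p]=EXIT$, hence returns $EXIT$. And any $Recover$ completing while $p$ is in the CS sees, by (c), $STATUS[p]=CS$, so returns $CS$.

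The main obstacle is making the status invariants precise against the model's definition of which section $p$ occupies, which is stated in terms of the first and last memory operation of that section (the Note following the definition of well-behaved). Concretely, one must verify that $STATUS[p]\leftarrow CS$ is $p$'s last memory operation in $Try$ --- so there is no instant at which $p$ is still ``in the $Try$ section'' yet $STATUS[p]$ already equals $CS$, which in turn forces the root's abort check to precede this write --- and that $STATUS[p]\leftarrow EXIT$ is $p$'s first memory operation in $Exit$; and that the $Exit$ code, when invoked as an abort subroutine, never assigns $STATUS[p]$, so that invariant (b) survives the entire abort flow. Everything else --- the idempotence of the $STATUS$ writes under crash/re-execution and the case analysis over the write points --- is routine.
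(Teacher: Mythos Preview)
Your proposal is correct and follows essentially the same approach as the paper: the paper's proof is a one-liner stating that the argument is analogous to \cref{recover is correct} because $Tree(M)$ updates $STATUS[p]$ in the same way and has an identical \emph{Recover}, and you have simply unfolded that analogy by re-deriving the status invariants (a)--(d) corresponding to \cref{status in no sp,status in try section,status in cs,status in exit section} and reading off the five well-behavedness clauses. One small wording glitch: you assert that ``when $Exit$ is invoked with $abort=\mathit{TRUE}$ \ldots\ it leaves $STATUS[p]$ untouched,'' but earlier in (i) you (correctly) note that $STATUS[p]\leftarrow TRY$ is also written at the end of the abort flow; only the $STATUS[p]\leftarrow EXIT$ write is skipped, not all writes---tighten that sentence and the argument is clean.
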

\begin{proof}
Analogous to the proof of~\cref{recover is correct}, since both algorithms update the $STATUS[p]$ in the same way, and the \emph{Recover} procedures of both algorithms are identical.
\end{proof}

\begin{restatable}{claim}{treeCorrectUse}\label{tree correct use}
Each execution of a $D$-port node lock $M$ in the $Tree(M)$ algorithm is well-formed.
\end{restatable}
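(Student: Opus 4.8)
The plan is to show that each of the six conditions defining a well-formed execution for a $D$-ported lock (\cref{defn:well-formed}) holds for every node lock $L$, proceeding by induction on the height of $L$ in the tree. The point of this induction is that when we argue the conditions for a node $x$, every node lock in $x$'s subtree has already been shown to have well-formed executions, and hence (by the assumed properties of $M$, in particular mutual exclusion and well-behavedness) satisfies mutual exclusion. Inside the step for $x$, I expect to also need an auxiliary induction on the length of the execution prefix, so that when I reason about the value returned by an $L.Recover$ call I may invoke $M$'s well-behavedness on the already-well-formed prefix.

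I would dispose of the two ``concurrency'' conditions first. \emph{Constant port usage} is immediate: the port a process $p$ uses at node $x$ is statically determined---it is the index of the child of $x$ that lies on the path from $x$ to $p$'s leaf---so $p$ uses the same port at $x$ throughout the entire execution, not merely within a single super-passage. \emph{No concurrent super-passages} is the crux. The processes that ever use port $k$ at $x$ are exactly those whose leaves lie in the subtree rooted at $x$'s $k$-th child, $c_k$. Such a process $p$ climbs into $x$ (and may begin an $L$-super-passage there) only after it has acquired $c_k$'s lock and entered $c_k$'s critical section, and it releases $c_k$ only after it has descended back below $x$, i.e., after that $L$-super-passage has already ended. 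Thus every $L$-super-passage of $p$ on port $k$ is temporally contained in an interval throughout which $p$ is in $c_k$'s critical section---a fact that must be argued to persist across crashes, using CS re-entry / well-behavedness of $c_k$ and the bookkeeping on $current\_node[p]$. If two distinct processes had concurrent $L$-super-passages on the same port $k$, they would be in $c_k$'s critical section at the same time, contradicting mutual exclusion of $c_k$, which holds by the induction hypothesis.

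For the four sequencing conditions (\emph{Recover}, \emph{Try}, \emph{CS}, and \emph{Exit} invocation), I would first establish a structural invariant relating, for each process $p$: its program location in the tree code; $STATUS[p]$; $current\_node[p]$; and the set of node locks that $p$ currently ``holds'' (has received $TRUE$ from $Try$ on, and has not yet completed the matching $Exit$ on). The invariant says these held locks form a contiguous segment of $p$'s leaf-to-root path whose endpoint is $current\_node[p]$, with $STATUS[p]$ recording whether $p$ is ascending, in the main CS, or descending/aborting. Given the invariant, the conditions follow from the shape of the code: after any crash, $p$'s first step is the main \emph{Recover}, which touches no node lock and merely reads $STATUS[p]$, after which $p$ resumes at $current\_node[p]$, where both the ascend and descend loops issue an $L.Recover$ before any other $L$-operation (\emph{Recover} invocation); the loops call $L.Try$ only when $L.Recover$ returned $TRY$, which by well-behavedness of $M$ occurs precisely when $p$ is starting a new $L$-super-passage or crashed inside $L.Try$ (\emph{Try} invocation); $p$ enters $L$'s CS only on $TRUE$ from $L.Try$ in the current passage or when $L.Recover$ returned $CS$ (\emph{CS} invocation); and $p$ calls $L.Exit$ only on a lock it holds while descending or aborting, or when $L.Recover$ returned $EXIT$ (\emph{Exit} invocation). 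The abort flow is subsumed because it calls $L.Exit$ as a subroutine exactly on the currently held node locks in root-to-leaf order, matching the descend case.

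The base case of the height induction (leaf locks) is trivial, since a leaf lock is used by a single process on a single port, so the concurrency conditions are vacuous and the sequencing conditions follow from the code alone. The step I expect to be the main obstacle is making the ``temporal containment'' argument for \emph{No concurrent super-passages}, together with the structural invariant for the sequencing conditions, fully rigorous in the presence of crashes and aborts---in particular, correctly tracking $current\_node[p]$ and $STATUS[p]$ through a crash that occurs inside some node lock's \emph{Recover}, \emph{Try}, or \emph{Exit}, and confirming that within one critical-section period of $c_k$ a process has at most one $L$-super-passage at $x$. Once every node lock's execution is shown well-formed, each node lock inherits all the properties of $M$, which is exactly what the remaining correctness proofs of $Tree(M)$ require.
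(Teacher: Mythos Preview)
Your approach matches the paper's: induction on the height of the node, with the inductive step for \emph{no concurrent super-passages} reducing to mutual exclusion of the child lock (available by the induction hypothesis), and an inner induction along the execution to justify invoking $M$'s well-behavedness when analyzing each $L.Recover$. One small correction: your base case is mischaracterized. In this tree the leaves are process placeholders, not locks; the lowest \emph{locks} are the parents of the leaves, and each such lock is used by up to $D$ distinct processes, not a single one. The concurrency conditions there are not vacuous---they hold because those $D$ processes are assigned statically distinct ports, exactly as the paper argues. Also, the paper does not build your ``contiguous-segment'' structural invariant; it argues Properties~\ref{prop:recover}--\ref{prop:exit} directly from code inspection plus well-behavedness on the well-formed prefix, which is lighter and suffices. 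Your invariant is correct and would work, but it is more machinery than the claim needs.
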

\begin{proof}
We need to show that for every node lock $L$ in the tree, the projection of $Tree(M)$'s execution on $L$ is well-formed, i.e., that it satisfies the six properties of~\cref{defn:well-formed}.  We show this using induction over the height of the tree.

The base case considers the leaves.
Properties~\ref{prop:port} (constant port usage) and~\ref{prop:sp} (no concurrent super-passages) hold trivially since the ports are statically assigned, and no two processes are assigned the same leaf with the same port.  Let $L$ be some leaf.  We proceed by induction over $E_L$, the subsequence of the execution occurring at $L$.  Assume that $E_L$ is well-formed until time $t$.  Then:

\begin{itemize}
\item Property~\ref{prop:recover} (Recover invocation): From the code, $p$ always completes an execution of $L.Recover$ before either invoking $L.Try$ or $L.Exit$, or ascending/descending through $L$'s node.

\item Property~\ref{prop:try} (Try invocation):
Let $t'$ be the first time after $t$ at which some process $p$ invokes $L.Try$.  
From the code of $Tree(M)$, $p$ invokes $L.Try$ only after completing \emph{L.Recover}.  If at $t'$, $p$
either invokes $L.Try$ for the first time in $E_L$, or for the first time after completing $L.Recover$ since $p$'s last $L$-super-passage,
then we are done.  (This is the beginning of an $L$-super-passage by $p$.)  Otherwise, $p$ has previously invoked $L.Try$ in its
current $L$-super-passage, which starts before $t'$.  Since the execution is well-formed up to $t'$ and $L$ is well-behaved,
it follows that $p$'s prior crash step was in $L.Try$.

\item Property~\ref{prop:cs} (CS invocation):
Let $t'$ be the first time after $t$ at which some process $p$ enters $L$'s CS.  
From the code of $Tree(M)$, $p$ enters $L$'s CS in one of two cases.  First, if at $t'$, $p$ completes $L.Try$ which returns $TRUE$.
(If $L.Try$ returns $FALSE$, since the execution is well-formed up to $t'$ and $L$ is well-behaved, it follows that an abort is
signalled, and thus $p$ does not advance to $L$'s CS.)  In this case, we are done.  Otherwise, $p$ completes $L.Recover$ which
returns $CS$.  Since the execution is well-formed up to $t'$ and $L$ is well-behaved, it follows that $p$'s prior crash step was
in $L$'s CS.

\item Property~\ref{prop:exit} (Exit invocation):
Let $t'$ be the first time after $t$ at which some process $p$ invokes $L.Exit$.  
From the code of $Tree(M)$, $p$ invokes $L.Exit$ only after completing \emph{L.Recover} that returns $r \neq TRY$.
First, observe that it cannot be the case the $p$ invokes $L.Exit$ after a crash-free execution from an invocation of $L.Try$
that returns $FALSE$ (i.e., after aborting while trying to acquire $L$).  Since the execution is well-formed up to $t'$ and $L$ is well-behaved, $L.Exit$ returns $TRY$ in such a case, and so $p$ does not invoke $L.Exit$.  If $p$ is in $L$'s CS, we are done.
Otherwise, since the execution is well-formed up to $t'$ and $L$ is well-behaved,
it follows that $p$'s prior crash step was in $L$'s Exit section.
\end{itemize}

Next, let $L$ be a non-leaf lock in the tree.  Assume the claim is correct for all locks in the sub-tree of $L$.

\begin{itemize}
\item Property~\ref{prop:port} (constant port usage): Immediate, since processes use statically assigned ports to access each lock in the tree.
\item Property~\ref{prop:sp} (no concurrent super-passages): From the code, process $p$ accesses $L$ with port $k$ only if $p$ is in the CS of the $k$-th child of $L$.  Since the execution of $L$'s $k$-th child is well-formed, it satisfies mutual exclusion. Thus, there are no two concurrent $L$-super-passages $sp_i$ and $sp_j$ of processes $p_i \ne p_j$ with the same port.
\item Properties~\ref{prop:recover}--\ref{prop:exit}: Follow from similar reasoning as the leaf case.
\end{itemize}
\end{proof}

\begin{lemma}\label{tree me}
Algorithm $Tree(M)$ satisfies mutual exclusion.
\end{lemma}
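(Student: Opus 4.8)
The plan is to reduce mutual exclusion of $Tree(M)$ to mutual exclusion of the single node lock sitting at the \emph{root} of the tree. The first step is to observe, directly from the code of \Cref{fig:tournament_tree}, that the interval during which a process $p$ is in the CS of $Tree(M)$ is contained in the interval during which $p$ is in the CS of the root node lock. Indeed, from the code $p$ enters $Tree(M)$'s CS only after it has completed $root.Try$ with return value $TRUE$ (or, following a crash while in $Tree(M)$'s CS, after $root.Recover$ returns $CS$), and $p$ invokes $root.Exit$ only once it has begun leaving $Tree(M)$'s CS and descending the tree. Hence, whenever some process sits in the main CS it is simultaneously sitting in the root node's CS. (The converse need not hold --- e.g.\ $p$ may acquire the root lock and then observe an abort signal, so that it is in the root's CS but never enters the main CS --- but this direction is all we need.)

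The second step is to invoke \Cref{tree correct use}, which states that the projection of the (well-formed) $Tree(M)$ execution onto the root lock is itself a well-formed execution of $M$. Since $M$ satisfies mutual exclusion on every well-formed execution (one of the hypotheses on $M$ stated before \Cref{tree recover is correct}), at most one process is in the root lock's CS at any time $t$. Combining this with the containment established in the first step, at most one process is in $Tree(M)$'s CS at any time $t$, which is exactly the claim.

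I do not expect a real obstacle here: the heavy lifting has already been done in \Cref{tree correct use}, and what remains is the bookkeeping that the two notions of ``being in the CS'' line up in the required direction. The points that deserve a sentence of care are (i) the abort subroutine and the ``abort signalled just as $p$ acquires the lock'' case, which only ever add processes to the root's CS without adding them to $Tree(M)$'s CS and hence cannot violate the containment; and (ii) crash recovery, where one must note that if $p$ crashes while in $Tree(M)$'s CS then, because $M$ is well-behaved, $root.Recover$ returns $CS$, so $p$'s subsequent re-entry is again into the root's CS. Both are immediate from the model of \Cref{chap:Model}, so the main ``difficulty'' is simply stating the reduction precisely rather than proving anything substantive.
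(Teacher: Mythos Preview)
Your proposal is correct and follows exactly the same approach as the paper: show that being in $Tree(M)$'s CS implies being in the root lock's CS, then invoke \Cref{tree correct use} to conclude that the root lock's execution is well-formed and hence satisfies mutual exclusion. Your write-up is more detailed (handling abort and crash-recovery corner cases explicitly), but the paper's own proof is the same two-line reduction.
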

\begin{proof}
If some process $p$ is in the CS of $Tree(M)$, then it must be in the CS of the root lock. \cref{tree correct use} implies that the root lock satisfies mutual exclusion, so the claim follows.
\end{proof}

\begin{lemma} \label{tree sf}
Algorithm $Tree(M)$ satisfies starvation-freedom.
\end{lemma}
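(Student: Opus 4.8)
The plan is to leverage the structural result already in hand: by \cref{tree correct use}, the projection of any well-formed $Tree(M)$ execution onto each node lock $L$ is itself well-formed, so every node lock satisfies all the properties guaranteed of $M$ — in particular starvation-freedom, bounded abort, and super-passage wait-free exit. Using this, I would argue that a non-aborting process $p$ climbs its root-path one node at a time, acquiring each node lock in turn, until it enters the CS of the root lock, which by \cref{tree me} is the CS of $Tree(M)$.

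First I would establish a helper fact, to be proven independently of the main statement: \emph{under the hypotheses of starvation-freedom} (finitely many crash steps, every process that enters the main CS eventually leaves it, and the fairness assumption), \emph{every process that enters the CS of a node lock $x$ eventually releases $x$ (completes some $x.Exit$)}. This is proven by induction on the nodes \emph{from the root downward}. For the root: a process in the root's CS is in the main lock's CS, so by hypothesis it eventually exits, begins the descent, and completes the root's $Exit$ within finitely many of its own steps by super-passage wait-free exit of the root lock. For a non-root node $x$ with parent $y$: a process $q$ in $x$'s CS is, by the code, either about to invoke $y.Recover$/$y.Try$, inside $y.Try$, already in $y$'s CS, or already descending (possibly because it aborted). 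Taking steps by fairness (or else $q$'s main super-passage has ended, in which case $q$ has already descended through and released $x$, and we are done), $q$ reaches a point where either an abort was signalled to it while in $y.Try$ — then bounded abort of $y$ makes $y.Try$ return $FALSE$ in finitely many steps — or no abort is signalled and starvation-freedom of $y$ applies, whose ``every process entering $y$'s CS eventually leaves'' premise is exactly the induction hypothesis for $y$ and whose ``infinitely many steps'' premise is the fairness argument just given; so $q$ enters $y$'s CS. Once $q$ is in $y$'s CS, the induction hypothesis for $y$ gives that $q$ completes some $y.Exit$, after which $q$ descends to $x$ (the child of $y$ on $q$'s path, since $q$ used $x$'s index as its port at $y$) and completes $x.Exit$ by super-passage wait-free exit of $x$; and in the aborting case $q$ descends and releases $x$ the same way. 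Finitely many crashes do not obstruct any of this.

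Next I would prove the lemma proper. Fix a super-passage $sp$ of $p$ in which no abort signal is received; we must show $p$ enters the CS of the root lock. Since there are only finitely many crash steps, pick a time $t^*$ after the last one; at $t^*$, $p$ is somewhere on its path $x_0$ (its leaf)$, x_1, \dots, x_H$ (the root), having recorded in $current\_node[p]$ the highest node whose lock it has already acquired (and, by well-behavedness — \cref{tree recover is correct} — and \cref{tree correct use}, node-level recovery has put $p$ back in the appropriate section of the node it crashed at). I then show, by induction moving from that node up to the root, that $p$ enters each $x_i$'s CS: once $p$ is in $x_{i-1}$'s CS and does not abort, it proceeds (taking steps by fairness) to invoke $x_i.Recover$ (which returns $TRY$) and $x_i.Try$ with its statically assigned port; since $p$ receives no abort signal in $sp$ (hence none is propagated into its $x_i$-super-passage), every process that enters $x_i$'s CS eventually releases it (the helper fact), there are finitely many crashes, and $p$ executes infinitely many steps, starvation-freedom of $x_i$ guarantees $p$ enters $x_i$'s CS. When $i = H$, $p$ is in the CS of the root lock, i.e., of $Tree(M)$.

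I expect the main obstacle to be the helper fact — specifically, ensuring the ``every process entering the CS eventually leaves it'' premise needed to invoke a node lock's starvation-freedom is not established circularly. The top-down induction resolves this, since a node's CS-vacating property is reduced only to that of its strictly-higher parent; but one must take care to also discharge the ``infinitely many steps'' side condition for an \emph{arbitrary} waiting process $q$, which holds precisely because $q$'s main super-passage is still in progress whenever $q$ is inside any node lock, so fairness forces $q$ to keep taking steps. The remaining items — that $p$'s climb is monotone in the highest node it has entered the CS of, that $Recover$ behaves correctly after a crash, and that node-level CS re-entry re-establishes the right section — are routine given \cref{tree recover is correct} and \cref{tree correct use}.
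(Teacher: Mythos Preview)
Your proposal is correct, but it takes a substantially more detailed route than the paper. The paper's proof is a three-line contradiction argument: assuming finitely many crashes, if $p$ took infinitely many steps without aborting and without entering the main CS, then $p$ would be stuck forever in $L.Try$ for some node lock $L$, contradicting starvation-freedom of $L$ (which holds because, by \cref{tree correct use}, the projection onto $L$ is well-formed).

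The key difference is that you explicitly discharge the precondition ``every process that enters $L$'s CS eventually leaves it'' that is needed to invoke $L$'s starvation-freedom, via your top-down helper lemma. The paper's proof simply does not mention this precondition. Your concern about circularity is well-founded: a process $q$ in $L$'s CS is climbing toward the root and could, in principle, itself be stuck at a higher node, so the precondition is not immediate. Your top-down induction resolves this cleanly, reducing the CS-vacating property at each node to that of its parent, and ultimately to the assumption that processes leave the main CS. What the paper's approach buys is brevity; what your approach buys is an actual verification of the hypothesis needed to apply node-level starvation-freedom. In a fully rigorous write-up, your helper fact (or an equivalent observation) is required.
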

\begin{proof}
Assume there are finitely many crash steps in the execution.
Suppose, towards a contradiction, that process $p$ executes infinitely many steps without entering the $CS$ of $Tree(M$)
and without receiving an abort signal.
Since there are finitely many crashes, it follows that $p$ executes infinitely many steps in $L.Try$ for some node lock $L$,
without receiving an abort signal. This contradicts the fact that $L$ is starvation-free.
\end{proof}

\begin{lemma} \label{tree bounded abort}
Algorithm $Tree(M)$ satisfies bounded abort.
\end{lemma}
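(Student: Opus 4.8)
The plan is to lean on \cref{tree correct use}: since every node lock's execution is well-formed, each node lock $L$ satisfies bounded abort, super-passage wait-free exit, and well-behavedness, and (since the abort signal is an external signal attached to $p$) the abort signal is visible to $p$ inside every node-lock super-passage of $p$. Let $h=O(\log_D N)$ be the height of the tree. Fix a process $p$ for which the abort signal is raised while $p$ is executing $Tree(M).Try$, and suppose $p$ subsequently takes sufficiently many steps without crashing; the goal is to show $p$ completes $Tree(M).Try$ (returning $FALSE$).

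First I would show that $p$ reaches the point where it starts descending back to its leaf after only finitely many of its own steps. Consider where $p$ is when we begin counting. If $p$ is in the ascending phase and currently inside $L.Try$ for some node lock $L$, then $p$ has the abort signalled while executing $L.Try$ and does not crash, so $M$'s bounded abort guarantees $p$ completes $L.Try$ within finitely many of its own steps. Whatever $L.Try$ returns --- $TRUE$ (the abort raced with $p$'s acquisition of $L$) or $FALSE$ (the abort was detected inside $L$) --- the tree code then, within $O(1)$ additional steps, reaches its explicit post-acquisition abort check, observes the still-visible signal, and switches to the descending phase; at that instant $current\_node[p]$ equals the topmost node lock $p$ holds. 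If instead $p$ is inside an $L.Recover$ invocation or performing the constant-size bookkeeping around a node (e.g., writing $current\_node[p]$), it reaches the abort check within $O(1)$ steps. If $p$ is already descending, there is nothing to prove here. In all cases $p$ enters (or is already in) the descending phase after finitely many steps.

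Second I would bound the descending phase. Here $p$ walks from $current\_node[p]$ down to its leaf, and for each of the at most $h$ node locks $L'$ on this path it (i) invokes $L'.Recover$, which --- since $p$ is in $L'$'s CS and did not crash --- returns $CS$ by well-behavedness and terminates in $O(1)$ steps; (ii) invokes $L'.Exit$ with $abort=FALSE$ (as $p$ genuinely holds $L'$), which terminates in finitely many of $p$'s own steps by $M$'s super-passage wait-free exit, because $p$ does not crash; and (iii) performs $O(1)$ bookkeeping, updating $current\_node[p]$ and $STATUS[p]$. Crucially, the descent never invokes $Try$ on a node lock, so $p$ never blocks --- this is precisely the payoff of modelling an RME algorithm via a $Recover$ that reports where the crash occurred, so that a returning process need not re-enter a released lock's entry section. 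Summing over the $\le h$ nodes, $p$ completes the descent, hence its execution of $Tree(M).Try$, after a finite number of its own steps, which establishes bounded abort.

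I expect the one genuinely delicate point to be the first transition: justifying that $M$'s bounded abort is legitimately applicable to whichever node lock $L$ contains $p$ at the instant the main-lock abort is raised. This rests on the abort being a single signal that remains visible to $p$ regardless of which lock's code $p$ is executing, and on correctly handling the case where $L.Try$ nonetheless returns $TRUE$ --- which is why the tree algorithm must re-check the abort after each successful acquisition rather than assume the acquisition would have failed. The remaining ingredients are routine: $Recover$ terminating is immediate for a well-behaved RME algorithm (and is $O(1)$ for our concrete $M$), and $Exit$ terminating in a crash-free suffix is exactly super-passage wait-free exit.
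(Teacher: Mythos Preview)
Your proposal is correct and follows essentially the same approach as the paper: bounded abort of the node lock lets $p$ escape whatever $L.Try$ it is currently in, and then wait-free exit of each node lock lets $p$ descend through at most $h$ nodes in finitely many of its own steps. The paper compresses all of this into a single sentence (``Immediate from the fact that each node lock $L$ satisfies bounded abort and wait-free exit''), whereas you spell out the two phases and the case analysis; note that since bounded abort concerns a crash-free suffix, plain wait-free exit already suffices for the descent---you need not invoke the stronger super-passage wait-free exit.
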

\begin{proof}
Immediate from the fact that each node lock $L$ satisfies bounded abort and wait-free exit.
\end{proof}

\begin{restatable}{theorem}{treeMain}\label{tree port proprties}
Let $M$ be some $D$-port RME algorithm that satisfies mutual exclusion, starvation-freedom, bounded abort, CS re-entry, wait-free CS re-entry, and super-passage wait-free exit.
If every execution of Algorithm $Tree(M)$ is well-formed, then Algorithm $Tree(M)$ satisfies mutual exclusion, starvation-freedom, bounded abort, CS re-entry, wait-free CS re-entry, wait-free exit, and super-passage wait-free exit.
The passage complexity of Algorithm $Tree(M)$ in both the CC and DSM models is $O(B \log_D N)$, where $B$ is the node lock passage
complexity.  The super-passage complexity of Algorithm $Tree(M)$ is $O(FR + B \log_D N)$, where $R$ is the recovery complexity of the node lock.
The space complexity of Algorithm $Tree(M)$ is $O(S \frac{N}{D} \log_D N)$, where $S$ is the space complexity of the node lock.
\end{restatable}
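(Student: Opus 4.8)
The plan is to assemble \cref{tree port proprties} from the lemmas already in hand plus a few small additions. \cref{tree recover is correct} gives well-behavedness; \cref{tree me,tree sf,tree bounded abort} give mutual exclusion, starvation-freedom, and bounded abort; and it remains to establish CS re-entry, wait-free CS re-entry, super-passage wait-free exit (from which wait-free exit follows, since the former implies the latter), and the three complexity bounds. The workhorse for everything is \cref{tree correct use}: because each node lock $L$'s projected execution is well-formed, $L$ enjoys \emph{all} the properties $M$ is assumed to satisfy, so reasoning about $Tree(M)$ reduces to composing these node-lock guarantees with the $O(1)$-RMR bookkeeping ($STATUS$, $current\_node$) of the tree code.

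For CS re-entry and wait-free CS re-entry: a process $p$ in $Tree(M)$'s CS is, by construction, in the CS of the root node lock. If $p$ crashes there, then since $Tree(M)$ is well-behaved its next $Recover$ returns $CS$, so $p$ re-enters $Tree(M)$'s CS directly in $O(1)$ of its own steps, which is wait-free CS re-entry. From the root lock's viewpoint $p$ crashed inside the root lock's CS, so by that lock's CS re-entry property no other process enters the root lock's CS---equivalently $Tree(M)$'s CS---between $p$'s crash and $p$'s re-entry, which is CS re-entry.

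The crux, and where I expect the main difficulty, is super-passage wait-free exit. Suppose $p$ is executing $Tree(M).Exit$ and crashes only finitely often; I must show $p$ completes some $Exit$ execution in finitely many of its own steps. Because $p$ records progress in $current\_node[p]$ after releasing each node lock, after its last crash $p$ resumes the descent at some node $x$ rather than restarting from the root. At each node $x$ on the remaining downward path, $p$ first calls $x.Recover$; using well-behavedness of $M$ one shows it returns $TRY$ precisely when $p$ had already released $x$---in which case $p$ simply descends without ever invoking $x.Try$, which is exactly the composition insight that keeps $p$ from blocking in a node's \emph{entry} section while it is only trying to \emph{release} $Tree(M)$---and otherwise returns $CS$ or $EXIT$, in which case $p$ invokes $x.Exit$, which terminates in finitely many of $p$'s steps by $M$'s wait-free exit (implied by $M$'s super-passage wait-free exit), while $x.Recover$ itself also terminates. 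Hence $p$ advances strictly down the tree, over at most $\log_D N$ levels, spending finitely many steps per level, and completes $Exit$; wait-free exit is then immediate. The delicate part is the case analysis on where inside a node lock the last crash fell (inside $x.Recover$; inside $x.Try$ before acquiring $x$; just after acquiring $x$ but before recording $current\_node[p]=x$; inside $x$'s CS; or inside $x.Exit$) and verifying that $Recover$'s return value together with the value of $current\_node[p]$ steers $p$ correctly in every case.

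For the complexity bounds: a passage traverses a root-to-leaf path of length $O(\log_D N)$ (ascending and/or descending), doing at each node a constant number of node-lock method calls plus $O(1)$ bookkeeping accesses; since the definition of a passage of $M$ already subsumes a crash-initiated $Recover$ we have $R=O(B)$, so each node contributes $O(B)$ RMRs, and because a crash makes $p$ resume at $current\_node[p]$ no node is visited more than a constant number of times within one passage---giving $O(B\log_D N)$. For the super-passage bound, across a whole super-passage the path is traversed once, contributing $O(B\log_D N)$, while each of the $F$ crashes adds an $O(1)$-RMR $Tree(M).Recover$ plus an $O(R)$-RMR node-lock $Recover$, and any re-execution inside a node lock is absorbed into that node lock's $M$-super-passage cost (which is $O(B)$ plus $O(R)$ per crash occurring in it); summing over the $\le F$ crashes and $O(\log_D N)$ path nodes yields $O(FR+B\log_D N)$. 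Finally, summing the node-lock space $S$ over the internal nodes of the $D$-ary tree with $N$ leaves and adding $O(N)$ words of per-process state gives space $O(S\frac{N}{D}\log_D N)$.
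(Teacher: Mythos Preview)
Your proposal is correct and follows essentially the same route as the paper: invoke \cref{tree me,tree sf,tree bounded abort} for the first three properties, derive CS re-entry and wait-free CS re-entry from well-behavedness of $Tree(M)$ together with the root lock's guarantees, obtain (super-passage) wait-free exit from the node locks' wait-free exit, and read off the complexity bounds from the tree height and node count. The only notable difference is emphasis: you devote a paragraph to a case analysis for super-passage wait-free exit, whereas the paper dispatches it in a single sentence (``because each node lock satisfies wait-free exit, both wait-free exit and super-passage wait-free exit are satisfied''), relying on the reader to see that a descending process never invokes a node's $Try$ and hence cannot block.
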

\begin{proof}
Mutual exclusion follows from~\cref{tree me}.  Starvation-freedom follows from~\cref{tree sf}. Bounded abort follows from~\cref{tree bounded abort}.
Since $Tree(M)$ is well-behaved and its \emph{Recover} is wait-free, then CS re-entry and wait-free CS re-entry are satisfied.
Because each node lock satisfies wait-free exit, both wait-free exit and super-passage wait-free exit are satisfied.

The height of the tree is $O(log_w N)$ and a process ascends (respectively, descends) the height of the tree to complete \emph{Try} (respectively, \emph{Exit}).  Since each node lock $L$ has passage complexity $B$, the passage complexity of $Tree(M)$ is $O(B \log_D N)$.

In the worst case, following a crash, a recovering process returns to a node lock $L$ where it just completed a passage and
executes the recovery code passage, so each crash incurs $O(1+R)$ RMRs.  Thus, the algorithm's super-passage complexity is $O(FR + B \log_D N)$.

Since there are $O(\frac{N}{D})$ leaves in the tree, and the height of the tree is $O(\log_D N)$, there are $O(\frac{N}{D} \log_D N)$ nodes in the tree.  The overall space complexity of the algorithm is therefore $O(S \frac{N}{D} \log_D N)$.
\end{proof}

\fi
\ifpaper
\section{Adaptive Transformation} \label{chap:transform}
\else
\fi

\begin{figure}[!b]
\noindent\begin{minipage}{.47\textwidth}
\input{./code_adaptive_transformation_a}
\end{minipage}\hfill
\begin{minipage}{.58\textwidth}\input{./code_adaptive_transformation_b}

\end{minipage}\hfill
\vspace{-3ex}
\caption{Adaptive Transformation}
\label{fig:adaptive_transformation}
\end{figure}

We now present our generic adaptivity transformation, which transforms any abortable RME algorithm $L$ whose
RMR complexity depends only on $N$ into an abortable RME algorithm whose RMR complexity also depends on
the \emph{point contention}~\cite{contention1, contention2}, $K$, which is the number of processes executing the algorithm concurrently with the process going through the super-passage.
We show how to transform an abortable RME algorithm with passage complexity $B < W$, super-passage complexity $B^*$, and space complexity
$S$, into an abortable RME algorithm with passage complexity $O(\min(K,B))$, super-passage complexity
$O(K +F)$ if $K < B$ or $O(B^* + F)$ otherwise, and space complexity $O(S + N + B^2)$.

The transformation is essentially a fast-path/slow-path construction, where the fast path is our $W$-port
abortable RME algorithm and the slow path is the original lock $L$.  A process $p$ attempts to capture
port $k=0,\dots,W-1$ so it can use it in the fast path lock. Each such capture attempt is performed with
CAS, and hence incurs an RMR. The idea is that if $p$ fails to capture a port, then another process $q$
succeeds. Therefore, if $p$ fails to capture any port, the point contention is $> W$. In this case, $p$
gives up and enters the slow path. The fast path and slow paths are synchronized with a 2-port abortable
RME lock, again implemented with our lock (\cref{chap:w_port}).

We present detailed pseudo code and prove all of the algorithm's properties.
\ifCR
Due to space limits, omitted proofs appear in the full version~\cite[Appendix D]{FullVersion}.
\else
Omitted proofs appear in~\cref{appendix:transform proof}.
\fi 
\ifpaper
\subsection{Algorithm Walk-Through}
\else
\section{Algorithm Walk-Through}
\fi

\Cref{fig:adaptive_transformation} presents the transformed algorithm's pseudo code.  The transformed
algorithm uses three auxiliary abortable RME locks: a $slow\_path$ lock, which is an $N$-process base lock
being transformed into an adaptive lock, and $fast\_path$ as well as $2\_rme$ locks, both of which are
instances of our $D$-port abortable RME (\cref{chap:w_port}).  The $fast\_path$ instance uses
$D=B$ and the $2\_rme$ instance uses $D=2$.

The algorithm maintains $K\_OWNERS$, an array of $B$ words (initially all $\perp$) through which processes
in the entry section try to capture ports to use in the fast-path lock  (lines~\ref{transform:get_k_start}--\ref{transform:get_k_end}).  Each process maintains a $CURR\_K$ variable to
store the next port the process attempts to capture, or its captured port (once it captures one).
To capture a port, process $p$ scans $K\_OWNERS$, using CAS at each slot $k$ in an attempt to capture
port $k$. If $p$ captures port $k$, it enters the fast-path lock using that port.  Overall, if $p$ reaches
slot $k$ in $K\_OWNERS$, then $k$ other processes have captured ports $0,...,k-1$.
If $p$ reaches the end of $K\_OWNERS$ and fails to capture a port, it enters the slow-path lock (lines~\ref{transform:slow_start}--\ref{transform:slow_end}). Regardless
of which lock $p$ ultimately enters, it invokes that lock's $Recover$ method first, to correctly handle the case
in which $p$ is recovering from a crash.

We use the 2-RME lock to ensure mutual exclusion between the owners of the fast-path and the slow-path.
Once $p$ acquires its lock, it enters the 2-RME lock from the right (respectively, left) if it is on the
fast-path (respectively, slow-path). In the 2-RME lock, $p$ takes on a unique right/left id, corresponding to
its direction of entry. Once $p$ acquires the 2-RME lock, it enters the CS (lines~\ref{transform:2_rme_start}--\ref{transform:2_rme_end}).

In the exit section, $p$ releases the 2-RME lock (lines~\ref{transform:2_rme_release_start}--\ref{transform:2_rme_release_end}) and then the fast-path or slow-path lock, as appropriate (lines~\ref{transform:release_start}--\ref{transform:release_end}).
After releasing the fast-path lock, $p$ releases its port (line~\ref{transform:release_k})    . These steps are done carefully to avoid having
$p$ return to the fast-path lock after crashing with the same port that is now being used by another process.

To handle aborts, if $p$ receives a FALSE return value from some $Enter$ execution, it executes the transformed
lock's exit code (which, as a byproduct, releases $p$'s port if it has one). Subsequently, $p$ completes the abort.

\ifpaper
\else
\ifpaper
\else
\section {Proofs of RME Properties} \label{transform proof}
\fi

Let $M$ be an $N$-port abortable RME algorithm that satisfies mutual exclusion, deadlock-freedom, bounded abort, starvation-freedom, CS re-entry, wait-free CS re-entry, and super-passage wait-free exit.  We refer to our transformation applied to $M$ as Algorithm $Adapt(M)$.
We define a process $p$ to be in the Try, Critical or Exit section of $Adapt(M)$ analogously to definition in~\cref{w-port proof}. We proceed to prove that Algorithm $Adapt(M)$ satisfies the desired RME properties.  All claims and proofs assume that executions are well-formed.

The main idea is to prove that if an execution of the $Adapt(M)$ is well-formed then the execution of the three auxiliary locks used in the algorithm are also well-formed, and so they maintain their RME properties. The correctness of $Adapt(M)$ follows from that.

\begin{lemma}\label{transform recover is correct}
Algorithm $Adapt(M)$ is well-behaved.
\end{lemma}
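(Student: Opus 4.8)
The plan is to mirror the proof of \cref{recover is correct} (and \cref{tree recover is correct}), exploiting the fact that $Adapt(M)$'s $Recover$ method is a pure read of $STATUS[pid]$ and that $STATUS[pid]$ is written only by $p$ itself. First I would establish that in any well-formed execution the following invariants hold at all times: (i) if $p$ is in its Try section of $Adapt(M)$, or is not in an $Adapt(M)$-super-passage, then $STATUS[p]\in\{TRY,ABORT\}$; (ii) if $p$ is in the CS of $Adapt(M)$, then $STATUS[p]=CS$; and (iii) if $p$ is in the Exit section of $Adapt(M)$, then $STATUS[p]=EXIT$. These follow by inspecting every assignment to $STATUS[p]$ in \Cref{fig:adaptive_transformation}: it is initialized to $TRY$; it is set to $CS$ exactly as the last memory operation of $Try$ (so it equals $CS$ precisely while $p$ is in the CS, per the model's convention that $p$ is in the CS once it has executed its final $Try$ operation); it is set to $EXIT$ as the first memory operation of $Exit$ when $Exit$ is invoked to exit the CS (i.e.\ $aborting=FALSE$), and reset to $TRY$ as the last memory operation of $Exit$; and in the Try-section abort flow it is set to $ABORT$, after which the ensuing $Exit(pid,TRUE)$ subroutine call leaves $STATUS[p]$ untouched (the $if$ guarding $STATUS[pid]:=EXIT$ is not taken, since $aborting=TRUE$), so $STATUS[p]$ remains $ABORT$ until $p$ finishes $Try$. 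Crucially, no other process ever writes $STATUS[p]$, so a crash cannot corrupt it and its value persists across crashes.

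Given (i)--(iii), each of the five conditions of well-behavedness is immediate. For condition~1, before $p$'s very first $Recover$ we have $STATUS[p]=TRY$ by initialization, and after a complete execution of $Exit$ we have $STATUS[p]=TRY$ by its last line, so in both cases $Recover$ returns $TRY$. For condition~2, a crash during $Try$ leaves $STATUS[p]\in\{TRY,ABORT\}$ by (i), and $Recover$ maps both values to $TRY$. For conditions~3 and~5, while $p$ is in the CS we have $STATUS[p]=CS$ by (ii), so $Recover$ returns $CS$. For condition~4, a crash during $Exit$ leaves $STATUS[p]=EXIT$ by (iii), so $Recover$ returns $EXIT$. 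The only care needed here is in lining up the model's definitions of ``being in the Try/CS/Exit section'' with the exact program points at which $STATUS[p]$ flips, and in treating the Try-section abort flow in which $Exit$ is invoked as a subroutine without $p$ being in its Exit section.

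The hard part will be only the bookkeeping around that abort flow and the subroutine invocations of $Exit$: one must check that $STATUS[p]$ is never set to $EXIT$ during an abort, and that the ``recover-before-entry'' calls to the auxiliary locks' $Recover$ methods inside $Adapt(M)$'s $Try$ and $Exit$ code (i.e.\ $fast\_path.Recover$, $slow\_path.Recover$, and $2\_rme.Recover$) are not themselves $Adapt(M)$-level $Recover$ invocations and hence are irrelevant to the statement. Beyond that, the argument is a routine case analysis identical in spirit to the proof of \cref{recover is correct}, so I would state it as ``analogous to \cref{recover is correct}'' and only spell out the abort-flow bookkeeping in detail.
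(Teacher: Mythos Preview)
Your proposal is correct and takes exactly the same approach as the paper, which simply states that the proof is analogous to \cref{recover is correct} because $Adapt(M)$ updates $STATUS[p]$ via the same flow and has an identical $Recover$ procedure. One minor slip: the subroutine call $Exit(pid,TRUE)$ does not leave $STATUS[p]$ ``untouched''---its last line executes $STATUS[pid]:=TRY$---so $STATUS[p]$ does not ``remain $ABORT$ until $p$ finishes Try''; but since your invariant (i) already allows $STATUS[p]\in\{TRY,ABORT\}$ and $Recover$ maps both values to $TRY$, this does not affect the argument.
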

\begin{proof}
Analogous to the proof of~\cref{recover is correct}, since both algorithms update the $STATUS[p]$ using the same flow, and the $Recover$ procedures of both algorithms are identical.
\end{proof}

\begin{restatable}{claim}{slowPathCorrect}\label{slow path correct}
Any execution of the $slow\_path$ lock in $Adapt(M)$ is well-formed.
\end{restatable}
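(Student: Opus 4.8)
The plan is to follow the template of the proof of \cref{tree correct use}, specialized to a single node lock. Fix $L = slow\_path$ and show, by induction over prefixes of the $Adapt(M)$ execution $E$, that the subsequence $E_L$ of $E$ consisting of the operations on $L$ satisfies the six conditions of \cref{defn:well-formed}. Since $L$ is an $N$-port lock and each process $p$ accesses $L$ only with the fixed port $pid$, property~\ref{prop:port} (constant port usage) holds because $p$ never changes its port, and property~\ref{prop:sp} (no concurrent super-passages) holds because distinct processes never use the same port. It remains to verify properties~\ref{prop:recover}--\ref{prop:exit}, and the argument for each amounts to tracking the process-private variables $STATUS[pid]$, $PATH[pid]$, $CURR\_K[pid]$ and $SIDE[pid]$ and invoking the fact that $Adapt(M)$ is well-behaved (\cref{transform recover is correct}) together with the assumption that $M$ (hence $L$) is well-behaved.

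For property~\ref{prop:recover} (Recover invocation) I would argue that $p$'s first $L$-step after any crash is an $L.Recover$: after a crash $p$'s first step is $Adapt(M).Recover$, which touches no auxiliary lock and returns the value dictated by $STATUS[pid]$; $p$ then re-enters $Adapt(M).Try$ or $Adapt(M).Exit$ and, in either method, the first $L$-operation it reaches is $slow\_path.Recover$---in $Try$ because once $PATH[pid]=SLOW$ the port-capture loop (lines~\ref{transform:get_k_start}--\ref{transform:get_k_end}) is skipped ($CURR\_K[pid]=B$) and control passes to the $slow\_path.Recover$ guard on line~\ref{transform:slow_start}, and in $Exit$ because the $PATH[pid]=SLOW$ branch (lines~\ref{transform:release_start}--\ref{transform:release_end}) begins with $slow\_path.Recover$. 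Properties~\ref{prop:try} and~\ref{prop:cs} then follow from $L$ being well-behaved: $L.Try$ is invoked only immediately after $L.Recover$ returns $TRY$, which by well-behavedness occurs only on $p$'s first $L$-invocation, or after $p$ completed an $L$-passage of $Exit$ (both of which begin a new $L$-super-passage), or after a crash during $L.Try$; and $p$ enters $L$'s CS only after receiving $TRUE$ from $L.Try$ in its current $L$-passage, or after $L.Recover$ returns $CS$, which by well-behavedness follows a crash inside $L$'s CS (note that if $p$ crashes after $L.Try$ returned $TRUE$ but before $STATUS[pid]$ is updated, then $Adapt(M).Recover$ returns $TRY$, $p$ re-enters $Adapt(M).Try$, but $slow\_path.Recover$ now returns $CS$, so $L.Try$ is \emph{not} re-invoked).

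The delicate step is property~\ref{prop:exit} (Exit invocation): $p$ must invoke $L.Exit$ only while in $L$'s CS or following a crash during $L.Exit$. In $Adapt(M).Exit$, $L.Exit$ is invoked only when the guard $slow\_path.Recover(p) \neq TRY$ (lines~\ref{transform:release_start}--\ref{transform:release_end}) passes. When $p$ aborted while executing $slow\_path.Try$---so $L.Try$ returned $FALSE$ and the $L$-super-passage ended---$L.Recover$ returns $TRY$, so the guard fails and the spurious $L.Exit$ is correctly suppressed; moreover $p$ does not re-enter the port-capture loop, so it does not wrongly re-invoke $L.Try$ either. When instead $p$ holds $L$ (it received $TRUE$ from $L.Try$) and later crashes inside $Adapt(M)$'s CS or exit section, on recovery $Adapt(M).Recover$ returns $CS$ or $EXIT$, $p$ re-enters the exit code, and $slow\_path.Recover(p)$ returns $CS$ (because $p$ is in $L$'s CS) or $EXIT$ (because $p$'s last crash was in $L.Exit$); the guard passes and the $L.Exit$ invocation is legitimate under property~\ref{prop:exit}. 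I expect the main obstacle to be the bookkeeping: exhaustively enumerating the reachable combinations of $(STATUS[pid], PATH[pid], SIDE[pid])$ at a crash point and checking that $Adapt(M)$'s recovery routes $p$ back to precisely the right $L$-method and suppresses all others---routine, but it requires invoking the exact clauses of the ``well-behaved'' definition of $M$ and, in the abort cases, the convention that an aborted $Try$ closes the super-passage.
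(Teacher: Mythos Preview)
Your proposal is correct and follows essentially the same approach as the paper: both establish Properties~\ref{prop:port} and~\ref{prop:sp} trivially from the fact that $slow\_path$ is $N$-ported with process ids as ports, and both verify Properties~\ref{prop:recover}--\ref{prop:exit} by induction over the execution, invoking the well-behavedness of $L$ at each step. Your treatment of the abort case in Property~\ref{prop:exit} (that the guard $slow\_path.Recover(p)\neq TRY$ fails and suppresses the spurious $L.Exit$) is in fact more explicit than the paper's, which appeals somewhat tersely to the $aborting=TRUE$ flag on $Adapt(M).Exit$; both reach the same conclusion.
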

\begin{proof}
We need to show that the projection of $Adapt(M)$'s execution over the $slow\_path$ lock is well-formed, i.e., that it satisfies the five properties of~\cref{defn:well-formed}.  Properties~\ref{prop:port} constant port usage) and~\ref{prop:sp} (no concurrent super-passages) hold trivially, since
$slow\_path$ is an $N$-port, and process ids are used as port numbers.  We prove Properties 1--3 by induction over $E_s$, the subsequence of the execution occurring at the $slow\_path$ lock.  Assume that $E_s$ is well-formed until time $t$.  Then:

\begin{itemize}
\item Property~\ref{prop:recover} (Recover invocation): From the code, $p$ always completes an execution of $slow\_path.Recover$ before either invoking $slow\_path.Try$ or $slow\_path.Exit$. Moreover, if $p$ is in a $slow\_path$-super-passage, then $PATH[p]=SLOW$, and so $p$ always invokes $slow\_path.Recover$ after recovering from a crash.

\item Property~\ref{prop:try} (Try invocation):
Let $t'$ be the first time after $t$ at which some process $p$ invokes $slow\_path.Try$.  
From the code of $Adapt(M)$, $p$ invokes $slow\_path.Try$ only after completing \emph{slow\_path.Recover}.  If at $t'$, $p$
either invokes $slow\_path.Try$ for the first time in $E_s$, or for the first time after completing $slow\_path.Recover$ since $p$'s last $slow\_path$-super-passage,
then we are done.  (This is the beginning of a $slow\_path$-super-passage by $p$.)  Otherwise, $p$ has previously invoked $slow\_path.Try$ in its current $slow\_path$-super-passage, which starts before $t'$.  Since the execution is well-formed up to $t'$ and $slow\_path$ is well-behaved, it follows that $p$'s prior crash step was in $slow\_path.Try$.

\item Property~\ref{prop:cs} (CS Invocation):
Let $t'$ be the first time after $t$ at which some process $p$ enters $slow\_path$'s CS.  
From the code of $Adapt(M)$, $p$ enters $slow\_path$'s CS in one of two cases.  First, if at $t'$, $p$ completes $slow\_path.Try$ which returns $TRUE$.
In this case, we are done.  Otherwise, $p$ completes $slow\_path.Recover$ which
returns $CS$.  Since the execution is well-formed up to $t'$ and $slow\_path$ is well-behaved, it follows that $p$'s prior crash step was
in $L$'s CS.

\item Property~\ref{prop:exit} (Exit invocation):
Let $t'$ be the first time after $t$ at which some process $p$ invokes $slow\_path.Exit(FALSE)$.  
From the code of $Adapt(M)$, $p$ invokes $slow\_path.Exit$ only if $PATH[p]=SLOW$ and after completing \emph{slow\_path.Recover} that returns $r \neq TRY$.
First, observe that it cannot be the case the $p$ invokes $slow\_path.Exit$ after a crash-free execution from an invocation of $slow\_path.Try$
that returns $FALSE$ (i.e., after aborting while trying to acquire $slow\_path$), because then it invokes $Adapt(M)$'s \emph{Exit} procedure with an \emph{aborting} argument of $TRUE$.  If $p$ is in $slow\_path$'s CS, we are done.
Otherwise, since the execution is well-formed up to $t'$ and $slow\_path$ is well-behaved,
it follows that $p$'s prior crash step was in $slow\_path$'s Exit section.
\end{itemize}
\end{proof}

\begin{restatable}{claim}{fastPathCorrect}\label{fast path correct}
Any execution of the $fast\_path$ lock in $Adapt(M)$ is well-behaved.
\end{restatable}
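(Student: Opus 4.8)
The plan is to follow the template of the proof of \cref{slow path correct}: show that $E_f$, the projection of $Adapt(M)$'s execution onto the $fast\_path$ lock, is well-formed---i.e.\ that it satisfies all conditions of \cref{defn:well-formed} together with the two $D$-ported conditions, constant port usage (Property~\ref{prop:port}) and no concurrent super-passages (Property~\ref{prop:sp})---so that, by \cref{w port proprties}, the $B$-port instance $fast\_path$ retains all of its RME properties inside $Adapt(M)$. The one genuinely new ingredient, relative to the slow-path argument, is that $fast\_path$ is $D$-ported (with $D=B$) and its ports are captured dynamically via CAS on $K\_OWNERS$ rather than assigned statically, so Properties~\ref{prop:port} and~\ref{prop:sp} are no longer trivial and will carry the bulk of the argument.

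First I would establish a port-ownership invariant: a process $p$ is in a $fast\_path$-super-passage with port $k$ exactly during the interval that begins when $p$ captures slot $k$ (a successful $CAS(K\_OWNERS[k],\perp,pid)$, or an observation $K\_OWNERS[k]=pid$ after a crash) and then invokes $fast\_path.Try(k)$, and ends when $p$'s $Exit$ code executes $K\_OWNERS[k]:=\perp$; and that throughout this interval $K\_OWNERS[k]=pid$. The invariant holds because the only write resetting the slot to $\perp$ is performed by its current owner, and a CAS that captures the slot expects the value $\perp$ and hence fails while the slot is owned. Property~\ref{prop:sp} is then immediate: two concurrent $fast\_path$-super-passages using the same port $k$ would force $K\_OWNERS[k]$ to equal two distinct pids simultaneously. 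For Property~\ref{prop:port}, I would observe that once $p$ captures $k$ it leaves the scan loop and stops incrementing $CURR\_K[p]$, so $CURR\_K[p]=k$ stays fixed until $Exit$ resets it to $0$ at the end of the super-passage; after a crash mid-super-passage, $p$ re-reads $CURR\_K[p]=k$, sees $K\_OWNERS[k]=pid$, and re-enters $fast\_path$ with the same port---so the port is constant throughout the super-passage. I would also check here that the guards on $PATH[p]$ and $STATUS[p]$ send a recovering $p$ back into $fast\_path$ and not into the slow path or the 2-RME lock.

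For Properties~\ref{prop:recover}--\ref{prop:exit} the argument mirrors the slow-path case: every invocation of $fast\_path.Try$ or $fast\_path.Exit$ in the code is preceded by $fast\_path.Recover$; $p$ enters $fast\_path$'s CS only after $fast\_path.Try$ returns $TRUE$ or after $fast\_path.Recover$ returns $CS$; and---using the induction hypothesis that $E_f$ is well-formed up to the current time, together with the fact that $fast\_path$ is well-behaved by \cref{w port proprties}---a crash-free invocation of $fast\_path.Try$ returning $FALSE$ drives $p$ into the abort flow (which calls $Adapt(M)$'s $Exit$ with $aborting=TRUE$) rather than into $fast\_path$'s CS or a standalone $fast\_path.Exit$, so the repeated-invocation cases of Properties~\ref{prop:try}--\ref{prop:exit} can only arise from a prior crash inside the corresponding section.

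The main obstacle I anticipate is the interplay between crash recovery and port release: I must rule out the scenario where $p$ crashes after $fast\_path.Exit(k,\cdot)$ completes but before $K\_OWNERS[k]:=\perp$ executes, recovers, and re-enters a $fast\_path$ CS (or re-invokes $fast\_path.Exit$) with port $k$---possibly concurrently with a later owner of $k$. The resolution is that on recovery $p$'s $Exit$ code calls $fast\_path.Recover(k)$, which returns $TRY$ because $p$ just completed a full $fast\_path$-Exit passage; hence $p$ skips the $fast\_path.Exit$ call, merely executes $K\_OWNERS[k]:=\perp$, and leaves---so $p$ neither re-enters $fast\_path$'s CS nor re-invokes $fast\_path.Exit$, and, since $K\_OWNERS[k]$ was still $pid$ until that write, no later owner of $k$ existed yet. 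I would also dispatch the symmetric corner case of a crash between capturing $k$ and the first $fast\_path.Try(k)$, confirming that this moment precedes the start of the $fast\_path$-super-passage and hence does not threaten Property~\ref{prop:port}.
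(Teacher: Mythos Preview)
Your proposal is correct and follows essentially the same approach as the paper: Properties~\ref{prop:recover}--\ref{prop:exit} are handled by analogy with the $slow\_path$ case, and Properties~\ref{prop:port}--\ref{prop:sp} are derived from the invariant that any invocation of a $fast\_path$ method with port $k$ by process $p$ requires $K\_OWNERS[k]=p$, which can only be cleared by $p$ itself in $Exit$ after it has finished its $fast\_path$-super-passage. The paper's own proof states exactly these two observations in two sentences; your version is simply a more detailed unpacking of the same invariant, including the crash-between-$fast\_path.Exit$-and-slot-release corner case that the paper leaves implicit.
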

\begin{proof}
The proof of Properties~\ref{prop:recover}--\ref{prop:exit} is analogous to the proof of the $slow\_path$ lock.
Constant port usage and no concurrent super-passages are satisfied since, from the code, if $p$ invokes
$fast\_path.Recover(k)$, $fast\_path.Try(k)$ or $fast\_path.Exit(k)$, it must be the case that $K\_OWNERS[k]=p$. Further, if $K\_OWNERS[k]=p$, it can only be changed to $\perp$, and that can only happen by $p$ in \emph{Exit}, which can only happen after $p$ finished its $fast\_path.Exit(k)$, implying $p$ finished its $fast\_path$-super-passage with port $k$ and will no longer access it.
\end{proof}

\begin{restatable}{claim}{TwoRmeCorrect}\label{2 rme correct}
Any execution of the $2\_rme$ lock in $Adapt(M)$ is well-behaved.
\end{restatable}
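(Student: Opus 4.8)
The plan is to follow the template of Claims~\ref{slow path correct} and~\ref{fast path correct}: show that the projection $E_2$ of the $Adapt(M)$ execution onto the $2\_rme$ lock is well-formed, so that $2\_rme$ maintains its RME properties (which hold because $2\_rme$ is an instance of the $W$-port lock of~\cref{chap:w_port} used with $D=2$). Here the two ports of $2\_rme$ are the values $LEFT$ and $RIGHT$ that $SIDE[p]$ can take. So the task is to verify the six conditions of~\cref{defn:well-formed} for $E_2$.

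I would first handle the two port conditions. For \emph{constant port usage} (Property~\ref{prop:port}): whenever $p$ reaches the $2\_rme$ block it sets $SIDE[p]$ to $RIGHT$ iff $PATH[p]=FAST$ and to $LEFT$ otherwise; since $PATH[p]$ is written once per $Adapt(M)$-super-passage and cleared only in $Exit$ after $2\_rme$ has been released, and any $2\_rme$-super-passage of $p$ is nested inside an $Adapt(M)$-super-passage, every (re-)execution of the $SIDE$ assignment inside one $2\_rme$-super-passage yields the same value, including after a crash. For \emph{no concurrent super-passages} (Property~\ref{prop:sp}) I would establish the invariant that throughout a $2\_rme$-super-passage $sp$ of $p$, process $p$ holds (is in the CS of) $fast\_path$ if its $2\_rme$-port is $RIGHT$ and $slow\_path$ if it is $LEFT$: $sp$ starts only after $p$ acquired that lock (the corresponding $Try$ returned non-$FALSE$, or the corresponding $Recover$ returned $CS$), $sp$ ends when $p$ completes $2\_rme.Exit$ inside $Adapt(M).Exit$, which the code does strictly before calling $fast\_path.Exit$ / $slow\_path.Exit$, and all intermediate main-CS steps keep $p$ in that lock's CS; crashes during $sp$ are bridged using CS re-entry of $fast\_path$ / $slow\_path$, which blocks any other process from entering their CS while $p$ recovers and returns to the same $PATH$ (hence same $SIDE$). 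Consequently, two concurrent $2\_rme$-super-passages of distinct processes with equal port would force both processes into the CS of the same inner lock at a common time, contradicting that lock's mutual exclusion (guaranteed by Claims~\ref{slow path correct} and~\ref{fast path correct}).

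The four invocation conditions (Properties~\ref{prop:recover}--\ref{prop:exit}) I would prove by induction on $E_2$, almost verbatim to the $fast\_path$ argument: the code always completes $2\_rme.Recover(SIDE[p])$ before invoking $2\_rme.Try$, before invoking $2\_rme.Exit$, and before entering $2\_rme$'s CS; $2\_rme.Try$ is re-invoked only after a crash inside $2\_rme.Try$ (well-behavedness of $2\_rme$ identifies this case); $p$ enters $2\_rme$'s CS only after $2\_rme.Try$ returns $TRUE$ or after a crash inside that CS, which is exactly when $STATUS[p]=CS$ on recovery; and $2\_rme.Exit(\cdot,FALSE)$ is invoked only when $2\_rme.Recover(SIDE[p])\ne TRY$, which by well-behavedness means $p$ is in $2\_rme$'s CS or crashed inside $2\_rme.Exit$ --- importantly, an abort occurring inside $2\_rme.Try$ is handled within $2\_rme$ itself, so $Adapt(M)$ never calls $2\_rme.Exit$ on its behalf (and the $Exit$ code skips the $2\_rme$ release precisely in that case, since then $2\_rme.Recover$ returns $TRY$).

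The main obstacle is Property~\ref{prop:sp}, and specifically proving that $p$ \emph{continuously} holds the inner lock for the whole duration of its $2\_rme$-super-passage across crash--recovery; this needs a careful matching of the acquire/release call nesting in $Try$ and $Exit$ with the precise definition of a $2\_rme$-super-passage, together with an appeal to CS re-entry of $fast\_path$ / $slow\_path$ to span crashes. Once that invariant is in place, mutual exclusion of $fast\_path$ and $slow\_path$ yields Property~\ref{prop:sp} directly, and the remaining conditions are routine bookkeeping over the code structure.
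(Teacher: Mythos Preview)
Your proposal is correct and matches the paper's approach: Properties~\ref{prop:recover}--\ref{prop:exit} are dispatched by analogy to the $slow\_path$ case, Property~\ref{prop:port} via stability of $PATH[p]$ (and hence $SIDE[p]$) across crashes, and Property~\ref{prop:sp} via mutual exclusion of $fast\_path$/$slow\_path$. The paper's Property~\ref{prop:port} argument is slightly more concrete---it traces the persistent variables $K\_OWNERS$ and $CURR\_K$ to show that a recovering process recomputes the same $PATH$ value, rather than asserting that $PATH[p]$ is ``written once'' (which is not literally true after a crash)---but otherwise the arguments coincide, and your Property~\ref{prop:sp} treatment is, if anything, more careful about spanning crashes than the paper's.
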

\begin{proof}
The proof of Properties~\ref{prop:recover}--~\ref{prop:exit} is analogous to the proof of the $slow\_path$ lock.

 Property~\ref{prop:port} (Constant port usage): Assume a process $p$ sets $SIDE[p]=RIGHT$, this means $PATH[pid]=FAST$, which means there is some $k$ such that $K\_OWNERS[k]=p$ and $CURR\_K[p]=k$. These can only change in $Adapt(M)$'s Exit section, after $p$ has finished the $2\_rme$ Exit section. So if $p$ crashes while accessing $2\_rme$ then it either executes the $Adapt(M)$ $Entry$ section again, where it will again set $SIDE[p]=RIGHT$, or $Adapt(M)$'s CS or Exit section, where it will not change $SIDE[p]$. Similarly, if $p$ sets $SIDE[p]=LEFT$ then there is no $k$ such that $K\_OWNERS[k]=p$ and $CURR\_K[p]=B$, causing $p$ to skip the while loop in $Adapt(M)$'s Try section if it executes it again in this super-passage.  Thus $p$ will not change $PATH[pid]$, and consequently will also set $SIDE[p]$ to $LEFT$ again, and so constant port usage is satisfied.

 Property~\ref{prop:sp} (No concurrent super-passages): In order for some process $p$ to set $PATH[pid]=RIGHT$ (respectively, $LEFT$), it must be in the CS of the $fast\_path$ (respectively, $slow\_path$) lock, and since $p$ sets it back to $\perp$ before releasing the $fast\_path$ (respectively, $slow\_path$, and both of them satisfy mutual exclusion, then no concurrent super-passages is satisfied.
\end{proof}

\begin{restatable}{claim}{TransformationAllRmeCorrect}\label{transformation well behaved}
Any execution of each of the auxiliary locks in $Adapt(M)$ is well-behaved.
\end{restatable}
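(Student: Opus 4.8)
The plan is to obtain this claim as a direct consolidation of \cref{slow path correct,fast path correct,2 rme correct}, which already establish, one auxiliary lock at a time, that the projection of any well-formed execution of $Adapt(M)$ onto $slow\_path$, onto $fast\_path$, and onto $2\_rme$ satisfies all six conditions of~\cref{defn:well-formed}, i.e., is itself well-formed. Once this is in hand, the rest is immediate: $slow\_path$ is the algorithm $M$, which is well-behaved (and satisfies the listed RME properties) by hypothesis, while $fast\_path$ and $2\_rme$ are instances of the $W$-port algorithm, which is well-behaved by~\cref{recover is correct}. Hence, on every sub-execution induced by $Adapt(M)$, each auxiliary lock's $Recover$ correctly reports where the calling process crashed, and each auxiliary lock continues to satisfy mutual exclusion, starvation-freedom, bounded abort, wait-free CS re-entry, and super-passage wait-free exit. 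That is exactly the statement of the claim.

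The one point that needs care is the order in which the three prior claims are cited, so as to rule out circularity. The proof of Property~\ref{prop:sp} (no concurrent super-passages) for $2\_rme$ appeals to mutual exclusion of $fast\_path$ and of $slow\_path$: a process sets $SIDE[pid]=RIGHT$ only while it holds $fast\_path$ and $SIDE[pid]=LEFT$ only while it holds $slow\_path$, so two processes cannot use the same $2\_rme$ port at overlapping times. Invoking mutual exclusion of $fast\_path$ and $slow\_path$ requires that their sub-executions be well-formed, which is precisely \cref{slow path correct,fast path correct}; but those two claims are proved by induction over the respective lock's own sub-execution, with no reference to $2\_rme$. So the dependency is one-directional, and the correct sequencing is: first establish well-formedness---and thus mutual exclusion, starvation-freedom, bounded abort, and the wait-free properties---for $slow\_path$ and $fast\_path$, and only then use these to establish well-formedness for $2\_rme$.

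I therefore do not expect a genuine obstacle here beyond the bookkeeping already carried out in the three preceding claims. The single subtlety worth flagging is that the argument presumes $M$ is itself well-behaved, so that $slow\_path.Recover$ can be trusted to steer a recovering process correctly through the recovery and abort flows of~\cref{fig:adaptive_transformation}; this is in any case required for $Adapt(M)$ to be well-behaved (\cref{transform recover is correct}), so it should be read as part of the standing hypothesis on $M$.
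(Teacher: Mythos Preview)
Your proposal is correct and matches the paper's approach: the paper's own proof is the one-liner ``Follows from~\cref{slow path correct,fast path correct,2 rme correct}.'' Your additional remarks about the acyclic dependency (well-formedness of $fast\_path$ and $slow\_path$ must be established before their mutual exclusion is invoked in the proof of Property~\ref{prop:sp} for $2\_rme$) and about the standing hypothesis that $M$ is well-behaved are accurate elaborations that the paper leaves implicit.
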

\begin{proof}
Follows from~\cref{slow path correct,fast path correct,2 rme correct}.
\end{proof}

\begin{lemma}\label{transform me}
Algorithm $Adapt(M)$ satisfies mutual exclusion.
\end{lemma}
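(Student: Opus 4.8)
The plan is to show that at most one process is in the CS of $Adapt(M)$ at any time, by tracing the structure of the algorithm: a process is in the $Adapt(M)$ CS only after it has set $STATUS[pid]:=CS$, which happens only after it completes $2\_rme.Try(SIDE[pid])$ successfully (i.e. receives $TRUE$, or returns from $Recover$ with $CS$). So the claim reduces to showing that at most one process is in the CS of the $2\_rme$ lock at a time, and this follows from the fact (\cref{transformation well behaved}, together with \cref{slow path correct,fast path correct,2 rme correct}) that every auxiliary lock's execution is well-formed, hence each auxiliary lock—being an instance of Algorithm $M$ or the $W$-port algorithm—satisfies mutual exclusion by \cref{w port proprties}.

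Concretely, I would argue as follows. First, from the code, a process $p$ is in the CS of $Adapt(M)$ only when $STATUS[pid]=CS$, and $STATUS[pid]$ is set to $CS$ only as the last line of $Try$, after $p$ has passed through the $2\_rme$ block (lines~\ref{transform:2_rme_start}--\ref{transform:2_rme_end}) without the $if$-branch returning $FALSE$; combined with well-behavedness of $2\_rme$ (\cref{2 rme correct}) and well-formedness of the $Adapt(M)$ execution, this means $p$ is in the CS of $2\_rme$ throughout the time it is in the CS of $Adapt(M)$ (it does not invoke $2\_rme.Exit(\cdot,FALSE)$ until it re-enters the $Exit$ section, lines~\ref{transform:2_rme_release_start}--\ref{transform:2_rme_release_end}). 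Second, since the $2\_rme$ execution is well-formed (part of \cref{transformation well behaved}) and $2\_rme$ is an instance of the $W$-port abortable RME algorithm (Algorithm $M$ with $D=2$), it satisfies mutual exclusion by \cref{w port proprties}. Hence at most one process is in the CS of $2\_rme$ at any time, so at most one process is in the CS of $Adapt(M)$ at any time.

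The one subtlety to handle carefully—and the step I expect to be the main obstacle—is correctly relating ``$p$ is in the CS of $Adapt(M)$'' to ``$p$ is in the CS of $2\_rme$'' across crashes. When $p$ crashes inside the $Adapt(M)$ CS, well-behavedness of $Adapt(M)$ (\cref{transform recover is correct}) makes $Recover$ return $CS$, so $p$ re-enters the $Adapt(M)$ CS directly; meanwhile, the first thing the exit path does is call $2\_rme.Recover(SIDE[pid])$ and only releases $2\_rme$ if that returns a non-$TRY$ value—so I need to check that $p$'s invariant ``in $2\_rme$'s CS'' is preserved through such a crash, which follows from $2\_rme$'s own CS re-entry / well-behavedness and the fact that $SIDE[pid]$ is not changed while $STATUS[pid]=CS$ (indeed $SIDE[pid]$ is only written in $Try$ before the $2\_rme$ block and reset to $\perp$ in $Exit$, by the argument in \cref{2 rme correct}'s constant-port-usage case). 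Once that bookkeeping is pinned down, the mutual exclusion of $Adapt(M)$ is immediate from that of $2\_rme$.
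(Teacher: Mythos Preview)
Your proposal is correct and follows essentially the same approach as the paper: reduce mutual exclusion of $Adapt(M)$ to mutual exclusion of the $2\_rme$ lock via \cref{transformation well behaved}. The paper's proof is just two sentences (``If some process $p$ is in the CS of $Adapt(M)$, then it must be in the CS of the $2\_rme$ lock.~\cref{transformation well behaved} implies that the $2\_rme$ lock satisfies mutual exclusion, so the claim follows.''), whereas you spell out in more detail why being in $Adapt(M)$'s CS implies being in $2\_rme$'s CS, including the crash case; that extra care is fine but not required at the level of rigor the paper adopts.
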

\begin{proof}
If some process $p$ is in the CS of $Adapt(M)$, then it must be in the CS of the $2\_rme$ lock.~\cref{transformation well behaved} implies that the $2\_rme$ lock satisfies mutual exclusion, so the claim follows.
\end{proof}

\begin{lemma}\label{transform sf}
Algorithm $Adapt(M)$ satisfies starvation-freedom.
\end{lemma}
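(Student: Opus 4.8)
The plan is to mimic the reductions already used for the $W$-port lock (\cref{w-port SF}) and the tournament tree (\cref{tree sf}): reduce starvation-freedom of $Adapt(M)$ to the starvation-freedom of the three auxiliary locks $fast\_path$, $slow\_path$, and $2\_rme$. Assume throughout that the total number of crashes is finite, that $p$ takes infinitely many steps, and that every process that enters the CS of $Adapt(M)$ eventually leaves it. Suppose, toward a contradiction, that $p$ is in a super-passage $sp$, does not receive an abort signal during $sp$, and never enters the CS of $Adapt(M)$ in $sp$. Since $STATUS[p]$ is set to $CS$ exactly when $p$ enters the CS, $STATUS[p]$ never becomes $CS$ (nor $ABORT$, as $p$ does not abort), so after the last crash $p$ executes forever inside \emph{Try}; by the fairness assumption $p$ keeps taking steps in whatever sub-lock it is currently blocked on.

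The first and main step is an auxiliary claim: \emph{every process that enters the CS of one of the three auxiliary locks eventually invokes and completes an execution of that lock's \emph{Exit} method, thereby releasing the lock}. By \cref{transformation well behaved} the projections of the execution onto each auxiliary lock are well-formed and the locks are well-behaved; moreover each auxiliary lock is either the input lock (starvation-free, bounded-abort, wait-free-exit by hypothesis) or an instance of the $W$-port algorithm (which has all these properties by \cref{w port proprties}). I would prove the claim in the order $2\_rme$, then $fast\_path$, then $slow\_path$; there is no circularity because the $2\_rme$ case bottoms out in the hypothesis about $Adapt(M)$'s CS rather than in another auxiliary lock. For $2\_rme$: a process $q$ in $2\_rme$'s CS has just set $STATUS[q]:=CS$ and returned $TRUE$ from $Adapt(M).Try$, so it is in $Adapt(M)$'s CS; by hypothesis it leaves it and invokes $Adapt(M).Exit(q,FALSE)$, where (since $SIDE[q]\ne\perp$ and $2\_rme.Recover$ returns $CS\ne TRY$ by well-behavedness) it calls $2\_rme.exit$, which it completes because there are finitely many crashes and $2\_rme$ is wait-free-exit. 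For $fast\_path$ (and symmetrically $slow\_path$): a process $q$ in $fast\_path$'s CS sets $SIDE[q]:=RIGHT$ and then — whether or not an abort is signalled, using bounded abort of $2\_rme$ when it is — reaches one of (i) $2\_rme.Recover(RIGHT)$ returning $CS$, (ii) $2\_rme.Try(RIGHT)$ returning $TRUE$, both of which put $q$ in $2\_rme$'s CS (handled by the previous case), or (iii) $2\_rme.Try(RIGHT)$ returning $FALSE$, which happens only under an abort signal, in which case $q$ sets $STATUS[q]:=ABORT$ and invokes $Adapt(M).Exit(q,TRUE)$. In cases (i)--(ii), $q$ enters $Adapt(M)$'s CS, leaves it by hypothesis, and invokes $Adapt(M).Exit$ (if $q$ never receives an abort signal, starvation-freedom of $2\_rme$, together with the $2\_rme$ part of the claim and finitely many crashes, guarantees that (i) or (ii) is reached). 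In all cases $q$ reaches $Adapt(M).Exit$, where, since $K\_OWNERS[\cdot]=q$ still holds and $fast\_path.Recover$ returns $CS\ne TRY$, $q$ invokes and (wait-free exit + finitely many crashes) completes $fast\_path.Exit$.

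The second step is the layered argument for $p$ itself, exactly paralleling \cref{w-port SF}. After the last crash $p$ runs the port-capture loop, which is purely local and terminates within $B$ of $p$'s own steps: $p$ either (a) captures some port $k$, sets $PATH[p]=FAST$, and is guided by $fast\_path.Recover$ either directly into $fast\_path$'s CS or into invoking $fast\_path.Try(k)$; or (b) exhausts $K\_OWNERS$, sets $PATH[p]=SLOW$, and is guided into $slow\_path$'s CS or into invoking $slow\_path.Try(p)$. In case (a), if $p$ is not already in $fast\_path$'s CS, then since $p$ does not abort, there are finitely many crashes, $p$ keeps taking steps, and — by the auxiliary claim — every process in $fast\_path$'s CS eventually leaves it, starvation-freedom of $fast\_path$ gives that $fast\_path.Try(k)$ returns $TRUE$ and $p$ enters $fast\_path$'s CS; $p$ then sets $SIDE[p]:=RIGHT$ and, by the same reasoning with $2\_rme$ in place of $fast\_path$, acquires $2\_rme$ from the right, sets $STATUS[p]:=CS$, and enters $Adapt(M)$'s CS — contradicting the assumption. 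Case (b) is symmetric, using starvation-freedom of $slow\_path$ (by hypothesis) and then of $2\_rme$. This contradiction proves the lemma.

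I expect the main obstacle to be the bookkeeping in the first step: tracking, for a process that is logically ``inside'' an auxiliary lock's CS, which auxiliary locks it still holds and what each lock's \emph{Recover} returns along the \emph{Exit} flow — in particular distinguishing a process that \emph{aborted} a lock (so its \emph{Recover} returns $TRY$ and no \emph{Exit} is executed for it) from one that \emph{acquired} it, and verifying that the chain ``holds $fast\_path$/$slow\_path$ $\Rightarrow$ reaches $2\_rme$ $\Rightarrow$ reaches $Adapt(M)$'s CS $\Rightarrow$ (by hypothesis) leaves it $\Rightarrow$ releases everything'' has no gap. Everything here rests on well-behavedness (\cref{transformation well behaved}, \cref{slow path correct}, \cref{fast path correct}, \cref{2 rme correct}) and on the bounded-abort and wait-free-exit guarantees of the component locks.
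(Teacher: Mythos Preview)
Your proof is correct and follows the same reduction the paper uses: assume finitely many crashes and that $p$ takes infinitely many steps without aborting yet never enters the CS, then argue $p$ must execute infinitely many steps in the \emph{Try} of one of the three auxiliary locks, contradicting that lock's starvation-freedom. The paper's own proof is a terse two-sentence sketch that simply invokes the auxiliary locks' starvation-freedom without verifying its precondition (``every process that enters the CS eventually leaves it'') for each auxiliary lock; your auxiliary claim fills exactly that gap, so your argument is strictly more complete than the paper's.
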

\begin{proof}
Assume there are finitely many crash steps in the execution.
Suppose, towards a contradiction, that process $p$ executes infinitely many steps without entering the $CS$ of $Adapt(M$)
and without receiving an abort signal.
Since there are finitely many crashes, it follows that $p$ executes infinitely many steps in the $Try$ of some auxiliary lock $M$, which contradicts the fact that these locks are starvation-free.
\end{proof}

\begin{lemma} \label{transform bounded abort}
Algorithm $Adapt(M)$ satisfies bounded abort.
\end{lemma}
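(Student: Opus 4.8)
The plan is to show that once $p$ has the abort signal set while executing $Adapt(M).Try$ and thereafter takes sufficiently many of its own steps without crashing, it reaches one of the return statements of \emph{Try}. The key enabling fact is \cref{transformation well behaved}: every execution of each of the three auxiliary locks ($slow\_path$, $fast\_path$, $2\_rme$) is well-formed, so each of these locks satisfies all the RME properties it is assumed or known to possess. In particular, $fast\_path$ and $2\_rme$ are instances of our $W$-port algorithm and hence satisfy bounded abort and wait-free exit by \cref{w port proprties}; $slow\_path = M$ satisfies bounded abort by hypothesis and wait-free exit because it satisfies super-passage wait-free exit; and each auxiliary lock's \emph{Recover} is wait-free (trivially so for $fast\_path$ and $2\_rme$, and as usual for the base lock $M$). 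Moreover, the abort signal set on $p$ is visible to $p$ throughout its entire $Adapt(M)$-super-passage, hence also throughout any sub-super-passage $p$ runs in an auxiliary lock; so by bounded abort of the relevant auxiliary lock, every call \emph{fast\_path.Try}, \emph{slow\_path.Try}, or \emph{2\_rme.Try} that $p$ issues with the signal set completes after finitely many of $p$'s steps --- returning $FALSE$, or possibly $TRUE$ if the abort was signalled just as the lock was acquired, but in either case returning.

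Given these facts I would trace the control flow of \emph{Try}. The port-capture loop (lines~\ref{transform:get_k_start}--\ref{transform:get_k_end}) runs for at most $B$ iterations, since the loop variable $k$ strictly increases on each iteration and the guard is $k < B$, and each iteration costs $O(1)$ of $p$'s steps plus, in at most one iteration, a single \emph{fast\_path.Recover} and \emph{fast\_path.Try} call. After the loop, $p$ performs at most one $slow\_path$ sequence (a \emph{Recover} followed by at most one \emph{Try}, lines~\ref{transform:slow_start}--\ref{transform:slow_end}), then one $2\_rme$ sequence (a \emph{Recover} followed by at most one \emph{Try}, lines~\ref{transform:2_rme_start}--\ref{transform:2_rme_end}), together with $O(1)$ bookkeeping steps, after which it either sets $STATUS[pid] := CS$ and returns, or --- in at most one of the preceding stages --- branches to setting $STATUS[pid] := ABORT$, invoking \emph{Exit} with an aborting flag, and returning $FALSE$. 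Every \emph{Recover} and \emph{Try} call above terminates in finitely many of $p$'s steps by the facts of the previous paragraph, and the loop is bounded, so it only remains to argue that the possibly-invoked subroutine call \emph{Exit} terminates in finitely many of $p$'s steps when $p$ does not crash; but \emph{Exit} issues at most one \emph{2\_rme.Exit}, one \emph{fast\_path.Exit}, and one \emph{slow\_path.Exit} call, each preceded by a terminating \emph{Recover} call and each itself terminating by the corresponding auxiliary lock's wait-free exit property, plus $O(1)$ further steps. Composing these bounds, $p$ completes \emph{Try} after finitely many of its own steps, which is precisely bounded abort for $Adapt(M)$.

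The main obstacle I anticipate is the case analysis itself: one must verify that \emph{every} exit path of \emph{Try} reaches a return within a bounded number of $p$'s steps, including (a) the nested abort branches that invoke \emph{Exit} as a subroutine, (b) the situation in which $p$ re-enters \emph{Try} having already acquired one or more auxiliary locks in an earlier (crashing) passage of the same super-passage, so that its re-traversal is steered past those locks by the auxiliary \emph{Recover} calls rather than by fresh \emph{Try} calls, and (c) the branches in which an auxiliary \emph{Try} returns $TRUE$ despite the raised abort signal, so that $p$ descends further down the code before ultimately returning. Once \cref{transformation well behaved} is established, though, no auxiliary lock can block $p$ (bounded abort and wait-free exit rule that out) and the only $Adapt(M)$-level loop is the finite port scan, so this enumeration is the entire content of the proof.
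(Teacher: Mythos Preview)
Your proposal is correct and follows essentially the same approach as the paper: invoke \cref{transformation well behaved} so that each auxiliary lock enjoys bounded abort and wait-free exit, observe that the only $Adapt(M)$-level loop (the port scan) is bounded, and conclude that every call along the control flow of \emph{Try} (including the subroutine call to \emph{Exit}) terminates in finitely many of $p$'s own steps. If anything, your case analysis is more careful than the paper's brief sketch---in particular, you explicitly handle the case where an auxiliary \emph{Try} returns $TRUE$ despite the abort signal, which the paper's proof glosses over.
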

\begin{proof}
Assume there are finitely many crash steps in the execution.
Assume $p$ is in its $Try$ section and receives an abort signal.  After a finite number of its own steps, $p$ either returns or reaches $M.Try()$ for some auxiliary lock $M$. Since the execution of $M$ is well-formed, and $p$ has abort signalled, after a finite number of $p$'s steps, $M.Try()$ returns, and $p$ executes the Exit code of the $Adapt(M)$, which is wait-free.
\end{proof}

\begin{claim} \label{point contention acquire port}
Denote by $start(p)$ the time at which $p$ starts its latest super-passage. If at time $t$, process $p$ fails to acquire port $i$, then there exist a time $t_i^p \in (start(p), t]$ such that at $t_i^p$ there is a set $X_i^p$ of $i$ processes, such that for each process $q \in X_i^p$ the following holds: $t_i^p > start(q)$ and $q$ successfully acquired a port number $\leq i$ by time $t$.
\end{claim}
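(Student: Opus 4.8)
The plan is to prove the claim by induction on $i$, with $t^p_i:=t$. It is convenient to first record three facts about the algorithm. \textbf{(F1)} Within one super-passage a process scans $K\_OWNERS$ in non-decreasing slot order and captures at most one port, since the \emph{Try} loop only increments $CURR\_K$ and breaks on a successful capture. \textbf{(F2)} Once $K\_OWNERS[k]$ holds a process id $q\neq\perp$, only $q$ can reset it, and only inside \emph{Exit} after completing its $fast\_path$-super-passage with port $k$ (this is the argument in the proof of~\cref{fast path correct}); hence $q$ holds port $k$ throughout a single super-passage, and whenever $K\_OWNERS[k]=q$ at a time $\tau$, the process $q$ is then in an ongoing super-passage that started strictly before $\tau$. \textbf{(F3)} ``$p$ fails to acquire port $i$ at time $t$'' means $p$'s $CAS$ on $K\_OWNERS[i]$ at time $t$ returns false, i.e.\ $K\_OWNERS[i]\notin\{\perp,p\}$ at that instant. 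For the base case $i=0$ take $X^p_0=\emptyset$: since $p$ must have reached slot $0$ inside its current super-passage, that super-passage began before $t$, so $t^p_0=t\in(start(p),t]$.

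For the inductive step assume $p$ fails slot $i$ at time $t$, and let $q$ be the value of $K\_OWNERS[i]$ at time $t$; by (F3), $q\notin\{\perp,p\}$. By (F2) there is a last successful $CAS(K\_OWNERS[i],\perp,q)$ before $t$, executed by $q$ at some time $b<t$ inside a super-passage $sp$ of $q$, and $K\_OWNERS[i]=q$ holds throughout $[b,t]$, so $sp$ is still active at $t$ and $start(sp)<b<t$. By (F1), in $sp$ the process $q$ processed and failed slots $0,1,\dots,i-1$ before reaching slot $i$; in particular $q$ fails slot $i-1$ at some time $\theta$ with $start(sp)<\theta<b<t$, and (by (F2)) $sp$ is $q$'s current super-passage at time $\theta$. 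Applying the induction hypothesis for $i-1$ to the process $q$ at time $\theta$ yields a set $Y$ of $i-1$ pairwise-distinct processes, each $\neq q$, each of which captured some port $\le i-1$ at some time $<\theta<t$. Put $X^p_i:=Y\cup\{q\}$. Then $|X^p_i|=i$ (as $q\notin Y$), every member captured a port $\le i$ at a time $<t$ and is therefore in a super-passage that started before $t=t^p_i$, and every such capture occurred ``by time $t$''\,, which is what the claim requires.

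The substance of the proof — and the main obstacle — is pinning down an inductive invariant strong enough that $X^p_i$ really has $i$ elements. Distinctness is not automatic: the naive attempt of taking, for each slot $k<i$, the process owning $K\_OWNERS[k]$ at the instant $p$ fails slot $k$, breaks down, because such a process can release its port and go on to capture a higher port before $p$ reaches that higher slot, so these owners need not be distinct. The fix is exactly the recursion into $q$ above, but for it to go through one must strengthen the inductive statement so that the set it produces is always disjoint from the process it is applied to (and, for the downstream applications, disjoint from $p$ as well, which requires tracking that each returned process captured its port within a super-passage overlapping the ascent being analyzed). Getting this bookkeeping right is the only delicate point; everything else is a routine unwinding of the \emph{Try}/\emph{Exit} control flow together with invariant (F2).
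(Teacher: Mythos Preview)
Your approach diverges from the paper's in a way that breaks the argument. You pin $t_i^p := t$ and always recurse into the blocker $q$; the paper instead lets $t_i^p$ \emph{vary} across the induction, splitting into two cases according to whether $start(q) < t_i^p$ (keep $X_i^p$, add $q$, leave $t_{i+1}^p = t_i^p$) or $start(q) \ge t_i^p$ (discard $X_i^p$, apply the hypothesis to $q$ instead, and advance $t_{i+1}^p$ to $t_i^q$). That floating time stamp is exactly what makes the distinctness argument go through: every member of $X_i^p$ is witnessed in the super-passage that is current at $t_i^p$, so a blocker whose current super-passage also straddles $t_i^p$ cannot already be in the set without having captured two ports in one super-passage.

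Your assertion that the hypothesis applied to $q$ yields a set $Y$ with ``each $\neq q$'' is the gap, and it is real. Take $q$ in an earlier super-passage capturing port~$1$; $r'$ is blocked by $q$ at port~$1$ and captures port~$2$; $q$ finishes, starts a fresh super-passage, is blocked at ports $0,1$ (by two other processes) and at port~$2$ by $r'$, and captures port~$3$; finally $p$ is blocked at port~$3$ by $q$. Your recursion traces $p \to q \to r' \to q$, the last $q$ being the one in the old super-passage, so your $X_3^p$ collapses to $\{q,r'\}$. The claim for $i=3$ is still true here (there are four candidate processes around), but your construction cannot find three of them.

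You correctly diagnose in your last paragraph that a stronger invariant is needed---one tying each returned process's capture to a super-passage overlapping the current ascent---but with $t_i^p$ frozen at $t$ there is no handle for maintaining it. The paper's moving $t_i^p$ \emph{is} that invariant: it records a moment at which all members of $X_i^p$ are simultaneously in their current super-passages, and it is advanced only when the new blocker's super-passage started too late to be covered by the old time stamp.
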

\begin{proof}
By induction. For $i=0$, pick $t_0^p=t$ and $X_0^p=\{q\}$, where $q$ is the process that causes $p$'s CAS on port 0 to fail.
For the inductive step, assume the claim is true up to some $i$. Consider the process $q$ that causes $p$'s CAS on port $i+1$ to fail. There are two possible cases:

\begin{enumerate}
\item $start(q) < t_i^p$. If $q\in X_i^p$ then $q$ has previously acquired a port $\leq i$. However, this contradicts the fact that $q$ has now acquired port $i+1$ in the same super-passage. Thus, $q \notin X_i^p$. Then we define $t_{i+1}^p = t_i^p$ and $ X_{i+1}^p =  X_i^p \cup \{ q\}$.

\item $start(q) \geq t_i^p$. The fact that $q$ acquired port $i+1$ means that $q$ failed to acquire port $i$ before time $t$. Consider the time $t_i^q > start(q)$ and set $X_i^q$ from the induction hypothesis applied to $q$. We have $t_i^p \leq start(q) < t_i^q  \leq t$. Also, $p \notin X_i^q$, since $p$ does not acquire any port by time $t$. Thus, we can define $t_{i+1}^p = t_i^q$ and $X_{i+1}^p = X_i^q \cup \{q\}$.
\end{enumerate}
\end{proof}

\begin{claim} \label{point contention complexity}
If some process $p$ fails to acquire some port $i$, then the point contention this process experiences is at least $i$.
\end{claim}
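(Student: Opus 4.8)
The plan is to obtain the claim as a direct corollary of \cref{point contention acquire port}. First I would fix $t$ to be the time at which $p$ executes the failing CAS on $K\_OWNERS[i]$; note that $p$ is executing the Try section of $Adapt(M)$ throughout $[start(p),t]$, so $[start(p),t]$ lies inside $p$'s current super-passage. Applying \cref{point contention acquire port} then gives a time $t_i^p\in(start(p),t]$ and a set $X_i^p$ of (at least) $i$ processes, each $q\in X_i^p$ satisfying $start(q)<t_i^p$ and having successfully acquired some port numbered at most $i$ by time $t$. In particular $t_i^p$ itself lies inside $p$'s super-passage, since $start(p)<t_i^p\le t$.

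Next I would argue that every $q\in X_i^p$ executes the algorithm concurrently with $p$'s super-passage, so that (as $p\notin X_i^p$ and $|X_i^p|\ge i$) the point contention $p$ experiences is at least $i$. On one side, $q$ begins its super-passage at $start(q)<t$, a time inside $p$'s super-passage, so $q$'s super-passage does not start after $p$'s ends. On the other side, $q$ captures a port during this super-passage, and by inspection of the code a process relinquishes a captured port only during an execution of the \emph{Exit} method (\cref{transform:release_k}); hence $q$ is still in this super-passage at the instant it holds that port, and that instant is not before $start(p)$ --- in the base case of \cref{point contention acquire port} it is precisely the moment some process's CAS fails, which is after that process (and hence after $p$) started, and the inductive step keeps this overlap through the common witness time $t_i^p$. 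Thus $q$'s super-passage meets $[start(p),t]$, as required.

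The hard part is this middle step, because \cref{point contention acquire port} as stated only records that each $q\in X_i^p$ \emph{acquired} a port by $t$ and \emph{started} before $t_i^p$ --- a priori this leaves open the possibility that $q$ started and also completed its entire super-passage before $p$ began, contributing nothing to $p$'s point contention. Eliminating that possibility means returning to the inductive construction of $X_i^p$ and checking that each of its members is, directly or through the recursion on a colliding process, the holder of a port observed by some failing CAS, and that such an observation always occurs inside the super-passage that ``owns'' the witness time $t_i^p$; the remaining details are just bookkeeping about when ports are captured and released in $Adapt(M)$.
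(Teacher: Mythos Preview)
Your approach matches the paper's: both derive the claim from \cref{point contention acquire port}. The paper's entire proof is the single line ``Immediate from \cref{point contention acquire port}'', whereas you expand on why the processes in $X_i^p$ actually overlap $p$'s super-passage.

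Your extra care is justified. As you note, the bare statement of \cref{point contention acquire port} only records $start(q)<t_i^p$ and that $q$ acquired some port by time $t$; read literally, this does not rule out $q$ having finished its super-passage before $start(p)$. The paper's phrasing ``at $t_i^p$ there is a set $X_i^p$ of $i$ processes'' is evidently meant to convey that all $i$ processes are \emph{active} at $t_i^p$, and the inductive construction in the proof of \cref{point contention acquire port} does establish this stronger fact (in the base case $q$ holds port $0$ at $t$; in case~1 the new $q$ holds port $i{+}1$ at the time of $p$'s failed CAS, which is $\ge t_i^p$, while the old members are active at $t_i^p$ by induction; in case~2 one inherits activity at $t_i^q$ from the hypothesis applied to $q$, and $q$ itself is still scanning ports at $t_i^q$). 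So your plan to revisit the induction is exactly the right way to close the gap, and what you sketch is correct.

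In short: same route as the paper, but you make explicit the concurrency argument that the paper leaves implicit in the word ``immediate''.
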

\begin{proof}
Immediate from~\cref{point contention acquire port}
\end{proof}

\begin{restatable}{theorem}{transformMain}\label{transformation proprties}
Let $M$ be some $N$-port RME algorithm that satisfies mutual exclusion, starvation-freedom, bounded abort, CS re-entry, wait-free CS re-entry, and super-passage wait-free exit, with passage complexity $B < W$, super-passage complexity $B^*$, and space complexity $S$.
If every execution of Algorithm $Adapt(M)$ is well-formed, then Algorithm $Adapt(M)$ satisfies mutual exclusion, starvation-freedom, bounded abort, CS re-entry, wait-free CS re-entry, wait-free exit, and super-passage wait-free exit. The passage complexity of Algorithm $Adapt(M)$ in both the CC and DSM models is $O(\min(K, B))$, the super-passage complexity is $O(K + F)$ if $K<B$ and $O(B^* + F)$ otherwise. The space complexity of Algorithm $Adapt(M)$ is $O(S + N + B^2)$.
\end{restatable}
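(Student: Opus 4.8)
The plan is to derive the theorem by combining the lemmas already proved for $Adapt(M)$ with a direct RMR count. Mutual exclusion, starvation-freedom, and bounded abort are exactly \cref{transform me,transform sf,transform bounded abort}. For CS re-entry, wait-free CS re-entry, wait-free exit, and super-passage wait-free exit I would use two facts: $Adapt(M)$ is well-behaved (\cref{transform recover is correct}) and its $Recover$ touches a single word, hence is wait-free; and each of the three auxiliary locks ($slow\_path$, $fast\_path$, $2\_rme$)---whose executions are well-formed by \cref{transformation well behaved}---satisfies CS re-entry, wait-free CS re-entry, and super-passage wait-free exit (by hypothesis on $M$, and by \cref{w port proprties} for the $W$-port instances). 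Concretely, a process $p$ that crashes while in $Adapt(M)$'s CS (i.e.\ with $STATUS[p]=CS$) is logically inside the $2\_rme$ CS, so $2\_rme$'s CS re-entry keeps every other process out of the $2\_rme$ CS, hence out of $Adapt(M)$'s CS, until $p$ is steered back to the CS by $Recover$ in $O(1)$ steps; this yields CS re-entry and wait-free CS re-entry. For (super-passage) wait-free exit, $Adapt(M).Exit$ is a straight-line sequence of $Recover$ and $Exit$ invocations on $2\_rme$ and then on the fast- or slow-path lock; each is wait-free, and under finitely many crashes super-passage wait-free; after each crash $Recover$ returns $EXIT$ and re-drives these invocations through their own $Recover$ methods, which report an already-released lock as $TRY$, so $p$ never re-enters a $Try$ section and completes $Exit$ after finitely many of its own steps.

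For the passage complexity I would trace one crash-free passage. The only potentially unbounded work is the scan of $K\_OWNERS$: if $p$ captures port $m$, it failed the CASes at ports $0,\dots,m-1$, so \cref{point contention complexity} gives point contention $\ge m-1$ and thus $m=O(K)$, while always $m<B$; if $p$ captures no port it failed all $B$ CASes, so \cref{point contention complexity} forces $K=\Omega(B)$, after which $p$ enters the slow path at cost $B=O(\min(K,B))$. Every remaining step---the fast-path $Try$, the $2\_rme$ $Try$, and the symmetric $Recover$/$Exit$ calls in $Adapt(M).Exit$---costs $O(1)$ by \cref{w port proprties}, except for a slow-path release, which costs $B$ in the (only) case the slow path is used. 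Summing gives passage complexity $O(\min(K,B))$, in both the CC and DSM models, inherited from the auxiliary locks.

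For the super-passage complexity I would split the cost into the final crash-free passage plus an $O(1)$ per-crash overhead. The crucial point is that $CURR\_K[p]$ is persistent and monotone non-decreasing, so over an entire super-passage $p$ issues at most $B$ CAS attempts on $K\_OWNERS$ in total (plus $O(1)$ work at the current slot per crash), and \cref{point contention acquire port}---stated with $start(p)$ the start of $p$'s super-passage---bounds this by $O(\min(K,B))$, where now $K$ is the point contention of the whole super-passage. If $K<B$, \cref{point contention complexity} prevents $p$ from ever reaching the slow path, so it uses only $fast\_path$ and $2\_rme$, each of super-passage complexity $O(1+F)$ by \cref{w port proprties}; adding $O(F)$ for the $Adapt(M).Recover$ calls and the re-driven exits gives $O(K+F)$. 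If $K\ge B$, $p$ may use the slow path, contributing $B^{*}$, while all other work is $O(B+F)=O(B^{*}+F)$, hence $O(B^{*}+F)$. The space bound is immediate: the $S$ words of $slow\_path$, the $O(B^{2})$ words of the $D{=}B$ fast-path lock, $O(1)$ for the $D{=}2$ lock, and the $O(N)$ per-process words ($STATUS$, $SIDE$, $CURR\_K$, $PATH$) together with the $B$-word $K\_OWNERS$ array, i.e.\ $O(S+N+B^{2})$.

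I expect the main obstacle to be the crash bookkeeping in the super-passage bound: showing that because $CURR\_K$ persists and because a recovering process always routes through each auxiliary lock's $Recover$ method, neither a scan slot nor any auxiliary $Try$/$Exit$ invocation is charged more than $O(1)$ extra times per crash, so that $F$ enters additively rather than multiplicatively. A secondary nuisance is aligning the off-by-one in \cref{point contention complexity} (which as stated yields point contention $\ge i$ when port $i$ fails, whereas its proof actually exhibits $i{+}1$ concurrent processes) with the clean $K<B$ versus $K\ge B$ dichotomy in the statement.
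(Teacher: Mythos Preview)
Your proposal is correct and follows essentially the same approach as the paper: cite \cref{transform me,transform sf,transform bounded abort} for the three liveness/safety properties, deduce CS re-entry and wait-free CS re-entry from well-behavedness plus the wait-free \emph{Recover}, deduce (super-passage) wait-free exit from the auxiliary locks, and then count RMRs via \cref{point contention complexity} together with the persistence of $CURR\_K[p]$ across crashes. Your write-up is in fact more careful than the paper's in two places---you explicitly ground CS re-entry in the $2\_rme$ lock's CS re-entry property, and you spell out why a recovering process in \emph{Exit} never re-enters an auxiliary \emph{Try} (because each auxiliary \emph{Recover} reports an already-released lock as $TRY$)---whereas the paper dispatches these in one sentence each; the off-by-one you flag in \cref{point contention complexity} is likewise glossed over in the paper and does not affect the asymptotic bound.
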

\begin{proof}
Mutual exclusion follows from~\cref{transform me}.  Starvation-freedom follows from~\cref{transform sf}.
Bounded abort follows from~\cref{transform bounded abort}.
Since $Adapt(M)$ is well-behaved and its \emph{Recover} is wait-free, then CS re-entry and wait-free CS re-entry are satisfied.
Because each auxiliary lock satisfies wait-free exit, both wait-free exit and super-passage wait-free exit are satisfied.

Every time $p$ starts a new super-passage, $CURR\_K[p]=0$, since it is reset in the previous super-passage's Exit code (or is in the initial state).  Thus, $p$ tries to acquire a port number in the range of $0$ and $B$, and from~\cref{point contention complexity} it can fail at most $K$ times. If $p$ succeeds, then the RMR cost of acquiring and releasing the $fast\_path$ lock and $2\_rme$ lock is $O(1)$.  If $p$ fails, it  performs additional $O(B)$ RMRs to acquire and release the $slow\_path$ and $2\_rme$ locks.  Overall, the total RMR passage complexity is $O(min(K, B))$.

If $p$ crashes while trying to acquire the $fast\_path$ lock, then when recovering, it does not need to try to acquire a lock starting from $0$ again, but it starts from the last port it tried to acquire before crashing.  So if $K<B$, $p$ eventually succeeds in acquiring a port in $O(k + F)$ RMRs. If $K>B$, then it already performed $O(B + F)$ RMRs, and will need to perform additional $O(B^*)$ RMRs to acquire and release the $slow\_path$ and $2\_rme$ locks.

The space complexity of the $Adapt(M)$ is $O(S + N + B^2)$ since the space complexity of the $slow\_path$ is $O(S)$, the space complexity of the $fast\_path$ is $O(B^2)$, the space complexity of the $2\_rme$ lock is $O(1)$, and the transformation itself uses additional $O(N)$ space.
\end{proof}

\fi

\section{Putting It All Together \& Conclusion}
Let $T$ be the RME algorithm obtained by instantiating our tournament tree (\cref{chap:tree})
with our $W$-port abortable RME algorithm (\cref{chap:w_port}).  Then $T$'s RMR passage
complexity is $O(\log_W N) < W$, super-passage complexity is $O(\log_W N + F)$ and space complexity is $O(NW\log_W N)$. We can therefore apply the transformation
of~\cref{chap:transform} to $T$, obtaining our main result:

\begin{theorem}
There exists an abortable RME with $O(\min(K,\log_W N))$ RMR passage complexity, $O(F + \min(K,\log_W N))$
RMR super-passage complexity, and $O(NW\log_W N)$ space complexity where $K$ is the point contention, $W$ is the memory word size, $N$ is the
number of processes, and $F$ is the number of crashes in a super-passage.
\end{theorem}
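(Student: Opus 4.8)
The plan is to obtain the theorem purely by composing the three building blocks of the paper, in the order in which they were developed. I would first fix the $D$-port node lock of the tournament tree (\cref{chap:tree}) to be the $W$-port abortable RME lock of \cref{chap:w_port}, taking the tree arity $D=W$. The node lock has passage complexity $B=O(1)$, per-crash recovery cost $R=O(1)$ (this is exactly the ``$+F$'' term in its $O(1+F)$ super-passage bound), and space complexity $O(W^2)$. Substituting into the tree's bounds — passage $O(B\log_D N)$, super-passage $O(FR+B\log_D N)$, space $O\big(S\cdot\tfrac{N}{D}\log_D N\big)$ — yields a lock $T$ with passage complexity $O(\log_W N)$, super-passage complexity $O(F+\log_W N)$, and space complexity $O\big(W^2\cdot\tfrac{N}{W}\log_W N\big)=O(NW\log_W N)$. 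I would also record that $T$ inherits mutual exclusion, starvation-freedom, bounded abort, CS re-entry, wait-free CS re-entry, wait-free exit, and super-passage wait-free exit, since the node lock satisfies every property the tree composition requires of its node lock.

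Next I would verify the precondition of the adaptivity transformation (\cref{chap:transform}), namely that the passage complexity of the lock being transformed lies below the word size. Here that lock is $T$, whose passage complexity is $O(\log_W N)$, and $\log_W N \le W$ for any non-degenerate word size (equivalently $N\le W^W$), which holds in the regime of interest and in particular under the standard assumption $W=\Theta(\log N)$. I would then apply the transformation to $T$ with base-lock parameters $B=O(\log_W N)$, super-passage complexity $B^{*}=O(F+\log_W N)$, and space $S=O(NW\log_W N)$. The transformed lock then has: passage complexity $O(\min(K,B))=O(\min(K,\log_W N))$; super-passage complexity $O(K+F)$ when $K<B$ and $O(B^{*}+F)=O(F+\log_W N)$ otherwise, which collapses to the single bound $O(F+\min(K,\log_W N))$ after a one-line case split; and space complexity $O(S+N+B^2)=O(NW\log_W N)$, with the first term dominating. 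All RME properties carry over once more, since \cref{chap:transform} preserves them.

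The only bookkeeping that needs genuine care — and which I expect to be the one real subtlety — is confirming that each layer feeds the layer below it a \emph{well-formed} execution, so that the black-box guarantees actually apply. Concretely: the tree must use statically assigned ports (the child index) so that, together with mutual exclusion at the child lock, the node lock sees constant port usage and no concurrent same-port super-passages (conditions~\ref{prop:port}--\ref{prop:sp}); and the transformation must similarly drive its $W$-port ``fast-path'' instance (with $D=B$) and its $2$-port instance (with $D=2$) with well-formed executions, which follows from the CAS-based port-capture protocol on $K\_OWNERS$ and from mutual exclusion between the fast and slow paths. These composition facts are already established in \cref{chap:tree} and \cref{chap:transform}, so the proof of the theorem just threads them together and performs the mechanical substitutions above. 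The remaining fiddly points — pinning down that the hidden constant in $T$'s $O(\log_W N)$ passage cost still leaves headroom below $W$, and merging the two super-passage cases into $O(F+\min(K,\log_W N))$ — are the only spots requiring more than a one-line argument.
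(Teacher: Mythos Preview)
Your proposal is correct and follows exactly the paper's approach: instantiate the tournament tree with the $W$-port lock (taking $D=W$) to obtain $T$, read off $T$'s complexities, and then feed $T$ into the adaptivity transformation. The paper's own proof is in fact a single paragraph that does precisely this composition with less arithmetic detail than you provide; your additional bookkeeping on well-formedness and on collapsing the two super-passage cases is accurate and already covered by the cited sections.
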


Many questions about ME properties in the context of RME remain open, and we are far from understanding
how the demand for recoverability affects the possibility of obtaining other desirable properties and
their cost.  Can the sublogarithmic RMR bounds be improved using only primitives supported in hardware,
such as FAS and FAA?
It is known that a weaker crash model facilitate better bounds~\cite{system-crash},
but is relaxing the crash model necessary?
What, if any, is the connection between RME and abortable mutual exclusion?  Both problems involve
a similar concept, of a process ``disappearing'' from the algorithm, and for both problems, the best
known RMR bounds (assuming standard primitives) are $O(\frac{\log N} {\log \log N})$.  Can a formal
connection between these problems be established?

\bibliography{paper}

\ifCR
\else
\appendix

\section {Proofs Omitted From Section \ref{w-port proof}} \label{appendix:w-port proof}

\begin{restatable}{claim}{completeExit}\label{complete exit}
If process $p$ completes a super-passage $sp$ with port $k$, then $p$ completes an execution of the \emph{Exit} procedure and does not subsequently write to any memory location before $sp$ ends.
\end{restatable}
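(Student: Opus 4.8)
The plan is to case-split on how the super-passage $sp$ ends. By the definition of a super-passage for an abortable RME algorithm, $sp$ ends in exactly one of two ways: (i)~$p$ completes an invocation of \emph{Exit} made to leave the CS (i.e., with $abort = FALSE$), or (ii)~$p$'s execution of \emph{Try} returns $FALSE$, which the abortable-RME definition permits only when $p$ has received an abort signal during $sp$.

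In case~(i) the claim is essentially immediate: by definition $sp$ ends at the moment this invocation of \emph{Exit} completes, so $p$ takes no further step in $sp$ at all --- in particular, no memory write. In case~(ii) I would inspect the code of \emph{Try} and observe that every path on which \emph{Try} returns $FALSE$ first invokes $Exit(p, TRUE)$ as a subroutine: either the branch entered when $p$ detects the abort signal while spinning in the waiting room (which first sets $STATUS[p] := ABORT$ and then calls \emph{Exit}), or the guard $STATUS[p] = ABORT$ at the top of \emph{Try}, which $p$ reaches only after recovering from a crash that occurred once it had already set $STATUS[p] := ABORT$. Hence $p$ completes an execution of \emph{Exit}, and the only remaining action of \emph{Try} after that completion is to return $FALSE$, which is not a memory write; immediately afterwards $sp$ ends. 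This is consistent, since the abort-flow invocation of \emph{Exit} (whose $abort$ argument is $TRUE$) skips the $STATUS[p] := EXIT$ assignment that is guarded by $abort = FALSE$.

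The only point that needs care --- and the expected main obstacle --- is that $p$ may crash during an execution of \emph{Exit} and hence re-execute it upon recovery, so one must pin down which completed execution of \emph{Exit} the claim refers to. Here I would invoke the fact that Algorithm $M$ is well-behaved (\cref{recover is correct}): after a crash inside the normal-flow \emph{Exit} we have $STATUS[p] = EXIT$, so \emph{Recover} returns $EXIT$ and steers $p$ back into \emph{Exit}; after a crash inside the abort-flow \emph{Exit} before its final step we have $STATUS[p] = ABORT$, so \emph{Recover} returns $TRY$ and the top-of-\emph{Try} guard re-invokes $Exit(p, TRUE)$; and after a crash that occurs once \emph{Exit} has completed (so $STATUS[p] = TRY$) but before \emph{Try} returns $FALSE$, $p$ re-enters the normal \emph{Try} flow, observes the still-visible abort signal while spinning, and aborts again. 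In each such crash-recovery chain, $sp$ ends only after the last, completed execution of \emph{Exit}, and, by the two cases above, $p$ performs no memory write between that completion and the end of $sp$. I expect this bookkeeping over recovery chains, rather than any deeper argument, to be the whole of the difficulty.
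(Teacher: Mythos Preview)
Your case split and the handling of cases~(i) and~(ii) match the paper's proof almost exactly; the paper's argument is just those two sentences. Where you diverge is the third paragraph: you treat the crash-recovery bookkeeping as ``the expected main obstacle'' and propose to resolve it by invoking \cref{recover is correct} (well-behavedness of Algorithm~$M$). This is both unnecessary and problematic.

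It is unnecessary because the claim concerns only the instant at which $sp$ ends. At that instant, by definition, either the (normal-flow) \emph{Exit} has just completed, or \emph{Try} has just returned $FALSE$; and by code inspection, the latter happens only immediately after a completed subroutine call to \emph{Exit}. Whatever crashes and re-executions occurred earlier in $sp$ are irrelevant: the \emph{last} completed \emph{Exit} before $sp$ ends is the witness, and after it $p$ performs at most the non-memory-writing \texttt{return FALSE}. No recovery analysis is needed.

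It is problematic because \cref{recover is correct} is established \emph{after} this claim in the paper's dependency order: it relies on the $STATUS$ invariants (\cref{status in no sp,status in try section,status in cs,status in exit section}), which in turn rely on \cref{variables init}, whose inductive step explicitly invokes \cref{complete exit}. Using well-behavedness here would therefore be circular. Drop the third paragraph and your proof is essentially the paper's.
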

\begin{proof}
By definition, $p$ finishes $sp$ either by completing an \emph{Exit} section, which means completing an execution of \emph{Exit}, or if $p$'s \emph{Try} section returns $FALSE$. From the code, a \emph{Try} section only returns $FALSE$ immediately after completing \emph{Exit}.
\end{proof}

\begin{restatable}{claim}{variablesInit}\label{variables init}
When a process $p$ starts a super-passage $sp$ with port $k$, then $GO[k]=\perp$, the $k$-th bit of $ACTIVE$ is $0$, and $STATUS[k]=TRY$.
\end{restatable}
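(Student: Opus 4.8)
The plan is to argue that the three state components associated with port $k$---namely $GO[k]$, the $k$-th bit of $ACTIVE$, and $STATUS[k]$---are left in their ``clean'' values ($\perp$, $0$, and $TRY$) by whoever last used port $k$, and that nothing disturbs them before $p$ begins $sp$. The first step is to establish an \emph{ownership invariant} by code inspection: each of these three components is written only by a process while it is inside a super-passage with port $k$. For $STATUS[k]$ this is immediate, since $STATUS$ is written only by its owning process. For the $k$-th bit of $ACTIVE$ I would note that the only updates are the guarded $FAA$s at lines~\ref{rme:Flip}--\ref{rme:FlipEnd} and~\ref{rme:ExitClear}--\ref{rme:ExitClearEnd}, and that the ``flip only if not already set / clear only if set'' checks guarantee bit $k$ never carries or borrows, so an $FAA$ issued for a port $j \ne k$ cannot alter bit $k$; hence bit $k$ is touched only by the port-$k$ process. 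For $GO[k]$ I would stress that \emph{Promote} writes to the spin variable $*GO[\cdot]$, not to the pointer $GO[k]$ itself, so the pointer is modified only by its owner (set during allocation in \emph{Try}, cleared in \emph{Exit}).

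Next I would split into two cases according to whether any super-passage with port $k$ ends before $sp$ begins. If none does, then by the ownership invariant none of the three components has ever been written before $sp$ starts, so they hold their initial values $GO[k]=\perp$, bit $k$ of $ACTIVE$ equal to $0$, and $STATUS[k]=TRY$. Otherwise, let $sp'$ be the most recent super-passage with port $k$ that ends before $sp$ begins, run by some process $p'$ (possibly $p'=p$). By \cref{complete exit}, $p'$ completes an execution of \emph{Exit} and writes nothing afterward; and the final operations of \emph{Exit} (lines~\ref{rme:ExitClear}--\ref{rme:ExitClearEnd} and~\ref{rme:ExitRetire}--\ref{rme:ExitRetireEnd}) clear bit $k$ of $ACTIVE$, set $GO[k]=\perp$, and set $STATUS[k]=TRY$. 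Hence all three components are clean at the instant $sp'$ ends.

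It remains to show that these values survive until $sp$ begins. By the choice of $sp'$ as the most recent port-$k$ super-passage ending before $sp$, and by \cref{prop:sp} (no concurrent super-passages with the same port), no process is in a super-passage with port $k$ during the interval between the end of $sp'$ and the start of $sp$: a super-passage both starting and ending inside that interval would contradict the maximality of $sp'$, and one still active at the start of $sp$ would be concurrent with $sp$. By the ownership invariant, the three components are therefore not modified in that interval. Finally, since $p$ has not yet executed any memory operation of $sp$ at the moment it starts (it has only completed \emph{Recover} and invoked \emph{Try}), the clean values still hold when $sp$ begins, as required. For the $STATUS[k]$ claim alone one may alternatively invoke \cref{status in no sp} at the configuration immediately preceding $p$'s first step of $sp$, where no process is in a port-$k$ super-passage.

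The main obstacle I anticipate is the ownership invariant for the $k$-th bit of $ACTIVE$: I must verify that the guarded $FAA$s truly preclude cross-bit interference, i.e., that under well-formed executions bit $k$ is $0$ before every attempt to set it and $1$ before every attempt to clear it, so that no arithmetic carry ever propagates into bit $k$ from an update targeting another port. The timing subtlety---handling the case where $sp'$ ends in the step immediately preceding the start of $sp$, with no intervening gap---also needs care, but it is resolved by \cref{complete exit}, which already guarantees cleanliness at the exact end of $sp'$.
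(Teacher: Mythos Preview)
Your proof is correct and follows essentially the same approach as the paper's: an induction on the sequence of port-$k$ super-passages, with the inductive step appealing to \cref{complete exit} to conclude that the preceding super-passage's completed \emph{Exit} restores the clean values. Your version is more explicit about the ownership invariant and the FAA carry issue (both of which the paper leaves tacit); one small caveat is that your suggested alternative of invoking \cref{status in no sp} for $STATUS[k]$ would be circular, since in the paper that claim is proved as ``Immediate from \cref{variables init}.''
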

\begin{proof}
By induction on the number of $M$-super-passages using port $k$.  The base case is the first $M$-super-passage that uses port $k$. In this case, $GO[k]$, the $k-th$ bit of $ACTIVE$ and $STATUS[k]$ are initialized to their default values which are $\perp$, $0$, and $TRY$, respectively.
For the inductive case, consider the last $M$-super-passage by some process $p$ that used port $k$. From~\cref{complete exit}, when $p$ completes that super-passage, it completes an execution of the \emph{Exit} procedure, which resets $GO[k]$, the $k-th$ bit of $ACTIVE$, and $STATUS[k]$ to their initial values.
\end{proof}

\statusNoSP*
\begin{proof}
Immediate from~\cref{variables init}.
\end{proof}

\statusTrySection*
\begin{proof}
Recall that $p$ is in the Try section from the first operation of $Try(k)$ and until its last operation. If this is the first call to $M.try(k)$ in $sp$, then from~\cref{variables init}, $STATUS[k]=TRY$. From code inspection, $STATUS[k]$ can only change to $ABORT$ or to $CS$ in the Try section. If $STATUS[k]$ changes to $CS$, then by definition, $p$ is no longer in the Try section. If $STATUS[k]$ changes to $ABORT$, it can only later be changed by the \emph{Exit} procedure. Since $STATUS[k]=ABORT$, the first condition in \emph{Exit} does not hold, and so $STATUS[k]$ can only be changed back to $TRY$ when the \emph{Exit} procedure completes.  Subsequently, $p$ either completes the Try section by returning $FALSE$ or crashes.  If $p$ crashes after setting $STATUS[k]$ to $TRY$, then upon recovery, $p$ will change $STATUS[k]$ back to $ABORT$ and execute the \emph{Exit} procedure again, since the abort signal remains signalled.  This can repeatedly happen until eventually $p$ does not crash and completes the Try section.
Thus, $STATUS[k]$ is always $TRY$ or $ABORT$ during this time period.
\end{proof}

\statusInCS*
\begin{proof}
By definition, $p$ is in the CS from the time that $STATUS[k]=CS$, or if $p$ re-enters the CS after crashing inside it.  Therefore,
once $STATUS[k]$ becomes $CS$, $p$ does not invoke $M.Try(k)$ in $sp$ again, and so $STATUS[k]$ can only be changed when $p$ exits the CS and invokes $M.Exit(k)$. The first operation of \emph{Exit} changes $STATUS[k]$ to $EXIT$, which by definition is when $p$ exits the CS.
\end{proof}

\statusInExit*
\begin{proof}
By definition, $p$ is in the Exit section from when it changes $STATUS[k]$ to $EXIT$ in \emph{Exit}.  This change only happens if $p$ executes \emph{Exit} when exiting the CS (i.e., not in the abort flow), since the execution is well-formed.  Subsequently, $STATUS[k]$ can only be changed again in $sp$ by $p$ to $TRY$ at the end of \emph{Exit}.  By definition, this is when $p$ completes the Exit section and $sp$ ends.
\end{proof}

\recoverCorrect*
\begin{proof}
Immediate from \cref{status in no sp,status in try section,status in cs,status in exit section} and from the simple flow of the $Recover$ procedure.
\end{proof}

\begin{restatable}{claim}{newSpinVariable}\label{new spin variable}
If process $p$ is in super-passage $sp$ with port $k$,
and $p$ reaches the waiting room, then $p$ initializes $GO[k]$ to point to a new spin variable, before the $k$-th bit of $ACTIVE$ is set.
\end{restatable}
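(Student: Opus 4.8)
The plan is to isolate the unique code steps that modify $GO[k]$ and the $k$-th bit of $ACTIVE$, and then argue from program order together with the state of these variables when $sp$ begins.

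First I would record the following code facts and invariants. By~\cref{variables init}, at the moment $p$ starts $sp$ with port $k$ we have $GO[k]=\perp$ and the $k$-th bit of $ACTIVE$ equal to $0$. Inspecting the code: (a) $GO[k]$ is assigned a non-$\perp$ value only in the spin-variable allocation step at the top of $Try$ (lines~\ref{rme:checkSpin}--\ref{rme:checkSpinEnd}), where $p$ assigns to $GO[k]$ a freshly allocated spin variable---which, by the safe memory reclamation assumption used in~\cref{w-port proof}, is not currently referenced by any other process (a ``new'' spin variable); and $GO[k]$ is reset to $\perp$ only inside \emph{Exit}. (b) $ACTIVE$ is written only by the $FAA$ in the flip step of $Try$ (lines~\ref{rme:Flip}--\ref{rme:FlipEnd}) and by the $FAA$ in the clear step of \emph{Exit} (lines~\ref{rme:ExitClear}--\ref{rme:ExitClearEnd}), each of which touches only the bit of the executing process's own port; in particular, $Promote$ only reads $ACTIVE$. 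Hence the $k$-th bit of $ACTIVE$ can change from $0$ to $1$ only via the flip step of a process using port $k$. Finally, in the code the allocation step precedes the flip step, which precedes the $while$ loop of the waiting room.

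Next I would carry out the main argument. Suppose $p$ reaches the waiting room in $sp$. Since immediately before the $while$ loop $p$ executes the flip step (followed by $Promote$, which does not touch $ACTIVE$), the $k$-th bit of $ACTIVE$ equals $1$ once $p$ reaches the waiting room, whereas by~\cref{variables init} it equals $0$ when $sp$ begins; let $t_{\mathrm{set}}$ be the first time during $sp$ at which this bit equals $1$. By Property~\ref{prop:sp} of well-formed executions (no concurrent super-passages), during $sp$ only $p$ runs code using port $k$, so by fact~(b) the $0\to1$ transition at $t_{\mathrm{set}}$ is performed by $p$ executing the flip step of $Try$, in some passage $P$ of $sp$. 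In passage $P$, $p$ executed the allocation step before the flip step, so $GO[k]\ne\perp$ holds at $t_{\mathrm{set}}$. Combining this with $GO[k]=\perp$ at the start of $sp$ and fact~(a), $p$ must have executed the assignment of a fresh spin variable to $GO[k]$ at some time $t_{\mathrm{init}}$ during $sp$; and since $GO[k]\ne\perp$ at $t_{\mathrm{set}}$ while $GO[k]=\perp$ at the start of $sp$, we get $t_{\mathrm{init}}<t_{\mathrm{set}}$, which is exactly the claim.

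The step I expect to be the main obstacle is the bookkeeping around crash recovery, namely that the assignment to $GO[k]$ need not happen in the same passage $P$ as the flip step at $t_{\mathrm{set}}$: $p$ may have allocated a spin variable and assigned $GO[k]$ in an earlier passage of $sp$, then crashed, in which case on recovery the allocation step is skipped (because $GO[k]\ne\perp$) but the flip step still runs. I would handle this by reasoning at the level of the unique ``$\perp\to(\text{new variable})$'' transition of $GO[k]$ within $sp$: within $sp$, $GO[k]$ is reset to $\perp$ only by \emph{Exit}, and every invocation of \emph{Exit}---whether to exit the CS or as the subroutine called in the abort flow---terminates $sp$; hence once $GO[k]$ becomes non-$\perp$ during $sp$ it stays non-$\perp$ at least until $p$ reaches the waiting room, so the transition is well-defined and, by the argument above, occurs strictly before $t_{\mathrm{set}}$. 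The remaining points---that $ACTIVE$ is never written outside the flip and clear steps, and that the allocated spin variable is indeed unreferenced by others---are routine, following from code inspection and the safe memory reclamation assumption (realized with $O(D^2)$ static memory).
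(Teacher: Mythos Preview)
Your proof is correct and follows essentially the same approach as the paper's: invoke \cref{variables init} for the initial state of $GO[k]$ and the $k$-th bit of $ACTIVE$, then use program order in \emph{Try} (allocation precedes flip) together with the fact that only $p$ writes these locations during $sp$. The paper's proof is two sentences and leaves the crash-recovery bookkeeping implicit; your version spells it out. One minor imprecision: your claim that ``every invocation of \emph{Exit} \ldots\ terminates $sp$'' is not literally true (if $p$ crashes after \emph{Exit} resets $STATUS[k]$ to $TRY$ but before \emph{Try} returns $FALSE$, $sp$ continues and $p$ may re-allocate and re-flip), but this does not affect your argument, since you define $t_{\mathrm{set}}$ as the \emph{first} time the bit is set and only need \emph{some} $\perp\to\text{non-}\perp$ transition of $GO[k]$ before $t_{\mathrm{set}}$, not uniqueness.
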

\begin{proof}
From~\cref{variables init}, when $p$ starts the super-passage, $GO[k]$ is $\perp$, and the $k$-th bit of $ACTIVE$ is $0$. Thus, before $p$ set the $k$-th bit to $1$ using FAA, there is an execution of \emph{Try} in which $p$ observes $GO[k]=\perp$ and so initializes $GO[k]$.
\end{proof}

\begin{restatable}{claim}{spinDoesntChange}\label{spin variable does not change}
If process $p$ is super-passage $sp$ with port $k$ sets $GO[k]$ to some value other than $\perp$ at time $t$, then $p$ does not change $GO[k]$ to any value except $\perp$ in $sp$.
\end{restatable}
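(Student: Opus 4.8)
The plan is to prove this by a direct code-inspection argument, supported by well-formedness and by two of the earlier claims. First I would pin down every program point that writes $GO[k]$. Inspecting \cref{fig:w_port}, there are exactly two: (i) the spin-variable initialization in \emph{Try} (lines~\ref{rme:checkSpin}--\ref{rme:checkSpinEnd}), which stores a freshly allocated, non-$\perp$ pointer, but only inside the branch guarded by the test that $GO[k]=\perp$; and (ii) the cleanup at the end of \emph{Exit} (lines~\ref{rme:ExitRetire}--\ref{rme:ExitRetireEnd}), which stores $\perp$. I would also note that \emph{Promote} only reads $GO[q]$ (copying it into $LOCK\_STATUS$) and writes the boolean it points to, never $GO[q]$ itself. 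Moreover, by the \emph{no concurrent super-passages} condition of well-formedness, port $k$ is used exclusively by $p$ for the duration of $sp$, so during $sp$ only $p$ writes $GO[k]$; in particular, the only non-$\perp$ write to $GO[k]$ in $sp$ is at site (i), which requires reading $GO[k]=\perp$.

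With this setup, I would argue as follows. Let $t$ be a time at which $p$ writes a non-$\perp$ value to $GO[k]$ in $sp$ (necessarily at site (i)), and suppose for contradiction that $p$ writes $GO[k]$ to a non-$\perp$ value again at some time $t'>t$ still within $sp$. The write at $t'$ is again at site (i), so its guard forces $GO[k]=\perp$ just before $t'$; since $p$ is the only writer of $GO[k]$ in $sp$ and $GO[k]$ is non-$\perp$ right after $t$, it follows that $p$ writes $\perp$ to $GO[k]$ at some time in $(t,t')$, and this write is at site (ii), i.e.\ among the final memory operations of an \emph{Exit} invocation. It then suffices to show that after those operations $p$ performs no site-(i) write before $sp$ ends. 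In the crash-free case this is immediate: if that \emph{Exit} was invoked to exit the CS it completes and $sp$ ends, and if it was invoked as the abort subroutine (lines~\ref{rme:TryAbort}--\ref{rme:TryAbortEnd}) then \emph{Try} returns $FALSE$ right afterwards and $sp$ ends; by \cref{complete exit}, $p$ issues no further memory operations in $sp$ in either case, contradicting the existence of $t'$.

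The only delicate point, and the step I expect to need the most care, is the crash case: $p$ might crash after the site-(ii) write but before the surrounding \emph{Exit} invocation returns. Here I would invoke \cref{recover is correct} (Algorithm $M$ is well-behaved): on recovery, since $STATUS[k]$ is $EXIT$ or $ABORT$, \emph{Recover} returns $EXIT$ or $TRY$ respectively, steering $p$ either directly back into \emph{Exit} or into \emph{Try}'s abort branch, which immediately calls \emph{Exit} and returns $FALSE$. In neither path does $p$ reach site (i), since \emph{Exit} contains no write to $GO[k]$ besides (ii); iterating this over any subsequent crashes shows $p$ never performs a site-(i) write in $sp$ after $t$. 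This contradicts the existence of $t'$ and proves the claim. For completeness I would also observe that \cref{variables init} gives $GO[k]=\perp$ at the start of $sp$, so the first non-$\perp$ write (time $t$) is exactly the unique point at which $p$ takes the allocation branch of \emph{Try}, consistent with \cref{new spin variable}.
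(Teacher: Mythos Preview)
Your proof is correct and follows essentially the same approach as the paper: locate the two write sites to $GO[k]$, observe that a second non-$\perp$ write would force an intervening $\perp$ write at the end of \emph{Exit}, and then argue that after that point $p$ cannot reach the allocation branch of \emph{Try} again within $sp$. The paper's version is more compressed: for the CS-exit case it appeals directly to well-formedness (no further \emph{Try} in $sp$), and for the abort case it uses the fact that the abort signal remains raised throughout $sp$, so any re-entry to \emph{Try} is diverted before the allocation line. Your version instead tracks $STATUS[k]$ explicitly and invokes well-behavedness (\cref{recover is correct}) to handle crashes; this is equivalent and arguably more explicit, though note that your assertion ``$STATUS[k]$ is $EXIT$ or $ABORT$'' tacitly relies on the crash occurring before the final $STATUS[k]:=TRY$ write---which is fine, since that write is the last memory operation of the super-passage.
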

\begin{proof}
From the code, once $p$ sets $GO[k]$ to some value other than $\perp$ in \emph{Try}, $GO[k]$ can only change when $p$ executes \emph{Exit}.
$p$ invokes \emph{Exit} either due exiting the CS or due to aborting.  If $p$ exits the CS, then since the execution is well-formed, $p$ does not execute \emph{Try} again afterwards in $sp$.  If $p$ aborts, then since the abort signal remains signalled in $sp$, $p$ never executes the step that initializes $GO[k]$ in \emph{Try} again.
\end{proof}

\begin{restatable}{claim}{spinWellDefined}\label{spin well defined}
If process $p$ reaches the waiting room in super-passage $sp$ with port $k$, then $spin_{p, k, sp}$ exists, and if $GO[k] \ne \perp$, then $GO[k]=\&spin_{p, k, sp}$.
\end{restatable}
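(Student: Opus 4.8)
The plan is to follow the life of the pointer $GO[k]$ throughout the super-passage $sp$. First I would apply \cref{new spin variable}: since $p$ reaches the waiting room in $sp$, before $p$ sets the $k$-th bit of $ACTIVE$ it executes the allocation lines and sets $GO[k]$ to point at a freshly allocated spin variable, call it $v$. From the code of \emph{Try}, the variable on which $p$ spins in the waiting room is exactly the one referenced by $GO[k]$ when $p$ enters the spin loop, namely $v$. It then remains to rule out $p$ switching spin variables within the waiting room of $sp$. By \cref{status in try section}, while $p$ is in the waiting room $STATUS[k]\in\{TRY,ABORT\}$ and $p$ is inside \emph{Try}; hence if $p$ crashes there, then since the algorithm is well-behaved (\cref{recover is correct}), $p$'s next \emph{Recover} returns $TRY$ and $p$ re-invokes \emph{Try}$(k)$. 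At that moment $GO[k]=\&v\neq\perp$, because by \cref{spin variable does not change} the only value $p$ ever writes to $GO[k]$ besides $v$ is $\perp$, and $p$ writes $\perp$ only inside \emph{Exit}, which it has not run while in the waiting room. So the $GO[k]=\perp$ guard in \emph{Try} fails, $p$ does not allocate again, and it resumes spinning on $v$. Thus $p$ spins on a single spin variable in its waiting room in $sp$, so $spin_{p,k,sp}$ is well-defined and equals $v$.

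For the second statement I would identify all writers of $GO[k]$ during $sp$. Inspecting the code, $GO[k]$ is written only by the process currently using port $k$ (the \emph{Promote} procedure reads $GO[q]$ and writes through it to $*GO[q]$, but never to $GO[q]$ itself), and by the no-concurrent-super-passages property (\cref{defn:well-formed}) no super-passage using port $k$ other than $sp$ overlaps $sp$; hence during $sp$, $p$ is the sole writer of $GO[k]$. Before $p$'s allocation step we have $GO[k]=\perp$ by \cref{variables init}, so the implication is vacuous there; after it, \cref{spin variable does not change} gives $GO[k]\in\{\&v,\perp\}$, so whenever $GO[k]\neq\perp$ we have $GO[k]=\&v=\&spin_{p,k,sp}$. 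This finishes the claim.

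The delicate point will be the crash-recovery bookkeeping: one must argue, from code inspection, that the persisted value $GO[k]\neq\perp$ together with the $GO[k]=\perp$ guard genuinely prevents a recovering $p$ from allocating a second spin variable, and that the \emph{Promote} calls issued by $p$ or concurrently by other processes never overwrite $GO[k]$. Everything else is routine given the cited claims.
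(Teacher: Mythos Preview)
Your proposal is correct and follows essentially the same approach as the paper: invoke \cref{new spin variable} to get that $GO[k]$ is initialized to a fresh spin variable before the waiting room, then invoke \cref{spin variable does not change} to conclude that $GO[k]$ can subsequently only be $\&spin_{p,k,sp}$ or $\perp$. Your version is more detailed---in particular, you make explicit the well-definedness of $spin_{p,k,sp}$ via the crash-recovery bookkeeping and the fact that only $p$ writes $GO[k]$ during $sp$---but the paper's two-line proof relies on exactly the same two cited claims.
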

\begin{proof}
From~\cref{new spin variable}, $p$ initializes $GO[k]$ to $spin_{p, k, sp}$ before reaching the waiting room in $sp$. From~\cref{spin variable does not change}, $GO[k]$ cannot change to any value except $\perp$ in $sp$.
\end{proof}

\begin{restatable}{claim}{spinIsTrue}\label{spin is true}
If process $p$ is in super-passage $sp$ with port $k$ and $p$ is in the $CS$, then $spin_{p, k, sp}=TRUE$.
\end{restatable}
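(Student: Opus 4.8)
The plan is to pinpoint a moment $t_0$, before $p$ enters the CS in $sp$, at which $p$ reads the value $TRUE$ from $spin_{p,k,sp}$, and then to argue that the boolean content of $spin_{p,k,sp}$ is never written a value other than $TRUE$ during $sp$ from $t_0$ onward. Since $p$ occupies the CS only within $sp$ (and after $t_0$), this gives the claim.

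First I would establish the existence of $t_0$. Since $p$ is in the CS in $sp$ with port $k$, \cref{status in cs} gives $STATUS[k]=CS$, and by code inspection this assignment is made only at the very end of the \emph{Try} section, immediately after $p$ leaves the waiting-room spin loop of \cref{fig:w_port}. That loop is exited in only one of two ways: because the loop condition observes the spin variable equal to $TRUE$, or because $p$ has been signalled to abort; in the latter case $p$ takes the abort flow (sets $STATUS[k]=ABORT$, invokes \emph{Exit} as a subroutine) and returns $FALSE$, so it does not reach the CS. Hence $p$ exited the loop having read $TRUE$ from the variable it was spinning on, and by \cref{spin well defined} that variable is exactly $spin_{p,k,sp}$; let $t_0$ be the time of this read. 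If $p$ instead crashed inside the CS and recovered, $t_0$ is still the read time during the original \emph{Try} section that first brought $p$ into the CS in $sp$: the intervening crash leaves memory unchanged, and the \emph{Recover} invocation it triggers — which returns $CS$ by well-behavedness (\cref{recover is correct}) — touches neither $spin_{p,k,sp}$ nor $STATUS[k]$.

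Second I would show that the boolean content of $spin_{p,k,sp}$ equals $TRUE$ from $t_0$ until the end of $sp$. The only writes to the boolean content of any spin variable in the algorithm are the $TRUE$ writes performed by \emph{Promote} when it signals the current or newly promoted owner; none of these can lower the value. During $sp$ and after $t_0$, $p$ itself only reads $spin_{p,k,sp}$, manipulates the pointer $GO[k]$ rather than the pointed-to boolean (cf. \cref{spin variable does not change}), and retires the variable only as the last steps of its execution of \emph{Exit}, i.e. only as $sp$ ends. Finally, by the safe memory-reclamation assumption, $spin_{p,k,sp}$ is not reallocated to another process while $p$ still references it, hence not before $sp$ ends; so no process resets it (e.g. to $FALSE$) as part of starting a fresh super-passage during $sp$. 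Combining these observations, $spin_{p,k,sp}=TRUE$ holds throughout the portion of $sp$ after $t_0$, and in particular whenever $p$ is in the CS.

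The main obstacle is the exhaustive bookkeeping over writes to spin variables: one must verify that \emph{Promote} is the sole writer of boolean spin-variable content and that it writes only $TRUE$, and must invoke the safe memory-reclamation assumption to exclude a stale reuse of $spin_{p,k,sp}$ that could overwrite it before $sp$ ends. Handling the crash-in-CS case cleanly — confirming that neither the crash nor the forced \emph{Recover} call perturbs $spin_{p,k,sp}$ or $STATUS[k]$ — is a minor additional point.
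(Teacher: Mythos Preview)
Your proposal is correct and follows essentially the same approach as the paper's proof: identify the moment $p$ exits the spin loop by reading $TRUE$ from $*GO[k]=spin_{p,k,sp}$ (via \cref{spin well defined}), and then argue that nothing can overwrite this value for the remainder of $sp$. Your treatment is in fact more careful than the paper's---you explicitly rule out the abort-exit path from the spin loop, handle the crash-in-CS case, and invoke safe memory reclamation to exclude reallocation, whereas the paper compresses all of this into the single sentence ``This is the only write to $spin_{p,k,sp}$ in $sp$.''
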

\begin{proof}
Since the execution is well-formed, the first time in $sp$ at which $p$ reaches the CS is when it sets $STATUS[k]=CS$ in the Try section.
 This write can only happen if $p$ finished the while loop in the waiting room, which occurs when $*GO[k]=TRUE$. By~\cref{spin well defined}, this means that $spin_{p, k, sp}=TRUE$. This is the only write to $spin_{p, k, sp}$ in $sp$, so this variable stays $TRUE$ for the entire CS of $p$ in $sp$.
\end{proof}

\StatusBeforePInCS*
\begin{proof}
If $p$ is in its CS, then $spin_{p, k, sp}=TRUE$ by~\cref{spin is true}. From the code, a spin variable can only be changed to $TRUE$ in \emph{Promote} by some process $q$ (possibly $q=p$) that reads $LOCK\_STATUS=(1, *, \&spin_{p, k, sp})$. Similarly from the code, a process that writes
$LOCK\_STATUS=(1, *, \&spin_{p, k, sp})$ must actually write $LOCK\_STATUS=(1, k, \&spin_{p, k, sp})$.
\end{proof}

\StatusOnlyChangedByP*
\begin{proof}
From the code, $LOCK\_STATUS$ can only be changed from $(1, k, \&spin_{p, k, sp})$ to some other value in $sp$ by the \emph{Exit} procedure invoked with a port argument of $k$.  Since the execution is well-formed, such an invocation can only be performed by $p$.
\end{proof}

\CSStatusInCS*
\begin{proof}
From~\cref{CS_STATUS before p in CS}, there was a time $t$ before $p$ entered the CS at which $LOCK\_STATUS=(1, k, \&spin_{p, k, sp})$. By~\cref{CS_STATUS only changed by p}, this value can only be changed by $p$ executing the \emph{Exit} procedure.  Since the execution is well-formed, $p$ can only change $LOCK\_STATUS$ in its Exit section.  Thus, if $p$ is in the CS, $LOCK\_STATUS=(1, k, \&spin_{p, k, sp})$.
\end{proof}

\wPortME*
\begin{proof}
Assume that two processes, $p_i \ne p_j$, are in super-passages, $sp_i$ and $sp_j$, with ports $k_i \ne k_j$.
Assume towards a contradiction that both $p_i$ and $p_j$ are in the CS at the same time $t$. From~\cref{CS_STATUS in CS}, at time $t$,
$LOCK\_STATUS=(1, k_i, \&spin_{p_i, k_i, sp_i})=(1, k_j, \&spin_{p_j, k_j, sp_j})$,
implying $k_i=k_j$, which is a contradiction.
\end{proof}

\pointInPromote*
\begin{proof}
There are two possible scenarios for $p$'s execution of \emph{Promote}: When $p$ first reads $LOCK\_STATUS$, it either reads $LOCK\_STATUS=(1, *, *)$ or $LOCK\_STATUS=(0, *, *)$. In the first case, we are done.  Otherwise, if $p$ reads $LOCK\_STATUS=(0, *, *)$, it tries to execute a CAS to change $LOCK\_STATUS$ from $(0, *, *)$ to $(1, *, *)$, either because $ACTIVE \ne 0$ or because $j \ne \perp$. $p$'s CAS either succeeds or fails due to another process changing $LOCK\_STATUS$ to $(1, *, *)$. In any case, there is a time $t_0 \in [t, t']$ at which $LOCK\_STATUS=(1, *, *)$.
\end{proof}

\CSStatusOneBeforeWaitingRoom*
\begin{proof}
Since $p$ reaches the waiting room, it completes an execution of $Promote$ in some interval $T \subseteq [t, t']$ without crashing. This execution occurs after the $k$-th bit of $ACTIVE$ is set. From the code, only $p$ can clear the $k$-th bit of $ACTIVE$, and it can do so
only after reaching the waiting room.  Thus, throughout the interval $T$,  $ACTIVE \ne 0$. From~\cref{point in promote},
$LOCK\_STATUS=(1, *, *)$ at some time $t_0 \in T$.
\end{proof}

\spinEventuallyTrue*
\begin{proof}
Assume there are finitely many crash steps in the execution.  Assume $LOCK\_STATUS$ is changed to $(1, k, \&spin_{p, k, sp})$ by process $q$ (possibly $q=p$). From the code, this change occurs when $q$ executes the \emph{Promote} procedure.  Because there are finitely many crashes, $q$ eventually completes a \emph{Promote} call. (The reason is that since the execution is well-formed, if $q$ crashes inside \emph{Promote}, it returns to the same section of the lock and eventually executes \emph{Promote} again.) When $q$ completes \emph{Promote}, it either observes $LOCK\_STATUS=(1, k, \&spin_{p, k, sp})$ and sets $spin_{p, k, sp}=TRUE$, or it reads some other value from $LOCK\_STATUS$. In the latter case, by~\cref{CS_STATUS only changed by p}, $LOCK\_STATUS$ was changed by $p$ while executing the \emph{Exit} procedure.  The invocation of \emph{Exit} can happen if $p$ either aborts or exits the CS. If $p$ exits the CS, \cref{spin is true} implies that $spin_{p, k, sp}$ was set to $TRUE$.
\end{proof}

\begin{restatable}{claim}{changedToOnePNotFinished}\label{CS_STATUS 1 p yet to finish exit}
Consider process $p$ in super-passage $sp$ with port $k$.  If $LOCK\_STATUS$ is changed to $(1, k, \&spin_{p, k, sp})$ at time $t$, then at time $t$, $p$ has executed the FAA in \emph{Try} and has not completed the first \emph{Promote} in \emph{Exit} in $sp$.
\end{restatable}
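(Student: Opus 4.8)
The plan is to trace the claim back to the single code event that can set $LOCK\_STATUS$ to $(1,k,\&spin_{p,k,sp})$: a successful CAS inside some process $r$'s execution of \emph{Promote} (possibly $r=p$). From the code, such a CAS happens only after $r$ reads $LOCK\_STATUS=(0,o,os)$, picks a process $q$ to promote, reads $GO[q]$, and CASes $LOCK\_STATUS$ to $(1,q,GO[q])$; since the new value is $(1,k,\&spin_{p,k,sp})$, the port of $q$ must be $k$ and $r$ must have read $GO[k]=\&spin_{p,k,sp}$. First I would apply the spin-variable bookkeeping claims (\Cref{variables init,new spin variable,spin variable does not change,spin well defined}) to conclude that $GO[k]$ can equal the non-$\perp$ value $\&spin_{p,k,sp}$ only while $p$ is in $sp$, and only after $p$ has, in its current passage, initialized $GO[k]$ in \emph{Try}; in particular $t$ lies inside $sp$.

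Second, I would show $p$ has executed the \emph{Try} FAA by time $t$, by case analysis on how \emph{Promote} picked $q=k$. Case~(i): $r$ picked $q=next(o,a)$ with bit $k$ set in the snapshot $a$ that $r$ read from $ACTIVE$ at a time $t'<t$. By \Cref{variables init} bit $k$ is $0$ when $sp$ begins, within a passage $p$ sets it only via the \emph{Try} FAA and clears it only in \emph{Exit}, and no other process touches bit $k$ during $sp$ (concurrent processes use other ports, by well-formedness); hence $p$ performed the FAA before $t'<t$. Case~(ii): $a$ has no set bit and $q=k$ is $r$'s \emph{Promote} argument; since every call site passes either $\perp$ or the caller's own port, and the only process in a port-$k$ super-passage during $sp$ is $p$, we get $r=p$, and $p$ passes $k$ to \emph{Promote} only in \emph{Try} (after the FAA) or as the first \emph{Promote} of \emph{Exit} (by which time $p$ has long since executed the FAA, since the abort path into \emph{Exit} is reached only from the waiting room). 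Either way $p$ has executed the FAA by time $t$.

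Third, I would show $p$ has not completed the first \emph{Exit} \emph{Promote} by time $t$. In Case~(ii) this is immediate, since the time-$t$ CAS is itself inside that \emph{Promote}; likewise if $r=p$ in Case~(i), the CAS is inside $p$'s \emph{Try} \emph{Promote}, so $p$ is still in its \emph{Try} section. Otherwise ($r\neq p$, Case~(i)), suppose for contradiction $p$ has completed the first \emph{Exit} \emph{Promote} by $t$; then $p$ cleared bit $k$ (the step immediately preceding that \emph{Promote}) at some $t_c<t$, after which bit $k$ stays $0$ within $sp$. By the safe-memory property (a stale CAS on $LOCK\_STATUS$ fails because $owner\_go$ differs) and the success of $r$'s CAS, $LOCK\_STATUS=(0,o,os)$ throughout $[t_1,t]$, where $t_1$ is $r$'s read of $LOCK\_STATUS$ and $t'\in(t_1,t)$; this rules out both $t_c<t_1$ and $t_c>t'$ (either makes bit $k$ equal $0$ at $t'$), so $t_c\in[t_1,t)$, and then $p$'s first \emph{Exit} \emph{Promote} reads $LOCK\_STATUS=(0,o,os)$ within $[t_1,t]$ and therefore performs a successful CAS, changing $LOCK\_STATUS$ strictly before $t$ --- a contradiction. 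In the non-abort flow this case also follows directly from \Cref{CS_STATUS before p in CS,CS_STATUS only changed by p}, since there $LOCK\_STATUS$ becomes $(1,k,\&spin_{p,k,sp})$ only once, before $p$'s CS.

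The main obstacle I expect is precisely the timing argument in the third step: ruling out the race in which $r$ reads $ACTIVE$ with bit $k$ still set just before $p$ clears it in \emph{Exit}, yet $r$'s CAS lands only after $p$ has moved past its first \emph{Exit} \emph{Promote}. The leverage is entirely the safe-memory fact, since any write $p$ makes to $LOCK\_STATUS$ in that \emph{Promote} would cause $r$'s CAS to fail. A secondary, more routine point is to pin down exactly which argument ($\perp$ versus the caller's own port) is passed at each \emph{Promote} call site, since Case~(ii) and the dichotomy in the third step both rely on it.
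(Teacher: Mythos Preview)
Your proposal is correct and follows essentially the same argument as the paper: the hard case is when a process $r\neq p$ performs the CAS, and the contradiction comes from observing that $LOCK\_STATUS$ is unchanged between $r$'s read and its successful CAS, so $p$'s own \emph{Promote}$(k)$ in \emph{Exit}---which would fall inside that interval---would have CASed $LOCK\_STATUS$ first. Your decomposition (by whether $k$ was picked via the $ACTIVE$ bit or via the $j$ argument) differs slightly from the paper's (by whether the CASing process is $p$ or not), but the two splits carve the space the same way, since a process other than $p$ can only pick $k$ via $ACTIVE$.

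Two small points. First, in your Case~(i) with $r=p$ you assert the CAS must be inside $p$'s \emph{Try} \emph{Promote}; this is true, but the reason (that bit $k$ is already cleared before either \emph{Exit} \emph{Promote}, so $p$ cannot pick $k$ via $ACTIVE$ there) should be stated---the paper makes exactly this observation for \emph{Promote}$(\perp)$. Second, the sentence ``this rules out both $t_c<t_1$ and $t_c>t'$ (either makes bit $k$ equal $0$ at $t'$)'' is garbled: $t_c>t'$ is precisely what \emph{does} hold, since bit~$k$ is still set at $t'$. What you need is $t'<t_c$ (from bit~$k$ being set at $t'$) together with $t_c<t$ (from the contradiction hypothesis that $p$ completed the first \emph{Exit} \emph{Promote} by $t$), giving $t_c\in(t',t)\subset(t_1,t)$; your conclusion is right, only the justification slipped.
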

\begin{proof}
Let $q$ be the process that changes $LOCK\_STATUS$ to $(1, k, spin_{p, k, sp})$.  There are two cases:
If $q=p$: From the code, $p$ changes $LOCK\_STATUS$ in \emph{Promote}.  Thus, from the code, time $t$ is after $p$'s FAA in \emph{Try}.
Moreover, time $t$ cannot occur in $p$'s invocation of \emph{Promote}($\perp$) in the \emph{Exit} procedure. In this invocation,
$p$ can update $LOCK\_STATUS$ to $(1,j,*)$ only if the $j$-th bit of   $ACTIVE$ is set. From the code, however, the $k$-th bit of ACTIVE is clear at this point.
If $q \ne p$: Let $t_1$ be the time when $p$ executes the FAA in \emph{Try}, $t_2$ be the time when $q$ reads $ACTIVE$ for the last time before changing $LOCK\_STATUS$, $t_3$ be the time when $q$ changes $LOCK\_STATUS$.  We first show $t_1 < t_3$.  Since $q$
changes $LOCK\_STATUS$ to $(1,k,spin_{p, k, sp})$ then $q$ observes the $k$-th bit of $ACTIVE$ set. Thus, $t_1 < t_2 < t_3$.  Next,
if $p$ never completes the first \emph{Promote} in \emph{Exit} in $sp$, we are done.  Otherwise, let $t_4$ be the time when $p$ executes the FAA in \emph{Exit} and $t_5$ be the time when $p$ finishes executing \emph{Promote}(k) in \emph{Exit}.  We show $t_3 < t_5$ by contradiction.
We have $t_2 < t_4$, since $q$ observes the $k$-th bit of $ACTIVE$ set.  Thus, $t_1 < t_2 < t_4 < t_5 < t_3$. Since $q$'s successful CAS of $LOCK\_STATUS$ succeeds, and $q$ reads $LOCK\_STATUS$ before reading $ACTIVE$, we have that $LOCK\_STATUS=(0, *, *)$ throughout $[t_2, t_3]$. But $p$ executed \emph{Promote}($k$) in $[t_4, t_5] \subset [t_2, t_3]$, and so $p$'s \emph{Promote}($k$) observes $LOCK\_STATUS=(0, *, *)$ and so $p$ should successfully CAS $LOCK\_STATUS$ to $(1, k, \&spin_{p, k, sp})$ by $t_5$, which is a contradiction.
\end{proof}

\StatusChangesFromOneToZero*
\begin{proof}
Assume there are finitely many crash steps in the execution. If $LOCK\_STATUS$ changes to $(1, k, \&spin_{p, k, sp})$ at time $t$, then by~\cref{CS_STATUS 1 p yet to finish exit}, process $p$ is in super-passage $sp$ with port $k$ and has not completed the first \emph{Promote} in \emph{Exit} in $sp$. By~\cref{GO[k] eventually true}, after $t$ either $spin_{p, k, sp}$ eventually becomes $TRUE$ or $p$ aborts.
In either case, $p$ eventually completes \emph{Exit}. As mentioned, $p$ does not complete \emph{Promote}($k$) in \emph{Exit}
before time $t$. Thus, $p$ changes $LOCK\_STATUS$ to $(0, k, \&spin_{p, k, sp})$ after time $t$, since that event is executed after
\emph{Promote}($k$) returns.
\end{proof}

\begin{restatable}{claim}{StatusChangedToZeroChangedToOne}\label{CS_STATUS changed to 0 then changes back}
If at time $t$, $ACTIVE \ne 0$ and $LOCK\_STATUS$ changes to $(0, *, *)$, then $LOCK\_STATUS$ eventually changes to $(1, *, *)$, or there are infinitely many crashes in the execution.
\end{restatable}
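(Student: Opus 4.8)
The plan is to argue by contradiction. Assume the execution has only finitely many crash steps, and suppose $LOCK\_STATUS$ never changes to $(1,*,*)$ at any time after $t$; I will derive a contradiction. Since the only instruction that can set the $taken$ bit to $1$ is the CAS inside $Promote$, this supposition forces $LOCK\_STATUS$ to stay of the form $(0,*,*)$ throughout $[t,\infty)$. From $ACTIVE\neq 0$ at time $t$ I would extract a process $p$ that is in a super-passage with some port $k$ whose bit is set in $ACTIVE$ at time $t$: such a $p$ exists because, by~\cref{variables init}, the $k$-th bit of $ACTIVE$ equals $0$ when a super-passage with port $k$ starts, while in the code this bit is set only in $Try$ and cleared only in the $Exit$ procedure. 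Moreover, \cref{CS_STATUS in CS} shows that $p$ is not in the CS at time $t$, and since $LOCK\_STATUS$ is never $(1,*,*)$ afterward, $p$ is not in the CS at any later time either. The plan is then to exhibit, in each case below, a $Promote$ execution that begins strictly after $t$ and to which~\cref{point in promote} applies, yielding a time $t_0>t$ with $LOCK\_STATUS=(1,*,*)$ --- the desired contradiction.

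I would split on whether $p$ aborts during this super-passage. If $p$ aborts, then $p$ invokes $Exit$ as an abort subroutine and, in particular, executes $Promote(k)$ (line~\ref{rme:ExitPromote1}); since crashes are finite and $Exit$ is wait-free, $p$ eventually completes such a $Promote(k)$ without crashing, over an interval $[t_1,t_2]$, and $t_1>t$ because inside $Exit$ $p$ first clears bit $k$ (lines~\ref{rme:ExitClear}--\ref{rme:ExitClearEnd}), which --- bit $k$ being still set at $t$ --- happens after $t$, before reaching that $Promote(k)$. The second branch of~\cref{point in promote} then gives $t_0\in[t_1,t_2]\subseteq(t,\infty)$ with $LOCK\_STATUS=(1,*,*)$. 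If instead $p$ never aborts, then $p$ never reaches $Exit$ after $t$ (it is never in the CS after $t$, and a short argument rules out $p$ being in the CS just before $t$, since that transition of $LOCK\_STATUS$ to $(0,*,*)$ would have to be $p$'s own release, which $p$ performs only after clearing bit $k$), so $p$ never clears bit $k$ and $ACTIVE\neq 0$ holds throughout $[t,\infty)$. Now the process $o$ that changed $LOCK\_STATUS$ to $(0,*,*)$ at time $t$ did so from within its own $Exit$ code (lines~\ref{rme:ExitCheckOwn}--\ref{rme:ExitCheckOwnEnd}); here $o\neq p$, since $o$ has already cleared its own $ACTIVE$ bit by time $t$; and by wait-free exit and finiteness of crashes, $o$ completes its $Exit$, executing a $Promote$ on the way --- at line~\ref{rme:ExitPromote2}, or at line~\ref{rme:ExitPromote1} if it crashes after the release and re-runs $Exit$ --- and thus completing some crash-free $Promote$ over an interval $[t_1,t_2]$ with $t_1>t$. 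Since $ACTIVE\neq 0$ throughout $[t_1,t_2]$, the first branch of~\cref{point in promote} again yields $t_0\in(t,\infty)$ with $LOCK\_STATUS=(1,*,*)$. In both cases the supposition is contradicted, so $LOCK\_STATUS$ must change to $(1,*,*)$ after $t$.

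The step I expect to be most delicate is the bookkeeping about how the $k$-th bit of $ACTIVE$ is set and cleared across crashes and across the normal versus abort control flow: establishing that the $Promote$ execution I point to really begins after $t$; that in the non-aborting case bit $k$ stays set for the entire duration of $o$'s $Promote$ (so that the first branch of~\cref{point in promote} is applicable); and that $o\neq p$ and $o$ must run $Promote$ before completing $Exit$. Making these precise requires auxiliary facts such as: a process recovering into $Try$ with its bit already set does not re-flip it; a process recovering into $Exit$ re-runs the $Promote$ calls it had not completed; and a process that has been in the CS clears its bit strictly before it releases the lock. A minor but necessary point is confirming that while $LOCK\_STATUS=(0,*,*)$ no instruction rewrites it to a $(1,*,*)$ value, so that the contradiction hypothesis genuinely pins $LOCK\_STATUS$ down throughout $[t,\infty)$.
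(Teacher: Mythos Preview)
Your argument is essentially correct but takes a more involved route than the paper's. The paper never fixes a ``witness'' process whose bit is set at $t$; instead it reasons directly about the releaser---the process that changed $LOCK\_STATUS$ to $(0,*,*)$ at time $t$ (your $o$). That process still has to execute $Promote(\perp)$ at line~\ref{rme:ExitPromote2}; since crashes are finite, it eventually completes such a call at some time $t'>t$. The paper then makes a single case split on whether $ACTIVE$ stays nonzero throughout $[t,t']$: if so, the first branch of \cref{point in promote} applied to the releaser's $Promote(\perp)$ finishes the proof; if not, whichever process $q$ drove $ACTIVE$ to $0$ did so via the FAA in \emph{Exit}, and $q$'s ensuing $Promote(k_q)$ (again eventually completed) finishes the proof via the second branch of \cref{point in promote}. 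This decomposition avoids any discussion of aborts or of where the witness process sits in its super-passage.

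Your approach---fixing a process $p$ with bit $k$ set at $t$ and splitting on whether $p$ aborts---also goes through, but it carries more bookkeeping and leaves one small gap you did not close. In the non-aborting branch you conclude ``$p$ never reaches $Exit$ after $t$'' from the facts that $p$ is not in the CS after $t$ and not in the CS \emph{just before} $t$; this does not yet exclude $p$ already being in (early) \emph{Exit} at time $t$, having been in the CS earlier in the super-passage. That sub-case is in fact impossible---if $p$ had been in the CS, then by \cref{CS_STATUS only changed by p} $LOCK\_STATUS$ would remain $(1,k,\&spin_{p,k,sp})$ until $p$'s own release CAS, which lies strictly after $p$'s FAA clearing bit $k$, contradicting bit $k$ being set at $t$---or, alternatively, it can simply be merged with your aborting branch, since in that sub-case $p$ still executes $Promote(k)$ after $t$. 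Either fix is short; the paper's decomposition simply sidesteps this bookkeeping.
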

\begin{proof}
Assume there are finitely many crash steps in the execution. Let $p$ be the process that changes $LOCK\_STATUS$ to $(0, *, *)$ at time $t$.
Then $p$ is executing the \emph{Exit} procedure and still needs to execute \emph{Promote}($\perp$). Since the execution is well-formed, $p$ eventually completes an execution of \emph{Promote}($\perp$).  Let $t'$ be the time where $p$ completes its execution of \emph{Promote}($\perp$). If $ACTIVE \ne 0$ throughout $[t, t']$, then by~\cref{point in promote}, $LOCK\_STATUS$ changes to $(1, *, *)$ in $[t, t']$. Otherwise,
if $ACTIVE$ changes to $0$ during $[t, t']$, then there is some time $t_0 \in [t, t']$ at which some process $q$ executes $FAA$ in \emph{Exit}.
Since there are finitely many crashes, $q$ eventually completes an execution of \emph{Promote}($k$) at some time $t_0' > t_0 > t$. Thus, by~\cref{point in promote}, $LOCK\_STATUS$ eventually changes to $(1, *, *)$ after $t$.
\end{proof}

\begin{restatable}{claim}{LockChangesStatusChangesToOne}\label{LOCK changes then CS_STATUS changes to 1}
If at time $t$, $LOCK\_STATUS=(0, *, *)$ and $ACTIVE$ changes from 0 to a non-zero value, then $LOCK\_STATUS$ eventually changes to $(1, *, *)$, or there are infinitely many crashes in the execution.
\end{restatable}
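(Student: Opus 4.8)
The plan is to reduce this claim to \cref{CS_STATUS 1 before waiting room}, which already handles the case where a process that sets its bit in $ACTIVE$ goes on to reach the waiting room. First I would identify which process is responsible for the change to $ACTIVE$ at time $t$. The only memory operations that modify $ACTIVE$ are the bit-setting FAA in \emph{Try} and the bit-clearing FAA in \emph{Exit}; the latter only removes bits and therefore requires $ACTIVE \ne 0$ immediately before it, so it cannot change $ACTIVE$ from $0$ to a non-zero value. Hence the step at time $t$ is a bit-setting FAA executed inside \emph{Try} by some process $p$ using some port $k$. Since that FAA is executed only when $p$'s bit is currently clear, bit $k$ of $ACTIVE$ equals $1$ immediately after time $t$, and $p$ is in a \emph{Try} section of a super-passage with port $k$.

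Next I would argue, assuming (by the stated alternative) that the execution has only finitely many crash steps, that $p$ eventually reaches the waiting room in this super-passage. After executing the FAA at time $t$, the code carries $p$ through a bounded number of steps (the \emph{Promote} call) and then into the \emph{while} loop of \emph{Try}, which is by definition the waiting room; note that the abort test lies \emph{inside} that loop, so a pending abort signal does not keep $p$ from reaching it. If $p$ crashes before reaching the loop, then since $STATUS[k]=TRY$, \emph{Recover} steers $p$ back to \emph{Try}, where it finds bit $k$ already set, skips the FAA, re-executes \emph{Promote}, and again heads into the loop. Each such attempt takes a bounded number of steps, and by the fairness assumption $p$ keeps taking steps, so with only finitely many crashes one of these attempts completes and $p$ reaches the waiting room at some time $t' > t$. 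Moreover bit $k$ of $ACTIVE$ remains set throughout $[t,t']$, since only $p$ clears it and only from within \emph{Exit}, which $p$ has not yet invoked in this super-passage.

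Finally I would invoke \cref{CS_STATUS 1 before waiting room} with this $p$, the bit-set at time $t$, and the waiting-room arrival at $t'$: it yields a time $t_0 \in [t,t']$ at which $LOCK\_STATUS=(1,*,*)$. By hypothesis $LOCK\_STATUS=(0,*,*)$ at time $t$, so $t_0 > t$, and therefore $LOCK\_STATUS$ changes to $(1,*,*)$ at some point after $t$, which is exactly what the claim asserts.

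The main obstacle is the second step's bookkeeping: being careful about why $p$ is guaranteed to reach the waiting room, in particular handling a crash that occurs after the FAA but before the \emph{while} loop, and ruling out that a signalled abort (or $p$ finishing the super-passage and starting a fresh one) lets $p$ bypass the loop entirely. Everything else is a direct application of an already-proven claim, so no new proof technique is needed.
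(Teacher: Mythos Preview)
Your argument is correct and tracks the paper's proof closely. The paper, like you, first observes that the step at time $t$ must be a bit-setting FAA by some process $p$ in \emph{Try}, and then argues that (with finitely many crashes) $p$ eventually completes an execution of \emph{Promote}. The only cosmetic difference is that the paper applies \cref{point in promote} directly to that \emph{Promote} execution, whereas you push one step further to the waiting room and invoke \cref{CS_STATUS 1 before waiting room}, which is itself just a wrapper around \cref{point in promote}; the underlying reasoning and the ``$ACTIVE\neq 0$ because only $p$ can clear bit $k$'' invariant are identical.
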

\begin{proof}   
Assume there are finitely many crash steps in the execution.  Since at time $t$, $ACTIVE$ changes from $0$ to a non-zero value, then some process $p$ is executing \emph{Try} in its super-passage $sp$ with port $k$, and has not invoked \emph{Promote} (from \emph{Try}) at $t$. Since there are finitely many crashes, $p$ eventually completes such an execution of \emph{Promote} at some time $t'$.  If $LOCK\_STATUS$ changes to $(1,*,*)$ by $t'$, we are done.  Otherwise, we have that $p$ completes an execution of \emph{Promote} in $[t,t']$ and $ACTIVE \neq 0$ throughout this interval.  By~\cref{point in promote}, $LOCK\_STATUS$ changes to $(1, *, *)$ at some time $t_0 \in [t,t']$.
\end{proof}

\StatusChangesFromZeroToOne*
\begin{proof}
Assume there are finitely many crash steps in the execution. Consider the earlier time $t$ when $ACTIVE$ changed to a non-zero value.
If $LOCK\_STATUS=(0,*,*)$ at $t$, then by~\cref{LOCK changes then CS_STATUS changes to 1}, $LOCK\_STATUS$ will change to $(1, *, *)$.
Otherwise, it must be the case that $LOCK\_STATUS$ changes to $(0,*,*)$ at time $t' > t$, while $ACTIVE \neq 0$. Thus, by~\cref{CS_STATUS changed to 0 then changes back}, $LOCK\_STATUS$ will change to $(1, *, *)$.
\end{proof}

\wPortMain*
\begin{proof}
Mutual exclusion follows from~\cref{w-port ME}.  Starvation-freedom follows from~\cref{w-port SF}.
From the code, $p$ returns from \emph{Try} after a finite number of its own non-crash steps once the abort step is signalled,
and so bounded abort is satisfied.
Since $M$ is well-behaved and \emph{Recover} is wait-free, then CS re-entry and wait-free CS re-entry are satisfied.
Since \emph{Exit} is wait-free, wait-free exit is satisfied.  Super-passage wait-free exit is satisfied since \emph{Exit} and \emph{Recover}
are wait-free. A crash-free passage clearly incurs $O(1)$ RMRs (assuming, in the DSM model, that the spin variable is
allocated from local memory). Thus, the algorithm has $O(1)$ passage complexity and $O(1+F)$ super-passage complexity in both CC and DSM models.
The algorithm only uses a constant number of arrays of size $D$ and $O(D^2)$ statically pre-assigned boolean variables, so its space complexity is $O(D^2)$.
\end{proof} 
\section {Proofs Omitted From Section \ref{chap:tree}} \label{appendix:tree proof}

\section {Proofs Omitted From Section \ref{chap:transform}} \label{appendix:transform proof}

\fi

\end{document}